\documentclass[11pt]{article}

\usepackage{graphicx}
\usepackage{latexsym}
\usepackage{amssymb}
\usepackage{amsmath}
\usepackage{amsthm}
\usepackage{bm}
\usepackage{fullpage}
\usepackage{mathtools,color}

\usepackage{nth}

\usepackage{forloop}
\usepackage{paralist}
\usepackage{nicefrac}
\usepackage{rotating}
\usepackage{array,multirow}
\usepackage{threeparttable}
\usepackage{hyperref}

\newtheorem{thm}{Theorem}
\newtheorem{cor}[thm]{Corollary}

\newtheorem{lemma}[thm]{Lemma}
\newtheorem*{defn*}{Definition}

\newtheorem{claim}[thm]{Claim}
\newtheorem{obs}[thm]{Observation}
\newtheorem{rem}[thm]{Remark}


\newcommand\E{\mathbb{E}}
\newcommand\R{\mathbb{R}}
\newcommand\pr{\mathrm{Pr}}
\newcommand\disc{\mathrm{disc}}

\date{\vspace{-5ex}}

\newcounter{note}[section]

\makeatletter
\newtheorem*{rep@theorem}{\rep@title}
\newcommand{\newreptheorem}[2]{%
\newenvironment{rep#1}[1]{%
 \def\rep@title{#2 \ref{##1}}%
 \begin{rep@theorem}}%
 {\end{rep@theorem}}}
\makeatother

\newreptheorem{theorem}{Theorem}

\begin{document}
\setcounter{page}{0}
\title{Algorithmic Discrepancy Beyond Partial Coloring}

\author{Nikhil Bansal
\thanks{Department of Mathematics and Computer Science, Eindhoven University of Technology, Netherlands.  
Email:
\href{mailto:n.bansal@tue.nl}{n.bansal@tue.nl}.
Supported by a NWO Vidi grant 639.022.211 and an ERC consolidator grant 617951.}
\and
Shashwat Garg\thanks{Department of Mathematics and Computer Science, Eindhoven University of Technology, Netherlands.  
Email:
\href{mailto:s.garg@tue.nl}{s.garg@tue.nl}.
Supported by the Netherlands Organisation for Scientific Research (NWO) under project no.~022.005.025.
}
}

\maketitle

\begin{abstract}
The partial coloring method is one of the most powerful and widely used method in combinatorial discrepancy problems.
However, in many cases it leads to sub-optimal bounds as the partial coloring step must be iterated a logarithmic number of times, and the errors can add up in an adversarial way.

We give a new and general algorithmic framework that overcomes the limitations of the partial coloring method and can be applied in a black-box manner to various problems. 
Using this framework, we give new improved bounds and algorithms for several classic problems in discrepancy. In particular, for Tusnady's problem, we give an improved $O(\log^2 n)$ bound
for discrepancy of axis-parallel rectangles and more generally an $O_d(\log^dn)$ bound for $d$-dimensional boxes in $\mathbb{R}^d$. Previously, even non-constructively, the best bounds were $O(\log^{2.5} n)$ and $O_d(\log^{d+0.5}n)$ respectively.
Similarly, for the Steinitz problem we give the first algorithm that matches the best known non-constructive bounds due to 
Banaszczyk \cite{Bana12} in the $\ell_\infty$ case, and improves the previous algorithmic bounds substantially in the $\ell_2$ case.
Our framework is based upon a substantial generalization of the techniques developed recently in the context of the Koml\'{o}s  discrepancy problem \cite{BDG16}.


\end{abstract}

\clearpage

\section{Introduction}

Let $(V,\mathcal{S})$ be a finite set system, with $V=\{1,\ldots,n\}$ and $\mathcal{S} = \{S_1,\ldots,S_m\}$ a collection of subsets of $V$. For a two-coloring $\chi: V  \rightarrow \{-1,1\}$, the discrepancy of $\chi$ for a set $S$  is defined as $ \chi(S) =  |\sum_{j\in S} \chi(j) |$ and measures the imbalance from an even-split for $S$.
The discrepancy of the system $(V,\mathcal{S})$ is defined as 
\[ \disc(\mathcal{S}) = \min_{\chi:V \rightarrow \{-1,1\}} \max_{S \in \mathcal{S}} \chi(S). \]
That is, it is the minimum imbalance for all sets in $\mathcal{S}$, over all possible two-colorings $\chi$.
More generally for any matrix $A$, its discrepancy is defined as $\disc(A) = \min_{x \in \{-1,1\}^n} \|Ax\|_\infty$.

Discrepancy is a widely studied topic and has applications to many areas in mathematics and computer science.
In particular in computer science, it arises naturally in computational geometry, data structure lower bounds, rounding in approximation algorithms, combinatorial optimization, communication complexity and pseudorandomness. 
For much more on these connections we refer the reader to the books \cite{Chazelle,Mat09,Panorama}. 

\paragraph{Partial Coloring Method:} One of the most important and widely used technique in discrepancy is the partial coloring method developed in the early 80's  by Beck, and its refinement by Spencer to the entropy method \cite{Beck81b, Spencer85}. An essentially similar approach, but based on ideas from convex geometry was developed independently by Gluskin \cite{Glu89}. 
Besides being powerful, an important reason for its success is that it can be applied easily to many problems in a black-box manner and for most problems in discrepancy the best known bounds are achieved using this method.
While these original arguments were based on the pigeonhole principle and were non-algorithmic,
in recent years several new algorithmic versions of the partial coloring method have been developed \cite{B10,LM12,Ro14,HSS14,ES14}.
In particular, all known applications of partial-coloring \cite{Spencer85,Mat09} can now be made algorithmic. These ideas have also led to several other new results in approximation algorithms \cite{R13,BCKL14,BN15,NTZ13}.

In many applications however, the partial coloring method gives sub-optimal bounds. The problem is that
this method finds a low discrepancy coloring while coloring only a constant fraction of the elements, and it
must be iterated logarithmically many times to get a full coloring.
The iterations are unrelated to each other and the error can add up adversarially over the iterations.
A well known example that illustrates this issue is the problem of understanding the discrepancy of 
sparse set systems where each element lies in at most $t$ sets.
Here each partial coloring step incurs $O(t^{1/2})$ discrepancy and the overall discrepancy becomes $O(t^{1/2} \log n)$. On the other hand, the celebrated Beck-Fiala conjecture is that the (overall) discrepancy must be $O(t^{1/2})$.
Similar gaps exist for several other classic problems.
For example, for Tusnady's problem about discrepancy of points and axis aligned rectangles (details in Section \ref{sec:applications}), partial coloring gives an upper bound of $O(\log^{2.5} n)$ while the best known lower bound is $\Omega(\log n)$. Similarly, for the well known Steinitz problem and its several variants (details in Section \ref{sec:applications}), partial coloring gives an $O(d^{1/2} \log n)$ bound while the conjectured answer is $O(d^{1/2})$.

\paragraph{Banaszczyk's approach:} In a breakthrough result, Banaszczyk \cite{B97} used deep techniques from convex geometry to bypass the partial coloring barrier and gave an $O(t^{1/2} \log^{1/2} n)$ discrepancy bound for the Beck-Fiala problem (and the more general Koml\'{o}s problem). In particular, he gave a general result that given any collection of vectors of $\ell_2$ norm at most one and any convex body $K$ with Gaussian volume $1/2$, there exists a $\pm 1$ signed combination of the vectors that lies in $cK$ for some constant $c$. 
In recent years, several remarkable applications of this result have been found.
Banaszczyk \cite{Bana12} used it to obtain improved bounds for the Steinitz problem,  
\cite{MNT14} used it to relate the $\gamma_2$-norm and hereditary discrepancy and to get an approximation algorithm for hereditary discrepancy, and \cite{Lar14} used it to give space-query tradeoffs for dynamic data structures. While we do not know how to make a formal connection to the partial coloring method, roughly speaking, Banaszczyk's approach allows the errors during each partial coloring to accumulate in an $\ell_2$ manner, instead of in an $\ell_1$ manner.


However, Banaszczyk's original proof \cite{B97} is rather deep and mysterious and does not give any efficient algorithm for finding a good coloring. Finding an algorithmic version of it is a major current challenge \cite{R-convex14, Sashothesis, DGLN16}. Recently, the authors together with Daniel Dadush \cite{BDG16} gave an efficient algorithm for the Koml\'{o}s problem matching Banaszczyk's bound.
The key idea here was to use an SDP with several additional constraints, compared to the earlier SDP approach of \cite{B10}, so that the associated random walk satisfies some extra properties. Then, a more sophisticated martingale analysis based on Freedman's inequality was used to bound the deviation of the discrepancy from the expected value in the random walk. 

Despite the progress, there are several limitations of this result. First, the SDP was specifically tailored to the Koml\'{o}s problem and rather adhoc. Second, the analysis was quite technical and specific to the Koml\'{o}s problem. More importantly, it does not give any general black-box approach like those given by the partial coloring method and its algorithmic variant due to Lovett and Meka \cite{LM12}, or Banaszczyk's original result, which can be applied directly to a problem without any understanding of the underlying algorithm or its proof.

%
%
 %



\section{Our Results}
We give a new general framework that overcomes the limitations of the partial coloring method, and gives improved algorithmic bounds for several problems in discrepancy. Moreover, it can be applied in a black-box manner to any discrepancy problem without the need to know any inner workings of the algorithm or its proof of correctness. Below we give an informal description of the framework and the main result, and defer the formal version to Section \ref{sec:framework} until some necessary notation is developed.

\subsection{Framework}

The framework is best viewed as a game between a player and the algorithm. Let $B$ be an $m\times n$ matrix with entries $\{a_{ji}\}$.
At each time step $t$, the player can specify some subset of uncolored elements $A(t)$ to be colored. 
Moreover, the player can specify up to $\delta |A(t)|$ linear constraints (where $\delta <1$) of the type 
\[\sum_{i \in A(t)} w_{k}(i) \Delta x_i(t)=0 \qquad \text{ for } k=1,\ldots,\delta |A(t)|\]
where $\Delta x_i(t)$ denotes the color update of element $i$ at time $t$. 
 Let us call these constraints $Z(t)$. The algorithm then updates the colors subject to the constraints $Z(t)$.
This game continues repeatedly until all elements are colored $\pm 1$.

Fix some row $j$ of $B$ and any subset $S$ of elements. Based on what $Z(t)$'s are picked during the process and how they relate to $S$ and $j$, determines whether an element $i \in S$ is ``corrupted" or not (with respect to $j$ and $S$). We defer the description of how an element gets corrupted to Section~\ref{sec:framework}. At the end of the process, the final coloring satisfies the following guarantee.



\begin{thm}\label{thm:vecgenunified}(informal version)
There is a constant $c>0$ such that given an $m\times n$ matrix $B$ with entries $\{a_{ji}\}$ satisfying $|a_{ji}|\le 1$, then for any row $j$, any subset $S \subseteq [n]$ of elements and $\lambda\ge 0$, the coloring $\chi \in \{-1,1\}^n$ returned by the algorithm satisfies,
\[ \pr \left[ \left|\sum_{i \in S} a_{ji} \chi(i)\right| \geq c \lambda \left( (\sum_{i \in C_{j,S}} a_{ji}^2 )^{1/2} + \lambda \right)  \right] \leq 2 \exp(-\lambda^2/2) \] 
where $C_{j,S}$ denotes the set of corrupted elements with respect to $j$ and $S$.

\end{thm}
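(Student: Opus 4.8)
The plan is to track, for the fixed row $j$ and set $S$, the partial sum
$D(t) := \sum_{i\in S} a_{ji}\, x_i(t)$
along the algorithm's fractional colorings $x(0),x(1),\dots,x(T)=\chi$, and to show that $D$ is a martingale whose total conditional variance is controlled by $Q := \sum_{i\in C_{j,S}} a_{ji}^2$; the theorem then follows from Freedman's inequality. Since the update $\Delta x(t)=x(t)-x(t-1)$ is supported on the active set $A(t)$, lies in the subspace cut out by the constraints $Z(t)$, has conditional mean zero, and has per-coordinate variance at most a small multiple $\gamma^2$ of that of an unconstrained step (constraints only shrink variances), $D(t)$ is a martingale with increments $\Delta D(t)=\sum_{i\in S\cap A(t)} a_{ji}\,\Delta x_i(t)$ and, for any $w$ supported on $A(t)$, $\E[\langle w,\Delta x(t)\rangle^2\mid\mathcal F_{t-1}]\le \gamma^2\|w\|^2$.

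The core step is a dichotomy on time steps. Call a step $t$ \emph{good for $(j,S)$} if $Z(t)$ implies the constraint $\sum_{i\in S\cap A(t)} a_{ji}\,\Delta x_i(t)=0$, and \emph{bad} otherwise. On a good step $\Delta D(t)=0$ exactly, so such steps contribute nothing to $D(T)$ and nothing to the quadratic variation. On a bad step, by the definition of corruption in Section~\ref{sec:framework}, every element of $S\cap A(t)$ enters $C_{j,S}$, and
\[
\E\big[\Delta D(t)^2 \mid \mathcal F_{t-1}\big]=\E\big[\langle a_j|_{S\cap A(t)},\,\Delta x(t)\rangle^2\mid \mathcal F_{t-1}\big]\le \gamma^2\!\!\sum_{i\in S\cap A(t)}\!\! a_{ji}^2 .
\]
Summing over all bad steps and using that each coordinate is active for at most $O(1/\gamma^2)$ steps (its fractional value performs an absorbing $\pm\Theta(\gamma)$ walk on $[-1,1]$, and the walk is run for a horizon of this order), the total quadratic variation of $D$ is at most $c'\sum_{i\in C_{j,S}} a_{ji}^2 = c'Q$.

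Finally I would invoke Freedman's martingale inequality: for a martingale with increments bounded by $R$ and quadratic variation at most $\sigma^2$, $\pr[\,|M_T|\ge u\,]\le 2\exp\!\big(-u^2/(2\sigma^2+2Ru/3)\big)$. Plugging in $\sigma^2=c'Q$ and $R=O(1)$ (the per-step change of $D$ is a constant, by the design of the walk, or after conditioning on the high-probability event that no step is atypically large, which only costs another $e^{-\lambda^2/2}$), and solving the quadratic $u^2\ge \lambda^2\big(2c'Q+\tfrac{2}{3}Ru\big)$ for $u$, gives exactly $u=O\big(\lambda(\sqrt Q+\lambda)\big)$ with failure probability $O(e^{-\lambda^2/2})$ — the additive $\lambda^2$ being precisely the contribution of the bounded-increment ($Ru$) term. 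Since $D(T)=\sum_{i\in S} a_{ji}\chi(i)$, this is the claimed bound (after adjusting constants so the failure probability reads $2\exp(-\lambda^2/2)$).

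The main obstacle is the quadratic-variation bound, i.e.\ faithfully translating the abstract corruption rule of Section~\ref{sec:framework} into a statement about the walk: one must check that every step with a nonzero increment really does inject all of $S\cap A(t)$ into $C_{j,S}$, that a once-corrupted element is never ``un-accounted'', and that the active lifetime of each coordinate is $O(1/\gamma^2)$ uniformly enough to be absorbed into the $e^{-\lambda^2/2}$ slack rather than costing a stray $\log n$ factor. Handling the endgame (coordinates not yet rounded to $\pm1$ when the horizon is reached, and the bounded increments needed for Freedman) is the other delicate point; the rest is the standard martingale computation, and this is exactly the generalization of the Freedman-based analysis of \cite{BDG16}.
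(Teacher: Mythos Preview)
Your overall strategy—track the signed discrepancy as a martingale, bound its quadratic variation by $Q=\sum_{i\in C_{j,S}} a_{ji}^2$, and apply Freedman—has the right shape, but the quadratic-variation bound has a genuine gap. You claim each coordinate is active for at most $O(1/\gamma^2)$ steps, so that summing the per-step variance $\gamma^2\sum_{i\in C_t} a_{ji}^2$ over time gives $O(Q)$. This is not justified: the update of coordinate $i$ at step $t$ has variance $\gamma^2\|u_i^t\|_2^2$, and nothing prevents $\|u_i^t\|$ from being much smaller than $1$ for a fixed $i$ over many steps (property~(iii) of Theorem~\ref{thm:univec} only controls the \emph{sum} $\sum_i\|u_i^t\|_2^2$, not individual norms). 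A corrupted element can therefore stay alive far longer than $O(1/\gamma^2)$ steps, and your summed variance need not be $O(Q)$; this is exactly the ``stray $\log n$'' you flag at the end, and it can in fact be worse than a log.

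The paper sidesteps this by not bounding the total quadratic variation at all. It introduces an energy $E(t)=\sum_{i\in C_{j,S}(t)} a_{ji}^2[x_t(i)^2-x_{t_i}(i)^2]$, which is deterministically at most $Q$, and tracks $Y_t=\disc(t)-\eta E(t)$. The key computation is that both the (negative) conditional drift and the conditional variance of $Y_t-Y_{t-1}$ are constant multiples of the \emph{same} quantity $\gamma^2\sum_{i\in C_{j,S}(t)} a_{ji}^2\|u_i^t\|_2^2$, so one applies a drift-based Freedman variant (if $\E_{t-1}[\Delta Y]\le -\alpha\,\E_{t-1}[(\Delta Y)^2]$ and $|\Delta Y|\le 1$ then $\Pr[Y_T\ge\lambda]\le e^{-\alpha\lambda}$) rather than the standard form, and then optimizes over $\eta$; the two regimes $\lambda\lesssim\sqrt{Q}$ and $\lambda\gtrsim\sqrt{Q}$ take different $\eta$ and this is where the $+\lambda$ term actually comes from, not from the $Ru$ term in standard Freedman. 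A smaller point: your per-step bound $\E[\langle w,\Delta x(t)\rangle^2\mid\mathcal F_{t-1}]\le\gamma^2\|w\|^2$ is correct up to a $1/\beta$, but the reason is property~(ii) of Theorem~\ref{thm:univec} (equivalently $X\preceq(1/\beta)\,\mathrm{diag}(X)$), not that ``constraints only shrink variances''; generic constrained updates do not enjoy this, and the Universal Vector Coloring is precisely the ingredient engineered to provide it.
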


Observe that the discrepancy behaves as a sub-gaussian  with standard deviation at most the $\ell_2$ norm of the corrupted elements (provided $\lambda \leq  (\sum_{i \in C_{j,S}} a_{ji}^2 )^{1/2}$, later we will show that this restriction is necessary). Moreover, the bound holds for the full coloring returned by the algorithm, as opposed to a partial coloring.

Also note that this discrepancy bound holds for any subset $S$, unlike other previous approaches that usually only give low discrepancy for specific sets $S$ corresponding to rows of 
$B$. In particular, due to arbitrary correlations between the colors of various elements, all the previous approaches for discrepancy that we know of are incapable of giving such a guarantee.  
In contrast, our algorithm returns a probability distribution over colorings in $\{-1,1\}^n$ which gives an almost sub-gaussian tail bound on the discrepancy for every row $j$ and subset $S\subseteq [n]$ simultaneously.

%
%
%

\subsection{Applications}
\label{sec:applications}
We apply this framework to obtain several new algorithmic results for various problems in combinatorial discrepancy. In fact, all these results follow quite easily
by choosing $A(t)$ and $Z(t)$ in a natural way, as determined by the structure of the problem at hand.

\paragraph{Tusnady's Problem:} 
Given a set $P$ of $n$ points in $\mathbb{R}^d$, let $disc(P,\mathcal{R}^d)$ denote the discrepancy of the set system with $P$ as the elements and sets consisting of all axis-parallel boxes in $\mathbb{R}^d$.
Understanding the discrepancy of point sets with respect to axis-parallel boxes has been studied in various different settings since the origins of discrepancy theory in the 1930's and has a fascinating history, see e.g.~\cite{Chazelle,Mat99,Panorama}. Determining the correct order of magnitude for 
\[disc(n,\mathcal{R}^d) =\sup_{|P|=n}disc(P,\mathcal{R}^d)\]
 has received a lot of attention \cite{Beck81, Beck89, Boh90, Srin97, Lar14, Mat99, MNT14} and the best known bounds prior to this work were $\Omega_d(\log^{d-1}n)$ \cite{MNT14} and $O_d(\log^{d+1/2}n)$\cite{Lar14, Mat99}. 
Here we use $O_d(.)$ notation to hide factors depending only on $d$.
In particular, even for $d=2$, there was a relatively large gap of $\Omega(\log n)$ and $O(\log^{2.5} n)$ between the lower and upper bound. 
 
Using the framework above, we can show the following improved upper bound.
\begin{thm}
\label{thm:tusnady}
Given any set $P$ of $n$ points in $\mathbb{R}^d$, there is an efficient randomized algorithm to find a $\{-1, 1\}^n$ coloring of $P$ such that the discrepancy of all axis-parallel boxes is $O_d(\log^d n)$.
\end{thm}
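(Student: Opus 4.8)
The plan is to reduce Tusnady's problem to controlling a polynomial-size family of \emph{anchored boxes} on an integer grid, run the algorithm of Theorem~\ref{thm:vecgenunified} exactly once with a carefully chosen sequence of active sets and constraints built from the dyadic decomposition of the grid, and feed each anchored box to the theorem \emph{directly} as the set $S$ (taking $j$ to be the all-ones row), so that the sub-gaussian guarantee applies to $\chi(S)$ itself rather than to pieces that must be summed. For the reduction: by perturbation we may assume $P$ is in general position, and since the discrepancy of axis-parallel boxes depends only on the order types of the coordinates, after reparametrizing each axis we may take $P\subseteq\{1,\dots,n\}^d$ with distinct values in every coordinate. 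By inclusion--exclusion over the $2^d$ corners it suffices to bound $|\chi(S)|$ for every anchored box $S=\big(\{1,\dots,a_1\}\times\cdots\times\{1,\dots,a_d\}\big)\cap P$, of which there are at most $(n+1)^d$. Writing $L=\lceil\log_2 n\rceil$, each interval $\{1,\dots,a_k\}$ is a disjoint union of at most $L$ canonical dyadic intervals, one per scale, so each anchored box $S$ is a disjoint union of at most $(L+1)^d=O_d(\log^d n)$ \emph{canonical dyadic boxes} $D_{\vec\ell}(S)$ indexed by scale vectors $\vec\ell\in\{0,\dots,L\}^d$. I stress that one must \emph{not} sum the discrepancies of these pieces — that only reproduces the known $O_d(\log^{d+1/2}n)$ bound; the decomposition is used purely to design the constraints.

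Next I would specify the player. At time $t$ let $A(t)$ be the set of still-uncolored points, and let $Z(t)$ impose the balance constraint $\sum_{i\in A(t)\cap D}\Delta x_i(t)=0$ for every canonical dyadic box $D$ lying on a suitable \emph{frontier} — morally, the canonical boxes $D$ with $|A(t)\cap D|\ge\tau$ all of whose finer-scale children have fewer than $\tau$ uncolored points — where $\tau=\Theta_d(\log^{d-1}n/\delta)$. The reason for this particular value of $\tau$: the frontier at each scale vector is cut out by a one-dimensional threshold crossing, which should collapse the $d$-fold product of scales to essentially $d-1$ free scales, so the frontier contains only $O_d(\log^{d-1}n)\cdot|A(t)|/\tau$ boxes, which is at most $\delta|A(t)|$ and hence within the constraint budget of the framework. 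At the same time, because a box should be able to leave the frontier only by its uncolored count dropping below $\tau$ or by splitting into children that are themselves on the frontier, the functional $\mathbf 1_S$ of every anchored box $S$ remains in the span of $Z(t)$ — so no element of $S$ gets corrupted with respect to $(j,S)$ at time $t$ — until the last $O(\tau)$ uncolored points of each of its $\le(L+1)^d$ dyadic pieces are colored.

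Granting this, the bound follows quickly. For a fixed anchored box $S$ the corrupted set $C_{j,S}$ is contained in the union, over the $\le(L+1)^d$ scale vectors $\vec\ell$, of the last $O(\tau)$ points of $D_{\vec\ell}(S)$ to be colored, so $|C_{j,S}|=O_d(\tau\log^d n)=O_d(\log^{2d-1}n)$; since the entries of $B$ are $0/1$ this gives $\big(\sum_{i\in C_{j,S}}a_{ji}^2\big)^{1/2}=O_d(\log^{d-1/2}n)$. Taking $j$ to be the all-ones row and $\lambda=\Theta(\sqrt{d\log n})$ (large enough) in Theorem~\ref{thm:vecgenunified}, we get $|\chi(S)|=O_d(\log^d n)$ for a single $S$ except with probability $n^{-\Omega(d)}$; a union bound over the $\le(n+1)^d$ anchored boxes, together with the factor-$2^d$ from inclusion--exclusion, yields discrepancy $O_d(\log^d n)$ for all axis-parallel boxes with probability $1-o(1)$. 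Efficiency is inherited from the algorithm underlying Theorem~\ref{thm:vecgenunified}, since the frontier at each step is computable in polynomial time.

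The crux — and the step I expect to be hardest — is the design and analysis of the frontier. One must choose it so that simultaneously (i) its size stays $O_d(\log^{d-1}n)\cdot|A(t)|/\tau$, beating the naive bound $O_d(\log^d n)\cdot|A(t)|/\tau$ and thereby letting $\tau$ be a factor $\log n$ smaller; and (ii) for \emph{every} anchored box $S$ the functional $\mathbf 1_S$ never falls out of the span of $Z(t)$ before the $O(\tau)$-point tails of its pieces, so that the corruption count per box is $O_d(\log^{2d-1}n)$ and not $O_d(\log^{2d}n)$. The naive choice ``constrain every canonical box with at least $\tau$ uncolored points'' trivially satisfies (ii) but forces $\tau=\Theta_d(\log^d n)$ and only recovers the old $O_d(\log^{d+1/2}n)$ bound; the entire $\sqrt{\log n}$ gain comes from keeping the constraint set a logarithmic factor smaller while preserving the span condition — which hinges on the combinatorial claim that a canonical box can exit the frontier only by splitting into children that remain on it (care is needed here, since incomparable dyadic boxes need not be disjoint once $d\ge 2$). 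Everything else is bookkeeping, the union bound, and a black-box invocation of Theorem~\ref{thm:vecgenunified}.
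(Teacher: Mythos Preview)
Your high-level plan is exactly the paper's: choose $A(t)=N(t)$, put indicator vectors of a suitable family of canonical boxes into $Z(t)$, and apply Theorem~\ref{thm:vecgenunified} directly to each query box $S$ with $\lambda=\Theta(\sqrt{d\log n})$, aiming for $|C_{j,S}|=O_d(\log^{2d-1}n)$. The reduction to anchored boxes and the union bound are fine (the paper just works with the $n^{2d}$ distinct boxes, but either is OK).

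The gap is precisely the step you flag as the crux. Your constraint family is the \emph{product} dyadic grid, and you need the ``frontier'' to have size $O_d(\log^{d-1}n)\cdot|A(t)|/\tau$ rather than the naive $O_d(\log^{d}n)\cdot|A(t)|/\tau$. Your heuristic that ``a one-dimensional threshold crossing collapses one scale'' does not go through: at a fixed scale vector the frontier boxes are disjoint, but across the $L^d$ scale vectors there is no antichain structure in the product order that buys you a $\log n$ factor --- boxes at scales $(\ell_1,\ell_2)$ and $(\ell_1{+}1,\ell_2{-}1)$ can both be on the frontier and cover the same points. Simultaneously enforcing your span condition (ii) while thinning the frontier to achieve (i) is exactly where the product construction fails; you even note that incomparable dyadic boxes need not be disjoint for $d\ge 2$, and that is the obstruction.

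The paper sidesteps this entirely by \emph{not} using the product dyadic grid. It uses the nested canonical-box construction (Lemma~\ref{lem:canonical}): first form canonical intervals along $x_1$ of sizes $\ge \ell=\Theta_d(\log^{d-1}n)$, then \emph{inside each} of those form canonical intervals along $x_2$, and so on. Because the levels are nested rather than producted, a direct count shows there are at most $n/8$ canonical boxes in total, and any axis-parallel box decomposes into some of them plus a residual of only $O_d(\log^{2d-2}n)$ points. One then puts all $n/8$ indicators into $Z(t)$ (so $\delta\le 1/4$ automatically, no frontier needed), and rebuilds the whole construction on the current alive set each time $|N(t)|$ halves. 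Since $Z(t)$ changes only $O(\log n)$ times, the corrupted set for any box has size $O_d(\log^{2d-1}n)$, exactly the target you were aiming for. In short, the $\log n$ saving you are trying to extract from a frontier is already baked into the nested construction via the cutoff $\ell=\Theta_d(\log^{d-1}n)$; replacing your product-dyadic frontier by Lemma~\ref{lem:canonical} completes the proof with no further work.
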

Interestingly, nothing better than $O_d(\log^{d+1/2} n)$ was known even non-constructively, even for $d=2$, prior to our work.
After our result was announced Nikolov\cite{Sasho16} further improved this discrepancy bound to $O_d(\log^{d-1/2}n)$ using a clever application of Banaszczyk's results \cite{B97, Bana12}, coming tantalizingly close to the $\Omega(\log^{d-1})$ lower bound \cite{MNT14}. However, this result is not algorithmic. \\

\emph{Discrepancy of Polytopes:} We also extend the above theorem to discrepancy of polytopes generated by a fixed set of $k$ hyperplanes. Let $H$ be a set of $k$ hyperplanes in $\mathbb{R}^d$ and let $POL(H)$ denote the set of all polytopes of the form $\cap_{i=1}^{\ell} h_i $ where each $h_i$ is a halfspace which is a translation of some halfspace in $H$. We wish to color a given set of $n$ points such that the discrepancy of every polytope in $POL(H)$ is small.
This problem was considered in \cite{Mat99, Mat09} where a bound of $O_{d,k}(\log^{d+1/2}\sqrt{\log\log n})$ was given, which prior to our work was the best known even for $d=2$. We improve this bound in the following theorem.
\begin{thm}
\label{thm:polytopes}
Given a set $P$ of $n$ points in $\mathbb{R}^d$, there exists an efficient randomized algorithm to find a $\{-1,1\}^n$ coloring of $P$ such that the discrepancy of all polytopes in $POL(H)$ is $O_{d,k}(\log^dn)$.
\end{thm}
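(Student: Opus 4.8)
The plan is to reduce Theorem~\ref{thm:polytopes} to the box--discrepancy statement underlying Theorem~\ref{thm:tusnady}, and then to save a factor of $\log^{k-d} n$ by exploiting that the input points genuinely live in $d$ dimensions. Let $a_1,\dots,a_k\in\mathbb R^d$ be the normals of the $k$ hyperplanes of $H$. Since intersecting several translates of a halfspace with a common normal $a_i$ just imposes an interval constraint $\langle a_i,x\rangle\in I_i$ on the $i$-th linear functional, a polytope $Q\in POL(H)$ restricted to any finite point set coincides with an axis-parallel box in $\mathbb R^k$ restricted to the image of that set under the linear map $\phi\colon\mathbb R^d\to\mathbb R^k$, $\phi(p)=(\langle a_1,p\rangle,\dots,\langle a_k,p\rangle)$. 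Thus it suffices to color the image points $q_j:=\phi(p_j)$ so that every axis-parallel box in $\mathbb R^k$ has discrepancy $O_{d,k}(\log^d n)$. Running the algorithm behind Theorem~\ref{thm:tusnady} in ambient dimension $k$ already yields $O_k(\log^k n)$; the point is to do better using that $\operatorname{rank}\phi\le d$, so all the $q_j$ lie on a $d$-dimensional affine subspace $L\subseteq\mathbb R^k$.

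First I would set up the same machinery as for Theorem~\ref{thm:tusnady}: build one dyadic tree $T_i$ on the $n$ values of the $i$-th coordinate, call a product $J_1\times\cdots\times J_k$ of tree nodes a \emph{canonical box}, and use the framework with the moves $A(t)$ equal to the set of uncolored points and $Z(t)$ equal to the constraints that keep each currently relevant canonical box balanced among the uncolored points, proceeding level by level. The crucial combinatorial input, replacing the naive ``$\log^k n$ level vectors'' count, is the lemma that, because the $q_j$ lie on the $d$-dimensional subspace $L$, \emph{every} axis-parallel box in $\mathbb R^k$ decomposes on the point set into only $O_{d,k}(\log^d n)$ non-empty canonical boxes. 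The idea: fixing a maximal linearly independent subset $a_{i_1},\dots,a_{i_{d'}}$ ($d'\le d$) of the normals, the remaining $k-d'$ coordinates of a point are affine functions of these $d'$ coordinates, so once the canonical intervals $J_{i_1},\dots,J_{i_{d'}}$ are chosen the points in their product have every other coordinate confined to a sub-range of relative width at most $2^{-(\ell_{i_1}+\cdots+\ell_{i_{d'}})}$; hence they meet only $O_k(1)$ canonical intervals per remaining coordinate, and the regime where some $\ell_i$ with $i\notin\{i_1,\dots,i_{d'}\}$ is much larger contributes no non-empty box at all. Summing, the effective number of level vectors is $O_{d,k}(\log^{d'} n)=O_{d,k}(\log^d n)$.

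With this in hand I would apply Theorem~\ref{thm:vecgenunified} directly to the set $S=$ (trace of a fixed polytope $Q$), with $a_{ji}=\mathbf{1}[q_i\in Q]\in\{0,1\}$: since $S$ is an exact disjoint union of $O_{d,k}(\log^d n)$ canonical boxes, each kept balanced among uncolored points throughout the run, the only contribution to $\sum_{i\in S}\chi(i)$ comes from the ``corrupted'' elements, namely those whose balance constraint had to be dropped at some super-step because of the budget $\delta\,|A(t)|$. The bookkeeping step is to show that over the whole run the number of such corrupted elements for any single trace $S$ is at most $\operatorname{poly}(\log n)$ with polynomial degree bounded in terms of $d,k$, by charging each dropped constraint to a (coordinate, level) pair and using that there are only $O(k\log n)$ of those; then Theorem~\ref{thm:vecgenunified} gives that $|\sum_{i\in S}\chi(i)|$ is sub-gaussian with deviation $O_{d,k}(\operatorname{polylog} n)$. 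Finally a union bound over the $n^{O_{d,k}(1)}$ combinatorially distinct traces of $POL(H)$ on $n$ points (Sauer--Shelah, since this range space has VC dimension $O_{d,k}(1)$), with $\lambda=\Theta(\sqrt{\log n})$, makes all of them simultaneously $O_{d,k}(\log^d n)$ with high probability; efficiency follows since the framework runs in polynomial time with polynomially many constraints and ranges, and one can re-run on the (polynomially unlikely) failure event.

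The main obstacle is the middle step: proving the decomposition lemma with exactly $O_{d,k}(\log^d n)$ non-empty canonical boxes and, simultaneously, arranging the moves $A(t),Z(t)$ so that the number of balance constraints imposed at each super-step stays below $\delta\,|A(t)|$ while keeping the per-trace corrupted set polylogarithmic. This is precisely where the improvement from exponent $k$ to exponent $d$ is won or lost, and it is the one part that is not a black-box call to the framework; everything else parallels the proof of Theorem~\ref{thm:tusnady}.
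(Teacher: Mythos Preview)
Your route is genuinely different from the paper's, and the place you flag as ``the main obstacle'' is exactly where it breaks.

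The paper does \emph{not} lift to $\mathbb{R}^k$. It works directly in $\mathbb{R}^d$ and replaces Lemma~\ref{lem:canonical} by an analogous construction of canonical \emph{parallelepipeds} aligned with the directions in $H$, taken from \cite{Mat99}: for $d=2$ and two directions one builds canonical $x$-intervals as before, and inside each of them the second-level intervals are taken along translates of the second line rather than horizontally; for general $k$ one repeats the construction for each relevant $d$-tuple of directions, absorbing the resulting $\operatorname{poly}(k)$ factor into the threshold $\ell$. This yields at most $n/8$ canonical shapes and, for every $Q\in POL(H)$, a leftover region with $O_{d,k}(\log^{2d-2}n)$ points. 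The rest is verbatim the proof of Theorem~\ref{thm:tusnady}: the constraint budget $|Z(t)|\le |A(t)|/4$ holds by construction, the number of corrupted points is $O_{d,k}(\log^{2d-1}n)$, and Theorem~\ref{thm:vecgenunified} with $\lambda=\Theta(\sqrt{\log n})$ finishes.

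In your lifting approach both issues you identify are real, and your sketch does not resolve them. First, the constraint budget: the family of all products $J_1\times\cdots\times J_k$ of dyadic intervals contains on the order of $n\log^{k-1}n$ non-empty cells even when the points lie on a $d$-flat, so you cannot put them all into $Z(t)$; you would have to select a linear-size subfamily, and you give no rule for doing so. Second, your argument for the decomposition lemma is incorrect as written: the dyadic trees $T_i$ are built on \emph{ranks}, not on geometric values. An affine relation among the coordinate values does not confine the rank of a dependent coordinate once the ranks of $J_{i_1},\dots,J_{i_{d'}}$ are fixed; for generic point sets the $j$-ranks of the points in $J_{i_1}\times\cdots\times J_{i_{d'}}$ can be scattered over all of $[n]$, so ``they meet only $O_k(1)$ canonical intervals per remaining coordinate'' fails. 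Without that lemma your corruption bound degrades to the naive $O(\log^{2k-1}n)$, giving only $O_k(\log^k n)$ discrepancy. The paper sidesteps both problems by never leaving $\mathbb{R}^d$, so the depth of the hierarchical construction---and hence the exponent---is governed by $d$ rather than $k$.
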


\paragraph {Steinitz Problem:} Given a norm $\|.\|$ and any set $V=\{v_1,\ldots,v_n\}$ of $n$ vectors in $\mathbb{R}^d$ with  $\|v_i\| \le 1$ for each\footnote{We use $[n]$ to denote the set \{1,2,\dots,n\}.} $i\in [n]$ and satisfying $\sum_{i=1}^n v_i=0$, the Steinitz problem asks for the smallest number $B$, depending only on $d$ and $\|.\|$, for which there exists an ordering $v_{\pi(1)},v_{\pi(2)},\dots,v_{\pi(n)}$ of these vectors such that all partial sums along this ordering have norm $\|.\|$ bounded by $B$ i.e. 
\[ \|\sum_{i=1}^k v_{\pi(i)}\|\le B \qquad \textrm{for all } 1\le k\le n .\]
This question was originally asked by Riemann in early 1900's and has several applications in areas such as graph theory \cite{AB86}, integer programming \cite{kevin12,Dash12} and scheduling \cite{Sev}. For an interesting history and other applications, we refer to the excellent surveys \cite{Ando89,bar,Sev}.


Steinitz showed that $B\le 2d$ for any norm \cite{Stz16}. This was later improved to $B\le d$ \cite{GS90}.
The special cases of $\ell_2$ and $\ell_\infty$ norms have been widely studied, and while no 
$o(d)$ bounds are known, it is believed that $O(\sqrt{d})$ should be the right answer.
Recently,
Banaszczyk \cite{Bana12} showed (non-constructive) bounds of $O(\sqrt{d\log n})$ for the $\ell_\infty$ norm and $O(\sqrt{d}+\sqrt{\log n})$ for the $\ell_2$ norm,
based on a non-trivial application of his result in \cite{B97}. 
On the algorithmic side, the bound of $d$ due to \cite{GS90} is based on a clever iterated rounding procedure.
Recently, 
\cite{Har14} considered algorithms for the $\ell_2$ case, motivated by a central question in machine learning that is also referred to as the {\em Herding Problem}, and gave an $O(\sqrt{d} \log^{2.5} n)$ bound.

A closely related, and harder problem, is 
the \textit{signed series problem}. Here given a sequence of vectors $v_1,\dots,v_n$, each with norm at most $1$, to goal is to find signs $\{-1,1\}^n$ such that the norm of signed sum for each prefix of the sequence is bounded by some number $E$, depending only on $d$ and the norm. Chobanyan \cite{Chob94} gave a general reduction from the Steinitz problem to the signed series problem and showed that $B\le E$.
The results of \cite{Bana12,Har14} in fact hold for this harder problem and \cite{Har14} showed that this reduction can be made algorithmic.

Using our framework, we give the following bound for the $\ell_\infty$ case of signed series problem. 
\begin{thm}
\label{thm:stff}
Given vectors $v_1,\dots,v_n\in\mathbb{R}^d$ with $\|v_i\|_\infty\le 1$ for all $i\in[n]$, there is an efficient randomized algorithm to find a coloring $\chi \in \{-1,1\}^n$ such that $\max_{k\in[n]}\|\sum_{i=1}^k \chi(i) v_{i}\|_\infty \le O(\sqrt{d\log n})$. 
\end{thm}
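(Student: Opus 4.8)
The plan is to reduce the signed series problem directly to the framework of Theorem~\ref{thm:vecgenunified}, by a suitable choice of the matrix $B$ and of the moves $A(t), Z(t)$, and then to union‑bound the resulting tail bound over all rows. First I would set up the matrix: index the rows of $B$ by pairs $(k,\ell)$ with $k\in[n]$ and $\ell\in[d]$, and put $a_{(k,\ell),i}=(v_i)_\ell$ if $i\le k$ and $a_{(k,\ell),i}=0$ otherwise. Since $\|v_i\|_\infty\le 1$ we have $|a_{(k,\ell),i}|\le 1$, so $B$ is admissible, and it has $m=nd$ rows. For the choice $S=[k]$ the quantity $\sum_{i\in S}a_{(k,\ell),i}\chi(i)$ controlled by the theorem is exactly the $\ell$-th coordinate of the $k$-th prefix sum $\sum_{i=1}^k\chi(i)v_i$, so a tail bound that holds for every row together with $S=[k]$ bounds $\max_{k\in[n]}\|\sum_{i=1}^k\chi(i)v_i\|_\infty$.

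Next I would specify the game. I would maintain a laminar ``frontier'' of dyadic subintervals of $[n]$ covering the uncolored elements, starting from the single interval $[n]$, splitting a frontier interval into its two dyadic halves once half of its elements have been colored, and stopping the refinement at intervals of size $\Theta(d)$. At each step $t$ I let $A(t)$ be the set of all currently uncolored elements, and for every frontier interval $I$ that still contains at least $d/\delta$ uncolored elements I add to $Z(t)$ the $d$ constraints $\sum_{i\in I\text{ uncolored}}(v_i)_\ell\,\Delta x_i(t)=0$, one for each $\ell\in[d]$; a short count (each such $I$ contributes $d\le\delta\cdot(\#\text{uncolored in }I)$ constraints) shows the total number of constraints is at most $\delta|A(t)|$, so this is a legal move. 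Elements lying in a frontier interval with fewer than $d/\delta$ uncolored elements are simply carried along in $A(t)$ with no constraint protecting them.

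The hard part will be to show that for a fixed row $(k,\ell)$ with $S=[k]$ the corrupted set $C_{(k,\ell),[k]}$ has size only $O(d)$. The intuition is that a constraint supported on a dyadic interval that lies entirely inside $[k]$ or entirely outside $[k]$ does not disturb the prefix sum over $[k]$, and that splitting an interval only adds constraints — the two per‑child constraints together imply the parent constraint — so protection is never lost at a split. Hence the only elements that can be corrupted with respect to the boundary $k$ should be the at most $d/\delta$ uncolored elements remaining inside the unique size‑$\Theta(d)$ dyadic interval straddling $k$ at the moment its constraints are dropped, giving $\sum_{i\in C_{(k,\ell),[k]}}a_{(k,\ell),i}^2\le|C_{(k,\ell),[k]}|=O(d)$. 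Making this precise against the formal definition of ``corrupted'' from Section~\ref{sec:framework}, together with the bookkeeping for the laminar frontier and the constraint budget, is where the real work lies.

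Finally I would apply Theorem~\ref{thm:vecgenunified} with $\lambda=\Theta(\sqrt{\log(nd)})$, so that $2\exp(-\lambda^2/2)$ beats $1/(nd)$, and union‑bound over the $nd$ rows. With high probability every $(k,\ell)$ then satisfies $\big|(\sum_{i=1}^k\chi(i)v_i)_\ell\big|\le c\lambda\big(O(\sqrt d)+\lambda\big)=O(\sqrt{d\log n})$ (the additive $\lambda^2=O(\log n)$ term being lower order in the regime $d=\Omega(\log n)$, with the remaining small‑$d$ range handled directly), i.e. $\max_{k\in[n]}\|\sum_{i=1}^k\chi(i)v_i\|_\infty=O(\sqrt{d\log n})$. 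Since the framework algorithm runs in polynomial time, this yields the claimed efficient randomized algorithm.
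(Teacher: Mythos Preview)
Your reduction to the framework and the final union bound are fine, but the heart of the argument --- that the corrupted set for a prefix $[k]$ has size $O(d)$ under your dyadic-frontier strategy --- is wrong as stated.

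The problem is at the very first step. You start with the single frontier interval $[n]$, so at time $t=1$ the only constraints in $Z(1)$ are the $d$ vectors $w_\ell$ with $w_\ell(i)=(v_i)_\ell$ for all $i\in[n]$. For a proper prefix $S=[k]$ with $k<n$, none of these constraints is \emph{eligible}: their support is all of $[n]\not\subseteq[k]$. Hence the only admissible choice is $H=\emptyset$, which protects $S'=\emptyset$, and every element of $[k]\cap A(1)=[k]$ is corrupted at time $1$. Your intuition that ``splitting only adds constraints, so protection is never lost'' is correct, but there is no protection to begin with: the coarse-to-fine refinement goes the wrong way. More generally, at any time the frontier interval straddling $k$ may be arbitrarily large (it shrinks only as elements get colored), and every alive element of $[k]$ inside it is unprotected at that moment; the $O(d)$ bound you want simply does not hold.

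The paper's strategy avoids this by a much simpler move: it takes $A(t)$ to be only the \emph{first} $2d$ alive indices, and puts the $d$ row constraints restricted to $A(t)$ into $Z(t)$. For a prefix $[k]$, as long as $k$ has not yet entered $A(t)$ we have $A(t)\subseteq[k]$, so the row-$j$ constraint on $A(t)$ is eligible and exactly equals $v_{A(t),j,t}$, protecting all of $A(t)$. Once $k$ enters $A(t)$ (say at time $t_k$), every element of $[k]$ outside $A(t_k)$ is already frozen, so only the $\le 2d$ elements of $A(t_k)$ can ever be corrupted. The key idea you are missing is to restrict $A(t)$ to a short prefix of the alive elements rather than taking $A(t)=N(t)$ and trying to carve it up with dyadic constraints; this automatically makes the constraints eligible for every prefix beyond the current window.
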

This matches Banaszczyk's \cite{Bana12} bound for the signed series problem, and by the reduction of \cite{Chob94,Har14} also implies an $O(\sqrt{d\log n})$ algorithmic bound for the Steinitz problem.



For the $\ell_2$ case, we give an algorithmic $O(\sqrt{d\log n})$ bound for the signed series problem (and hence also for Steinitz problem).
While this does not match Banaszczyk's bound \cite{Bana12} of $O(\sqrt{d}+\sqrt{\log n})$, it improves upon the algorithmic $O(\sqrt{d}\log^{2.5}n)$ bound in \cite{Har14}.

\begin{thm}
\label{thm:st22}
Given vectors $v_1,\dots,v_n\in\mathbb{R}^d$ with $\|v_i\|_2\le 1$ for all $i\in[n]$, there is an efficient randomized algorithm to find a coloring $\chi \in \{-1,1\}^n$ such that $\max_{k\in[n]}\|\sum_{i=1}^k \chi(i) v_{i}\|_2 \le O(\sqrt{d\log n})$. 
%
\end{thm}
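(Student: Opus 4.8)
The plan is to derive Theorem~\ref{thm:st22} as a black-box consequence of the framework of Theorem~\ref{thm:vecgenunified}, used in its vector-valued form: one takes the ``rows'' of $B$ to be a single $d\times n$ block whose $i$-th column is $v_i$, so that for a subset $S\subseteq[n]$ the controlled quantity is $\bigl\|\sum_{i\in S}\chi(i)v_i\bigr\|_2$, and one queries the $n$ prefix sets $S=[1,k]$. Everything then reduces to designing, at each step $t$ of the game, the offered set $A(t)$ and the constraints $Z(t)$ so that (a) $Z(t)$ has at most $\delta|A(t)|$ scalar constraints, and (b) the corrupted set of every prefix $[1,k]$ has size nearly linear in $d$.

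For the design I would fix a balanced binary segment tree on $[n]$ with $O(\log n)$ levels, so that each prefix $[1,k]$ is a disjoint union of $O(\log n)$ canonical dyadic intervals. Let $A(t)$ be the still-fractional coordinates and call a dyadic interval $I$ heavy at time $t$ if $|I\cap A(t)|\ge\tau$. Into $Z(t)$ we put, for a suitable subfamily of heavy intervals, the $d$ equalities asserting $\sum_{i\in I\cap A(t)}\Delta x_i(t)\,v_i=0$. Since heavy intervals at a fixed level are disjoint and each consumes $\ge\tau$ active coordinates, there are $O(|A(t)|\log n/\tau)$ of them, so a threshold $\tau=\Theta(d\log n/\delta)$ (or smaller, with a more careful choice of which intervals to pin) keeps the count below $\delta|A(t)|$, giving (a). With the constrained increments pinned, the fractional prefix sum $w^{(k)}(t)=\sum_{i\le k}x_i(t)v_i$ moves only through the canonical dyadic intervals of $[1,k]$ that have already fallen below $\tau$ active coordinates, and as each contributes at most $\tau$ further coordinates once light, the corrupted set $C_k$ has size $O(\tau\log n)=O_\delta(d\cdot\mathrm{polylog}\,n)$. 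Driving this overhead down to $O(d)$ — which is what the stated bound requires — by a sharper choice of pinned intervals together with a schedule for the order in which coordinates are permitted to freeze, is the substantive combinatorial step.

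Granting $|C_k|=O(d)$ for all $k$, the conclusion is immediate: $\|v_i\|_2\le1$ gives $\sum_{i\in C_k}\|v_i\|_2^2\le|C_k|=O(d)$, so the vector form of Theorem~\ref{thm:vecgenunified} applied to each $S=[1,k]$ with $\lambda=\Theta(\sqrt{\log n})$ yields
\[
\pr\!\left[\Bigl\|\sum_{i=1}^k\chi(i)v_i\Bigr\|_2\ge c\sqrt{\log n}\bigl(O(\sqrt d)+\sqrt{\log n}\bigr)\right]\le 2\exp(-\Theta(\log n))\le\frac{1}{2n},
\]
and a union bound over $k\in[n]$ gives $\max_k\|\sum_{i\le k}\chi(i)v_i\|_2=O(\sqrt{d\log n})$ with probability $\ge 1/2$; independent repetition boosts this. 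The additive $\sqrt{\log n}$ makes the argument meaningful only for $d=\Omega(\log n)$; the remaining range $d=O(\log n)$ is handled by a direct argument. The $\ell_\infty$ bound of Theorem~\ref{thm:stff} comes from the \emph{same} $A(t),Z(t)$, applying instead the scalar version of Theorem~\ref{thm:vecgenunified} coordinatewise and union-bounding over the $O(nd)$ pairs $(\ell,k)$.

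The step I expect to be the main obstacle is (b): keeping the corrupted set of an arbitrary prefix nearly linear in $d$ while respecting the $\delta|A(t)|$ budget. The tension is real — a prefix splits into $\Theta(\log n)$ dyadic pieces, each of which can leak $\Theta(\tau)$ corrupted coordinates once it goes light, while the budget forces $\tau$ to grow with the number of intervals one pins — so the design must simultaneously pin enough heavy intervals to freeze $w^{(k)}$ except near the cut, and schedule freezing so that only $O(d)$ coordinates ever lie near any cut; this is precisely where the freedom to choose $A(t)$ in the framework is exploited. Once $A(t)$ and $Z(t)$ are fixed, checking that they meet the hypotheses of the framework, and hence invoking Theorem~\ref{thm:vecgenunified}, is routine.
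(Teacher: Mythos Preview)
Your proposal has a genuine gap: you correctly identify that the dyadic scheme only yields a corrupted set of size $O(d\cdot\mathrm{polylog}\,n)$ per prefix, and you defer the reduction to $O(d)$ to an unspecified ``sharper choice of pinned intervals together with a schedule for the order in which coordinates are permitted to freeze.'' That reduction is not carried out, and in fact the dyadic route makes it harder than necessary. The tension you describe --- a prefix splits into $\Theta(\log n)$ pieces, each leaking $\Theta(\tau)$ coordinates --- is an artifact of taking $A(t)$ to be all alive coordinates.

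The paper's design avoids the dyadic decomposition entirely. At each step it sets $A(t)$ to be the \emph{first $2d$ alive indices} (in the natural order $1,\dots,n$) and puts into $Z(t)$ the $d$ scalar constraints $\sum_{i\in A(t)}\Delta x_i(t)\,v_{i,j}=0$ for $j=1,\dots,d$, so $\delta\le 1/2$. Now fix a prefix $[1,k]$ and let $t_k$ be the first time $k\in A(t)$. For $t<t_k$ one has $A(t)\subseteq[1,k-1]$, so the row constraint $w_j\in Z(t)$ is eligible for $[1,k]$ and protects \emph{every} element of $A(t)$; no corruption occurs. For $t\ge t_k$, any index $i\le k$ not in $A(t_k)$ is already frozen (since $A(t_k)$ consists of the $2d$ smallest alive indices), so it never re-enters $A(t)$ and cannot be corrupted later. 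Hence $C_{j,[1,k]}\subseteq A(t_k)$ and $|C_{j,[1,k]}|\le 2d$ for every $j$ and $k$, with no further combinatorics needed. This is exactly the ``scheduling'' freedom you allude to, used in the simplest possible way.

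One further point: for the $\ell_2$ conclusion the paper does \emph{not} invoke Theorem~\ref{thm:vecgenunified} but rather Theorem~\ref{thm:l2discrepancy}, which bounds $\bigl(\sum_j\mathrm{disc}(\chi,j,S)^2\bigr)^{1/2}$ directly in terms of $\sum_j\sum_{i\in C_{j,S}}a_{ji}^2$. The paper remarks explicitly that the $\ell_2$ bound does not follow from the $\ell_\infty$ bound of Theorem~\ref{thm:vecgenunified} applied rowwise (the additive $\lambda^2$ term would cost a factor $\sqrt{m}$). With $A(t),Z(t)$ as above one has $\sum_j\sum_{i\in C_{j,[1,k]}}a_{ji}^2\le\sum_{i\in A(t_k)}\|v_i\|_2^2\le 2d$, and Theorem~\ref{thm:l2discrepancy} with $\lambda=O(\sqrt{\log n})$ finishes.
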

To show this we use our framework, but modify the analysis in the proof of Theorem \ref{thm:vecgenunified} slightly to adapt to $\ell_2$ discrepancy.

%

\paragraph{Koml\'{o}s Problem:} 
Given a collection of vectors $v_1,\ldots,v_n \in R^d$ of $\ell_2$ norm at most one, for some arbitrary $d$, the goal here is to find a $\pm 1$ coloring $\chi$  of these vectors that minimizes $\|\sum_i \chi(i) v_i\|_\infty$. The celebrated Koml\'{o}s conjecture says that an $O(1)$ discrepancy coloring always exists. The best known bound due to Banaszczyk is $O(\sqrt{\log n})$ \cite{B97} and this  was recently matched algorithmically in \cite{BDG16}. 
Our framework here directly gives an $O(\sqrt{\log n})$ bound, with a much cleaner proof than in \cite{BDG16}.


%
%

\subsection{Universal Vector Colorings}
One of the key technical ingredients behind Theorem~\ref{thm:vecgenunified} is the existence of certain vector (partial) colorings satisfying very strong properties.
Vector colorings are a relaxation of $\pm 1$ colorings where the color of each point is allowed to be a vector of length at most $1$ and can be found by writing a semi-definite program (SDP).
For reasons explained below, we call the vector colorings we use as {\em Universal Vector Colorings}. These vector colorings can also be viewed as the $\ell_2$-analogues of 
{\em Basic Feasible Solutions}, that play a crucial role and are widely studied in LP rounding algorithms \cite{LRSbook}. 
We elaborate on this connection further below and believe that these colorings should have new applications in rounding fractional solutions in approximation algorithms. 

\begin{thm}(Universal Vector Colorings)
\label{thm:univec}
Let $[n]$ be a set of elements. Let $0 \leq \delta < 1$ and fix $\beta\in(0,1-\delta)$.  Given an arbitrary collection of $\ell$ linear constraints specified by vectors $w_1,w_2,\dots,w_{\ell} \in\mathbb{R}^n$ where $\ell=\delta n$, there exists a collection of vectors $\{u_i\}_{i=1}^n$ such that:
\begin{enumerate}[i)]
\item The vector discrepancy along each direction $w_k$ is zero i.e.,
\[ \sum_{i}w_k(i)u_i=0 \qquad \textrm{ for all }1\le k\le \ell.\] 
\item For every vector $b\in\mathbb{R}^n$, it holds that 
\[\|\sum_{i}b(i)u_i\|_2^2\le \frac{1}{\beta}\sum_i b(i)^2 \|u_i\|_2^2.\]
\item For each $i \in [n]$ we have $\|u_i\|_2\le 1$, and $\sum_i \|u_i\|_2^2\ge (1-\delta-\beta)n$.
\end{enumerate}
Moreover, the $\{u_i\}$'s can be computed in polynomial time by solving an SDP.
\end{thm}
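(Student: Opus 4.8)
The plan is to realize the desired vectors $\{u_i\}$ as a Gram decomposition of the optimal solution of a semidefinite program, so that (i)--(iii) become linear/semidefinite constraints or a linear objective. Writing $X=(\langle u_i,u_j\rangle)_{i,j}$ for the Gram matrix, property (i) is equivalent to $Xw_k=0$ for all $k$ (for $X\succeq0$ this is the same as $w_k^\top Xw_k=0$); property (ii) says $b^\top Xb\le\frac1\beta\sum_ib(i)^2X_{ii}$ for all $b$, i.e. $\frac1\beta\,\mathrm{diag}(X)-X\succeq0$, a genuine linear matrix inequality in $X$; and $\|u_i\|_2\le1$ is $X_{ii}\le1$. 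So I would consider
\[
\mathrm{OPT}=\max\Big\{\mathrm{tr}(X):X\succeq0,\ Xw_k=0\ \forall k,\ X_{ii}\le1\ \forall i,\ \tfrac1\beta\,\mathrm{diag}(X)-X\succeq0\Big\}.
\]
The feasible set is nonempty ($X=0$ works) and compact, so this SDP is solvable, and any optimizer $X^\star=U^\top U$ yields vectors $u_i$ with (i), (ii), and $\|u_i\|_2\le1$; polynomial-time solvability of SDPs (with a negligible perturbation of $\beta,\delta$ to absorb the accuracy loss) gives the ``Moreover'' clause. Hence everything reduces to the single estimate $\mathrm{OPT}\ge(1-\delta-\beta)n$.

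To prove this lower bound I would exhibit a feasible $X$ of large trace. Assume WLOG $w_1,\dots,w_\ell$ independent, so $P:=\Pi_{W^\perp}$ has rank $r:=n-\ell=(1-\delta)n$. The ``ideal'' candidate $X=P$ has $\mathrm{tr}(P)=r$ and satisfies (i) and $P_{ii}\le1$; it satisfies the LMI exactly when $\beta P\preceq\mathrm{diag}(P)$, which holds e.g. when all coordinate leverage scores $P_{ii}$ equal $r/n$ (then $\mathrm{diag}(P)=\frac rnI$ and $\beta P\preceq\beta I\preceq\frac rnI$ since $\beta\le1-\delta=r/n$). In general, though, $P$ is lopsided relative to the coordinate axes and $\beta P\preceq\mathrm{diag}(P)$ fails; the point is that the LMI $\frac1\beta\mathrm{diag}(X)\succeq X$ is \emph{not} rotation-invariant (the $\mathrm{diag}(\cdot)$ ties it to the coordinate system), so the position of $W^\perp$ relative to the axes genuinely matters and $X=P$ must be replaced by a ``rebalanced'' matrix.

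The rebalancing is the crux. I would seek $u_i=Aq_i$ with $q_i:=Pe_i$ and $A$ a linear map on $W^\perp$: any such choice automatically satisfies $\sum_iw_k(i)u_i=A(\sum_iw_k(i)q_i)=A(Pw_k)=0$, i.e. (i). One then wants $A$ with (a) $\|Aq_i\|_2\le1$ for all $i$, (b) the normalized vectors $\widehat{Aq_i}=Aq_i/\|Aq_i\|_2$ forming a \emph{tight frame} for their span, and (c) $\sum_i\|Aq_i\|_2^2=\|A\|_F^2\ge(1-\delta-\beta)n$. Given such an $A$, the Gram matrix $X$ of the $u_i$ has $X_{ii}\le1$, $\mathrm{tr}(X)=\|A\|_F^2\ge(1-\delta-\beta)n$, and its normalization $\mathrm{diag}(X)^{-1/2}X\mathrm{diag}(X)^{-1/2}$ — the Gram matrix of the unit vectors $\widehat{Aq_i}$ — is, by tightness, a scaled projection, whose operator norm equals (number of nonzero $u_i$)/rank; if this stays at most $n/r=1/(1-\delta)<1/\beta$ then $X\preceq\frac1\beta\mathrm{diag}(X)$ and $X$ is feasible, completing the argument. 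Existence of $A$ is the statement that the ``coordinate frame'' $\{q_i\}$ of $W^\perp$ can be put into a scaled isotropic (John / radial-isotropic) position, provable by a compactness/continuity argument (optimizing a suitable convex functional) or by invoking known matrix-scaling / Forster-type theorems.

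The main obstacle is exactly this rebalancing, and specifically the \emph{degenerate} configurations: when several $q_i$ are (nearly) linearly dependent — equivalently, when a low-dimensional subspace $W\oplus E$ contains many coordinate vectors $e_i$ — the frame $\{q_i\}$ cannot be scaled to a tight frame, and some coordinates must instead receive small (typically nonzero) weights $\|u_i\|$. Making the bookkeeping work — that the cost of these coordinates in the trace is at most $\beta n$, that the surviving system still admits the isotropic scaling, and that the resulting operator-norm bound is the claimed $1/(1-\delta)$ — is where the real work lies. (An alternative is to prove $\mathrm{OPT}\ge(1-\delta-\beta)n$ by SDP duality, i.e. to show every feasible dual solution has objective at least $(1-\delta-\beta)n$; this appears to need the PSD structure of the dual slack matrix more than the obvious consequences $\langle Z,P\rangle\ge0$ and $Z_{ii}\ge0$ provide, so I expect it to be comparably delicate.) Finally, for the algorithmic framework one additionally wants $X^\star$ of small rank — the $\ell_2$ analogue of passing to a basic feasible solution — which follows from a standard extreme-point/rank-counting argument on the optimal face of the spectrahedron, but this is not needed for the statement itself.
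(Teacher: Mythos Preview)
Your SDP formulation is correct and matches the paper exactly. The gap is in the lower bound $\mathrm{OPT}\ge(1-\delta-\beta)n$: your primal construction via radial-isotropic (Forster) position is incomplete precisely where you say it is. When a low-dimensional subspace of $W^\perp$ contains many of the $q_i$ (as in the simple example $w_1=e_1-e_2$, which forces $u_1=u_2$ and hence, for $\beta>1/2$, forces $u_1=u_2=0$), no Forster-type scaling exists, and you give no argument that the trace lost to such degeneracies is at most $\beta n$. This is not a detail to be filled in later; it is the entire content of the bound.

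The paper takes the duality route you mention and dismiss as ``comparably delicate,'' and it turns out to be short. The dual constraint reads $C+G-\tfrac1\beta\,\mathrm{diag}(G)+Q\succeq I$ with $C=\sum_k\eta_kw_kw_k^\top$ (rank $\le\delta n$), $G\succeq0$, and $Q=\mathrm{diag}(q)$ with $q\ge0$. The one idea you are missing is this: for \emph{any} PSD matrix $G$ there is a subspace $V$ of dimension at least $(1-\beta)n$ on which $w^\top G w\le\tfrac1\beta\,w^\top\mathrm{diag}(G)w$. Proof: write $G=U^\top U$, rescale columns to get $\widetilde U=U\,\mathrm{diag}(G)^{-1/2}$ (handling zero-diagonal columns separately); the columns of $\widetilde U$ have unit norm, so $\sum\sigma_i(\widetilde U)^2\le n$, hence at most $\beta n$ singular values exceed $1/\sqrt\beta$, and $V$ is the span of the remaining right singular directions (pulled back through $\mathrm{diag}(G)^{1/2}$). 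Now intersect $V$ with $\ker C$ to get $W'$ of dimension $\ge(1-\delta-\beta)n$, and take the trace inner product of the dual constraint with the projector onto $W'$: the $C$ term vanishes, the $G-\tfrac1\beta\,\mathrm{diag}(G)$ term is nonpositive, and you are left with $\sum_iq_i\ge\mathrm{tr}(QM)\ge\mathrm{tr}(M)\ge(1-\delta-\beta)n$. That is the whole proof; no casework on degeneracy is needed because the singular-value counting absorbs it automatically.
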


In particular, the first property requires that the vectors $u_i$ be nicely correlated so that their weighted sum is $0$ in each of the $\ell$ directions given by $w_k$'s. On the other hand, the second property requires that these vectors be almost orthogonal in a very strong sense. In particular, the property is satisfied for $\beta=1$ iff $u_i$ are mutually orthogonal.  
One trivial way to satisfy these conditions is to pick $u_i=0$ for all $i$. But the third property states that most of these vectors have length $\Theta(1)$.
 
%
%
%
%

\paragraph{Relation to Basic Feasible Solutions for LPs.}
Consider a linear program on $n$ variables $x_1,\ldots,x_n$, with constraints $x_i \in [-1,1]$ for all $i \in [n]$ and 
$w_k \cdot x =0$ for $k=1,\ldots,\ell$. 
Then any basic feasible solution to the LP satisfies that at least $n-\ell$ variables are set to $-1$ or $1$ (and hence $\sum_{i\in [n]} |x_i| \geq n-\ell$).
Moreover, for any vector $b \in \mathbb{R}^n$, it trivially holds that $b \cdot x \leq \|b\|_1$. These simple properties are 
the key to the iterated rounding technique \cite{LRSbook}.

The second and third properties in Theorem \ref{thm:univec} can be viewed as analogous to these. 
While the property $b \cdot x \leq \|b\|_1$ follows trivially, it is instructive to consider the following example which shows why the $\ell_2$ analogue is nontrivial (this is also the reason why the error adds up in an $\ell_1$ manner during the iterations of the partial coloring). 
Suppose we have $\ell=n/2$ constraints $u_1 =u_{\ell+1}$,  $u_2=u_{\ell+1}$, \ldots, $u_{\ell}=u_{\ell+1}$. This would enforce that 
$u_1=\ldots=u_{\ell+1}$. For $\beta = 1/4$, we require that for every $b$, $ \|\sum_i b(i)u_i \|^2 \leq 4 \sum_i b(i)^2 \|u_i\|^2$. However, choosing $b$ to be (say) the vector with $1$ in the first $n/2$ coordinates, and $0$ elsewhere, we would have
that $\|\sum_i b(i)u_i \|^2 =  \|\sum_{i=1}^{n/2} u_i \|^2  = (n/2)^2 \|u_1\|^2$, which is substantially larger than $
4 \sum_i b(i)^2 \|u_i\|^2 = 2n \|u_1\|^2$ unless $u_1=\ldots=u_{n/2}=0$.
In particular, the first and second requirements in Theorem \ref{thm:univec} can interact in complicated ways and it is not trivial to still guarantee the third property. Using Theorem \ref{thm:univec} we can get the property that $b\cdot x=O(\|b\|_2)$ for all $b\in \mathbb{R}^n$ and thus the additive error incurred while rounding a fractional solution can be much smaller than the error of $\|b\|_1$ incurred in iterated rounding.



\paragraph{Comparison to Beck's partial coloring lemma.}
Interestingly, the previous approaches for discrepancy such as partial coloring lemma, or 
the SDP based algorithms such as \cite{B10,BDG16} can be viewed as enforcing the second property only for very specific choices of vectors $b$, tailored to the specific problem at hand.
For concreteness, let us consider Beck's partial coloring lemma. 
\begin{lemma}\cite{Beck81b}
\label{thm:beck}
Let $\mathcal{F}$ and $\mathcal{M}$ be set systems on an $n$ point set $V$ such that $|M|\le s$ for every $M\in\mathcal{M}$ and 
\[ \prod_{F\in\mathcal{F}}(|F|+1)\le 2^{(n-1)/5}.\]
Then there exists a partial coloring $\chi:V\rightarrow\{-1,0,1\}$, such that at least $n/10$ elements of $V$ are colored, $\chi(F)=0$ for every $F\in\mathcal{F}$ and $|\chi(M)|\le O(\sqrt{s\log |\mathcal{M}|})$ for every $M\in\mathcal{M}$.
\end{lemma}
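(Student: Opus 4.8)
The plan is to recover this via the classical \emph{entropy method} of Beck and Spencer. First I would draw a full coloring $\psi \in \{-1,1\}^n$ uniformly at random and attach to it a \emph{signature} $Y = Y(\psi)$ defined as follows: for each $F \in \mathcal{F}$ record the exact value $\psi(F) = \sum_{j \in F}\psi(j)$, which takes at most $|F|+1$ values (it is an integer in $[-|F|,|F|]$ of fixed parity), and for each $M \in \mathcal{M}$ record the index $\lfloor \psi(M)/w\rfloor$ of the length-$w$ interval containing $\psi(M)$, where $w = \lambda\sqrt{s}$ and $\lambda = \Theta(\sqrt{1+\log|\mathcal{M}|})$ is a parameter to be pinned down. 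By subadditivity of Shannon entropy (all logs base $2$),
$H(Y) \le \sum_{F\in\mathcal{F}} \log_2(|F|+1) + \sum_{M\in\mathcal{M}} H(Y_M) \le \tfrac{n-1}{5} + \sum_{M\in\mathcal{M}} H(Y_M)$, where the bound on the first sum is exactly the hypothesis $\prod_{F}(|F|+1)\le 2^{(n-1)/5}$.

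The second step is to control $\sum_M H(Y_M)$ using concentration. Since $\psi(M)$ is a sum of $|M|\le s$ independent $\pm 1$'s, it has standard deviation at most $\sqrt{s}$, so a Chernoff bound gives $\Pr[\psi(M)\text{ lies }k\text{ intervals from the central one}] \le 2\exp(-\Omega(k^2\lambda^2))$; summing $-p\log_2 p$ over the (essentially $O(1)$ heavy, then exponentially light) buckets yields Spencer's entropy estimate $H(Y_M)\le g(\lambda)$ for a function $g$ with $g(\lambda)\to 0$ as $\lambda\to\infty$ (concretely $g(\lambda) = e^{-\Omega(\lambda^2)}$ once $\lambda$ exceeds a small absolute constant). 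Choosing $\lambda$ so that $g(\lambda)\le n/(10|\mathcal{M}|)$ — which forces only $\lambda = O(\sqrt{1+\log(|\mathcal{M}|/n)}) = O(\sqrt{1+\log|\mathcal{M}|})$ — we get $\sum_M H(Y_M)\le n/10$ and hence $H(Y) \le \tfrac{n-1}{5} + \tfrac{n}{10} < \tfrac{3n}{10}$.

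Now I would run the pigeonhole. Because $Y$ is a deterministic function of $\psi$, $H(\psi\mid Y) = H(\psi) - H(Y) \ge n - \tfrac{3n}{10} = \tfrac{7n}{10}$, so averaging over the values of $Y$ produces a signature class $B\subseteq\{-1,1\}^n$ with $\log_2|B| \ge \tfrac{7n}{10}$. On the other hand a Hamming ball of radius $n/10$ in $\{-1,1\}^n$ contains at most $2^{H_2(1/10)\,n} < 2^{0.47n}$ points, where $H_2$ is the binary entropy function. Since $2^{7n/10}$ comfortably exceeds $2^{0.47n}$, $B$ must contain two colorings $\psi_1,\psi_2$ differing in at least $n/10$ coordinates. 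Put $\chi := \tfrac12(\psi_1 - \psi_2)\in\{-1,0,1\}^n$. Then $\chi$ colors at least $n/10$ elements; for every $F\in\mathcal{F}$, $\psi_1(F)=\psi_2(F)$ since they share the $\mathcal{F}$-part of the signature, so $\chi(F)=0$; and for every $M\in\mathcal{M}$, $\psi_1(M),\psi_2(M)$ lie in a common interval of length $w=\lambda\sqrt{s}$, so $|\chi(M)| \le \tfrac12\lambda\sqrt{s} = O(\sqrt{s\log|\mathcal{M}|})$, which is the desired conclusion.

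The hard part — indeed the only genuinely nontrivial ingredient — is Spencer's entropy estimate: quantifying how fast $H(Y_M)$ decays in $\lambda$. This requires a clean two-sided tail bound for the bucket distribution (a binomial Chernoff estimate does it) together with a slightly delicate summation of $\sum_k p_k\log_2(1/p_k)$ over the tail buckets, so that the whole $\mathcal{M}$-contribution can be driven below $n/10$ while $\lambda$ stays $O(\sqrt{\log|\mathcal{M}|})$. Everything else is bookkeeping with the absolute constants ($1/5$ in the hypothesis, $1/10$ for the fraction colored, and $H_2(1/10)\approx 0.469$), which are chosen precisely so that $H(Y)$ stays safely below $n - H_2(1/10)\,n$.
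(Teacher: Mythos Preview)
Your argument is correct and is precisely the standard Beck--Spencer entropy method. Note, however, that the paper does not supply its own proof of this lemma: it is quoted from \cite{Beck81b} purely as a point of comparison with the universal vector colorings of Theorem~\ref{thm:univec}, so there is nothing in the paper to compare your route against. Your write-up is the canonical proof one would find in Spencer's lectures or in \cite{Mat09}; the constants line up (the hypothesis $\prod_F(|F|+1)\le 2^{(n-1)/5}$ is calibrated exactly so that the $\mathcal{F}$-entropy plus the $\mathcal{M}$-entropy stays below $n-H_2(1/10)n$), and the only place one must be slightly careful---the entropy estimate $H(Y_M)=e^{-\Omega(\lambda^2)}$ for the bucketed sub-Gaussian---you have flagged correctly.
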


That is, it gives a partial coloring with zero discrepancy on some $\delta' n$ sets for $\delta' \ll 1$, and 
guarantees an essentially $\sqrt{s}$ discrepancy (ignoring the $\sqrt{\log}$ factor) for the remaining sets in the system.
This is the second property in Theorem \ref{thm:univec}, for $b$ corresponding to the indicator vectors of these sets.
  

The fact that the vector colorings in Theorem \ref{thm:univec} satisfy the second property for {\em every} $b$, is the reason why we call them as Universal Vector Colorings, and this property will play a crucial role in the design of our framework.



\subsection{Organisation of the Paper}
We present the formal statement of our framework and Theorem~\ref{thm:vecgenunified} in Section~\ref{sec:framework}. The main ingredient used in the proof of the framework is Theorem~\ref{thm:univec} concerning Universal Vector Colorings which is proved in Section~\ref{sec:univec}. 
The rest of the proof of Theorem~\ref{thm:vecgenunified} is deferred to Appendix A to keep the focus more on how to apply our framework in a black-box manner to problems in discrepancy.
 
Theorem~\ref{thm:tusnady} concerning Tusnady's problem is proved in Section~\ref{sec:tusnady}.  
Theorem~\ref{thm:polytopes} follows by combining the idea in Theorem~\ref{thm:tusnady} and using the decomposition of polytopes in $POL(H)$ into simpler shapes given in \cite{Mat99}. Due to this, we only provide a sketch of the proof in Appendix B.
Theorem~\ref{thm:stff} and Theorem~\ref{thm:st22} concerning Steinitz problem are proved in Section~\ref{sec:steinitz}. 

\section{Framework}
\label{sec:framework}


We now describe the framework formally. 
Let $B$ be a $m\times n$ matrix with entries $\{a_{ji}\}$. Usually in discrepancy, given a coloring $x\in \{-1,1\}^n$  
one considers the maximum discrepancy $\|Bx\|_\infty$ over each row. However in our setting, it will be more convenient to look at 
the discrepancy  for any row $j$ and any subset $S \subseteq [n]$ of elements. This for example will allow us to directly argue about prefixes of rows of $B$ in the Steinitz problem.
For a coloring $x$, let \[\textrm{disc}(x,j,S) = \sum_{i\in S} a_{ji}  x(i)\] denote this discrepancy for the row-subset pair $(j,S)$.

The overall algorithm will proceed in various time steps (rounds).
As in previous algorithms for discrepancy, we work with fractional colorings where the color of an element $i$ at step $t$ is $x_t(i) \in[-1,1]$. Initially all elements are colored $x_0(i)=0$. Eventually all the variables will reach either $-1$ or $1$.
We call variable $i$ alive at time $t$ if $|x_t(i)|< 1-1/n$, 
and once $|x_t(i)| \ge 1-1/n$, we call this variable frozen and its value is not updated any more.
Let $N(t)=\{i\in[n] : |x_{t-1}(i)| < 1-\frac{1}{n}\}$ denote the set of alive variables at the beginning of time $t$.


\paragraph{Game View:} It is useful to view the framework as a game between a player and a black-box algorithm (not necessarily adversarial).
The goal of the player is to get a low discrepancy coloring at the end. 
At each step $t$,
the player can choose a subset $A(t) \subseteq N(t)$ of elements whose colors are allowed to be updated and also specifies a collection of linear constraints $Z(t)$ on the variables in $A(t)$. Let $w_1,\ldots,w_{\ell} \in \mathbb{R}^{|A(t)|}$ denote the vectors specifying these constraints, where $\ell \leq \delta |A(t)|$ for any fixed constant $\delta <1$.
 
 
The black-box produces some updates $\Delta x_t(i)$ that satisfy the constraints 
\[\sum_{i \in A(t)} w_k(i) \Delta x_t(i) =0 \qquad \forall k=1,\ldots, \ell \]
 and the current coloring is updated as $x_t(i) = x_{t-1}(i) + \Delta x_t(i)$. This process repeats until every element is colored.
We remark that the black-box simply produces a Universal Vector Coloring 
 on the variables in $A(t)$ with constraints $Z(t)$ and applies the standard random projection rounding to get the updates $\Delta x_t(i)$.

For a given problem, suppose one cares about minimizing the discrepancy of certain row-subset pairs $(j,S)$.
The aim of the player will be to choose the sets $A(t)$ and $Z(t)$ adaptively at each step, so that it can ``protect" as much of the pair $(j,S)$ as possible until the of the algorithm.
The final discrepancy of $(j,S)$ will depend on how much of $(j,S)$ could not be protected as stated in Theorem \ref{thm:vecgenunified} below.

\paragraph{Protection and Corruption:} 
We now define what it means to protect an element $i$ with respect to some pair $(j,S)$. Fix some pair $(j,S)$ and fix a time step $t$. An element $i \in S$ is already protected at time $t$ if $i\notin A(t)$.
So it suffices to consider elements in $S\cap  A(t)$. We say a constraint $w_k \in Z(t)$ is {\em eligible} for $S$ if $\textrm{supp}(w_k) 
\subseteq S$, where $\textrm{supp}(w)=\{i: w(i)\neq 0\}$ is the support of $w$. For a subset $S'\subseteq S\cap A(t)$, let $v_{S',j,t}$ denote the vector defined as having $a_{ji}$ in the $i^{th}$ entry if $i\in S'$ and $0$ otherwise.

The player can pick any subset $H \subseteq [\ell]$ such that each $w_k$ for $k\in H$ is eligible for $S$. We say that a subset $S'\subseteq S \cap A(t)$ is protected at time $t$ with respect to $(j,S)$, if
\begin{equation}
\label{eqn:disjointsets}
 \sum_{k \in H} w_{k} = v_{S',j,t} .
\end{equation} 
In other words, the sum of the vectors $w_k$ for $k\in H$ is exactly the vector $v_{S',j,t}$. Any element $i\in S'$ is called {\em protected} for $(j,S)$ at time $t$. An element $i$ is called {\em corrupted} for $(j,S)$ if there was any time $t$ when $i$ was not protected with respect $(j,S)$.

\begin{rem}
\label{rem:relax}
We note that we can relax the condition \eqref{eqn:disjointsets} by letting the vector $v_{S',j,t}$ have non-zero entries on the corrupted elements in $(S \cap A(t) )\setminus S'$; for an element $i\in (S \cap A(t) )\setminus S'$, it suffices to have $| v_{S',j,t}(i) | =O(|a_{ji}|)$. 
\end{rem}

\paragraph{Example:} It is instructive to consider an example. Suppose all the entries $a_{ji}$ are $0$ or $1$ and $S$ is some set whose discrepancy we care about.
Suppose that the constraints $w_k$ also have $0$-$1$ coefficients.
Then we can pick any subset $H \subseteq [\ell]$, such that $\textrm{supp}(w_k) \subseteq S$ and $\textrm{supp}(w_k) \cap \textrm{supp}(w_{k'})=\emptyset$ for any two distinct $k,k' \in H$. An element $i \in S \cap A(t)$ is not protected if $i$ is not contained in any $w_k$ for $k\in H$.

%
%
%


As we shall see in the applications, for a given problem at hand there is usually a natural and simple way to choose the vectors $w_k$ and $H$, and apply the framework. We are now ready to state our main result.

\begin{reptheorem}{thm:vecgenunified}
There is a constant $c>0$ such that given an $m\times n$ matrix $B$ with entries $\{a_{ji}\}$ satisfying $|a_{ji}|\le 1$, then for any row $j$, subset $S \subseteq [n]$ of elements and $\lambda\ge 0$, the coloring $\chi \in \{-1,1\}^n$ returned by the above algorithm satisfies,
\[ \pr \left[ |disc(\chi,j,S)|  \geq c \lambda \left( (\sum_{i \in C_{j,S}} a_{ji}^2 )^{1/2} + \lambda \right)  \right] \leq 2 \exp(-\lambda^2/2) \] 
where $C_{j,S}$ denotes the set of corrupted elements for $(j,S)$. 
\end{reptheorem}


The above bound holds for the full coloring returned by the algorithm, whereas previous works could only ensure such a bound for a partial coloring. 

\paragraph{Application to the Beck-Fiala problem:} As an example, let us see how one can easily recover the main result of \cite{BDG16} to get an $O(\sqrt{t\log n})$ discrepancy bound in the Beck-Fiala setting. Here we have a set system $(V,\mathcal{S})$ such that each point lies in at most $t$ sets. We apply Theorem~\ref{thm:vecgenunified} and choose $A(k)=N(k)$ at all time steps $k$. We choose $Z(k)$ to contain the indicator vector of those sets in $\mathcal{S}$ which contain more than $4t$ alive points at time $k$. As each point lies in at most $t$ sets, $|Z(k)|\le |A(k)|/4$ and $\delta\le 1/4$. We will denote by $w_S$ the indicator vector of set $S$.
At any time $k$, for a set $S_j\in\mathcal{S}$, either $w_{S_j}\in Z(k)$ (and all points in $S_j$ are protected) or $S_j$ has at most $4t$ alive points. Thus there can be at most $4t$ corrupted points for $S_j$. Applying Theorem~\ref{thm:vecgenunified} now with $\lambda=O(\sqrt{\log n})$ gives a discrepancy of $O(\sqrt{t\log n})$ to $S_j$ with high probability (assuming $t>\log n$). 


Interestingly, we can also show that the above algorithm gets a good bound on the $\ell_2$ discrepancy for any subset $S\subseteq [n]$. 


\begin{thm}
\label{thm:l2discrepancy}
There is a constant $c>0$ such that given an $m\times n$ matrix $B$ having each column of length at most one ( $\sum_j a_{ji}^2\le 1$ for all $i$), then for any subset $S \subseteq [n]$ of elements and $\lambda\ge 0$, the coloring returned by the above algorithm satisfies the following bound on the $\ell_2$ discrepancy of $S$,
\[ \Pr \left[ \left(\sum_j disc(\chi,j,S)^2\right)^{1/2}  \geq c \lambda \left( (\sum_j\sum_{i \in C_{j,S}} a_{ji}^2 )^{1/2} + \lambda \right)  \right] \leq 3 \exp(-\lambda^2/2) \] 
where $C_{j,S}$ denotes the set of corrupted elements for $(j,S)$. 
\end{thm}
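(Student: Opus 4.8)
The plan is to run the same martingale analysis that underlies Theorem~\ref{thm:vecgenunified}, but applied to the vector-valued process $D_t:=(\disc(x_t,j,S))_{j\in[m]}\in\mathbb{R}^m$ rather than to a single coordinate, so that the quantity we track is $\|D_t\|_2$ and no union bound over the $m$ rows is ever taken. Recall that at step $t$ the black box computes a Universal Vector Coloring $\{u_i^{(t)}\}_{i\in A(t)}$ for the constraints $Z(t)$ and performs random projection rounding, $\Delta x_t(i)=\gamma_t\langle u_i^{(t)},g_t\rangle$ for a fresh standard Gaussian $g_t$ and a small step size $\gamma_t$. Fix $(j,S)$. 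The first point is that protected elements contribute \emph{nothing} to $\Delta D_t(j)$, and this holds deterministically: if $S'$ is protected at step $t$ via $\sum_{k\in H}w_k=v_{S',j,t}$, then property~(i) of Theorem~\ref{thm:univec} gives $\sum_i w_k(i)u_i^{(t)}=0$ for each $k$, hence $\sum_i v_{S',j,t}(i)\,\Delta x_t(i)=0$. Subtracting this zero and invoking Remark~\ref{rem:relax} (so the surviving coefficients are only perturbed by an $O(1)$ factor) yields
\[
\Delta D_t(j)=\gamma_t\,\langle y_{j,t},g_t\rangle,\qquad y_{j,t}:=\sum_{i\in R_{j,t}}c_{ji}\,u_i^{(t)},
\]
where $R_{j,t}\subseteq S\cap A(t)$ is the set of elements not protected at step $t$ — all of which therefore lie in $C_{j,S}$ — and $|c_{ji}|=O(|a_{ji}|)$. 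So the discrepancy vector of $(j,S)$ is driven purely by corrupted elements.

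Next I would compute the predictable quadratic variation of $\|D_t\|_2^2$. Conditioning on $\mathcal{F}_{t-1}$ and using that $g_t$ is standard Gaussian, $\E[\|\Delta D_t\|_2^2\mid\mathcal{F}_{t-1}]=\gamma_t^2\sum_j\|y_{j,t}\|_2^2$. Property~(ii) of Theorem~\ref{thm:univec}, applied with $b$ supported on $R_{j,t}$ and equal to $(c_{ji})$ there, gives $\|y_{j,t}\|_2^2\le\tfrac1\beta\sum_{i\in R_{j,t}}c_{ji}^2\|u_i^{(t)}\|_2^2=O\!\big(\sum_{i\in R_{j,t}}a_{ji}^2\|u_i^{(t)}\|_2^2\big)$. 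Summing over $j$, swapping the order of summation, and using $R_{j,t}\subseteq C_{j,S}$ together with the column hypothesis $\sum_j a_{ji}^2\le1$, we get $\sum_j\|y_{j,t}\|_2^2=O\!\big(\sum_i\|u_i^{(t)}\|_2^2\,\sigma_i\big)$ with $\sigma_i:=\sum_{j:\,i\in C_{j,S}}a_{ji}^2\le1$. Since each coordinate process $x_\cdot(i)$ is a martingale bounded in $[-1,1]$, its total predictable quadratic variation $\sum_t\gamma_t^2\|u_i^{(t)}\|_2^2$ has expectation $\le1$ (and is $O(1)$ with high probability), whence $\sum_t\E[\|\Delta D_t\|_2^2\mid\mathcal{F}_{t-1}]=O(\sigma^2)$ where $\sigma^2:=\sum_j\sum_{i\in C_{j,S}}a_{ji}^2$ is exactly the variance proxy in the statement. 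A routine concentration argument (as in the proof of Theorem~\ref{thm:vecgenunified}) then shows that $\sum_t\|\Delta D_t\|_2^2$ and $\sum_t\gamma_t^2\sum_j\|y_{j,t}\|_2^2$ are both $O(\sigma^2+\lambda^2)$ except with probability $\exp(-\lambda^2/2)$.

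Finally, for concentration of $\|D_T\|_2$ I would decompose $\|D_T\|_2^2=2\sum_t\langle D_{t-1},\Delta D_t\rangle+\sum_t\|\Delta D_t\|_2^2$; the second sum is handled above, and $X_t:=\sum_{s\le t}\langle D_{s-1},\Delta D_s\rangle$ is a mean-zero martingale whose conditional variance at step $t$ is $\gamma_t^2\|\sum_j D_{t-1}(j)y_{j,t}\|_2^2\le\gamma_t^2\|D_{t-1}\|_2^2\sum_j\|y_{j,t}\|_2^2$ by Cauchy--Schwarz. One then runs Freedman's inequality exactly as for the scalar process in Theorem~\ref{thm:vecgenunified}: truncate the Gaussians at scale $O(\sqrt{\log n})$ so the increments are bounded (the truncated events costing only $\exp(-\lambda^2/2)$), introduce the stopping time $\tau$ at which $\|D_t\|_2$ first exceeds $T^\star:=c\lambda(\sigma+\lambda)$, note that for $t\le\tau$ the accumulated conditional variance of $X$ is $O((T^\star)^2(\sigma^2+\lambda^2))$, and deduce from Freedman that $|X_\tau|\le(T^\star)^2/4$ except with probability $\exp(-\lambda^2/2)$, for $c$ a large enough constant. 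On the intersection of the two good events, $\|D_\tau\|_2^2\le 2|X_\tau|+O(\sigma^2+\lambda^2)<(T^\star)^2$, so $\tau$ never triggers and $\|D_T\|_2\le T^\star$; a union bound over the bad events gives the stated $3\exp(-\lambda^2/2)$.

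The step I expect to be the main obstacle is this last one: the conditional variance (and the per-step size) of the cross-term martingale $X_t$ is itself controlled by the \emph{current} norm $\|D_{t-1}\|_2$, so there is a genuine circular dependence between the quantity being bounded and the bound, which must be broken via the stopping-time bootstrap above. Making the constants line up so that $2|X_\tau|+O(\sigma^2+\lambda^2)$ is strictly below $(T^\star)^2$ while keeping every failure event inside a single $\exp(-\lambda^2/2)$ budget (rather than one per row or per time step) is where the care lies; everything else is bookkeeping on top of Theorems~\ref{thm:univec} and~\ref{thm:vecgenunified}.
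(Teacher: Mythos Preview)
Your proposal is correct and is essentially the paper's own argument. The decomposition $\|D_T\|_2^2 = 2X_T + Q_T$ into a mean-zero ``linear'' martingale $X_t=\sum_{s\le t}\langle D_{s-1},\Delta D_s\rangle$ and a ``quadratic'' part $Q_t=\sum_{s\le t}\|\Delta D_s\|_2^2$ is exactly the paper's splitting $\Delta D(t)^2=\Delta_t L+\Delta_t Q$, and your stopping-time bootstrap to break the circularity (the conditional variance of $X_t$ depends on $\|D_{t-1}\|_2$) is precisely the paper's device of conditioning on the event $G_t=\{\forall t'<t:\ D(t')^2<H\}$ and then arguing that if $D$ ever crosses $H$ there is a \emph{first} crossing time at which $G_t$ still holds.

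Two cosmetic differences worth aligning with the paper. First, the algorithm uses Rademacher vectors $r_t\in\{-1,1\}^n$, not Gaussians, so no truncation step is needed; the increments are already $O(\gamma n)=o(1)$. Second, rather than separately proving high-probability control of the total predictable quadratic variation $\sum_t\gamma^2\sum_j\|y_{j,t}\|_2^2$, the paper subtracts an \emph{energy} potential $E(t)=\sum_j\sum_{i\in C_{j,S}(t)}a_{ji}^2\bigl(x_t(i)^2-x_{t_{ij}}(i)^2\bigr)$ from both $L_t$ and $Q_t$ and works with $Y_t=L_t-\eta E(t)$ (resp.\ $Q_t-\eta E(t)$); this makes $Y_t$ a sequence with $\E_{t-1}[\Delta Y_t]\le -\alpha\,\E_{t-1}[(\Delta Y_t)^2]$, which plugs directly into their Freedman variant (Theorem~\ref{thm:freedman}) and replaces your ``$\sum_t\gamma_t^2\|u_i^{(t)}\|_2^2=O(1)$ with high probability'' step by the deterministic bound $E(t)\le \sum_j\sum_{i\in C_{j,S}}a_{ji}^2$. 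Your Cauchy--Schwarz bound on the conditional variance of $X_t$ and the paper's (apply property~(ii) first, then Cauchy--Schwarz over $j$ coordinatewise) yield the identical estimate $\|D_{t-1}\|_2^2\cdot O\!\bigl(\sum_j\sum_{i\in C_{j,S}(t)}a_{ji}^2\|u_i^t\|_2^2\bigr)$, so there is no loss either way.
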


It should be pointed out that the above bound for $\ell_2$ discrepancy does not follow from the bound on $\ell_\infty$ discrepancy in Theorem~\ref{thm:vecgenunified}. For instance if $\lambda=\sqrt{\log n}$ but each row were to have a much smaller $\ell_2$ norm, Theorem~\ref{thm:vecgenunified} will give a discrepancy of $c\lambda^2=O(\log n)$ to each row rather than $O((\sum_{i\in C_{j,S}} a_{ji}^2\log n)^{1/2})$. This will then give a weaker bound of $O(\sqrt{m}\log n)$ on the $\ell_2$ discrepancy rather than a bound of $O((\sum_j\sum_{i\in C_{j,S}} a_{ji}^2\log n)^{1/2})$ which is always smaller than $O(\sqrt{m\log n})$. Though the same algorithm works for both Theorems, the analysis in the proof of Theorem~\ref{thm:l2discrepancy} needs to be modified to adapt to $\ell_2$ discrepancy.

\section{Universal Vector Colorings}
\label{sec:univec}

In this section, we prove Theorem~\ref{thm:univec}.
To find a vector coloring satisfying conditions (i)-(iii) in Theorem~\ref{thm:univec}, we write constraints on a PSD matrix $X$; the vectors $u_i$ will then be given by the columns of the matrix $U$ obtained by a Cholesky decomposition of $X=U^T U$ i.e. $X_{ij}=\langle u_i,u_j\rangle$. 

Condition (i) in Theorem~\ref{thm:univec} can be encoded as the following SDP constraints:
\[ w_kw_k^T\bullet X=0 \qquad \textrm{ for all }1\le k\le \ell\]
Similarly the first part of condition (iii) can be encoded as
\[ e_ie_i^T\bullet X\le 1 \qquad \textrm{ for all }i\in[n]\]
where $e_i$ is the standard unit vector in the $i$-th coordinate.

The following lemma allows us to write condition (ii) succinctly.
\begin{lemma}
Condition (ii) in Theorem~\ref{thm:univec}
\begin{equation}
\label{pr2}
 \|\sum_i b(i)u_i\|_2^2\le \frac{1}{\beta}\sum_ib(i)^2\|u_i\|_2^2 \qquad \textrm{ for all }b\in\mathbb{R}^n
 \end{equation}
is equivalent to
\begin{equation}
\label{eq:lmi}
 X\preceq \frac{1}{\beta}diag(X)
\end{equation}
where $diag(X)$ is the matrix restricted to the diagonal entries of $X$.
\end{lemma}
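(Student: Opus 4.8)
The plan is to prove the two stated inequalities \eqref{pr2} and \eqref{eq:lmi} are equivalent by unwinding the definitions, using only the relation $X_{ij} = \langle u_i, u_j\rangle$ and the fact that $X \succeq 0$. First I would rewrite the left-hand side of \eqref{pr2} as a quadratic form in $b$: since $\|\sum_i b(i) u_i\|_2^2 = \sum_{i,i'} b(i) b(i') \langle u_i, u_{i'}\rangle = b^T X b$, the inequality \eqref{pr2} becomes $b^T X b \le \tfrac{1}{\beta} \sum_i b(i)^2 \|u_i\|_2^2$. Similarly, $\sum_i b(i)^2 \|u_i\|_2^2 = \sum_i b(i)^2 X_{ii} = b^T \operatorname{diag}(X) b$, where $\operatorname{diag}(X)$ denotes the diagonal matrix with entries $X_{ii}$. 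Hence \eqref{pr2} is literally the statement that $b^T X b \le \tfrac{1}{\beta}\, b^T \operatorname{diag}(X)\, b$ for every $b \in \mathbb{R}^n$.

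Next I would invoke the standard characterization of the Loewner order: for symmetric matrices $P, Q$, one has $P \preceq Q$ if and only if $b^T P b \le b^T Q b$ for all $b \in \mathbb{R}^n$. Applying this with $P = X$ and $Q = \tfrac{1}{\beta}\operatorname{diag}(X)$ immediately gives that \eqref{pr2} is equivalent to $X \preceq \tfrac{1}{\beta}\operatorname{diag}(X)$, which is \eqref{eq:lmi}. I would note that $\operatorname{diag}(X)$, viewed as a matrix, is itself PSD (its diagonal entries $X_{ii} = \|u_i\|_2^2 \ge 0$), so $\tfrac{1}{\beta}\operatorname{diag}(X) - X$ is a genuine symmetric matrix and the linear matrix inequality \eqref{eq:lmi} is well-posed as an SDP constraint; this is the point of the lemma, since it shows condition (ii)—an infinite family of scalar inequalities indexed by $b$—collapses to a single semidefinite constraint.

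There is essentially no obstacle here: the whole argument is the observation that a "for all $b$" quadratic-form inequality is by definition a Loewner inequality. The only mild care needed is bookkeeping—being explicit that $\operatorname{diag}(X)$ is the diagonal matrix (not the vector of diagonal entries) so that both sides of \eqref{eq:lmi} are $n \times n$ symmetric matrices, and observing that the equivalence does not even require $X \succeq 0$ (though that holds anyway since $X = U^T U$). I would write this as two or three lines, remarking afterward that combined with the encodings of conditions (i) and (iii) already given, the full SDP for Theorem~\ref{thm:univec} consists of the constraints $w_k w_k^T \bullet X = 0$, $e_i e_i^T \bullet X \le 1$, $X \preceq \tfrac{1}{\beta}\operatorname{diag}(X)$, together with maximizing (a proxy for) $\operatorname{tr}(X)$ to enforce the lower bound $\sum_i \|u_i\|_2^2 \ge (1-\delta-\beta)n$ in the second part of condition (iii), which will be the substantive part of the proof of Theorem~\ref{thm:univec} proper.
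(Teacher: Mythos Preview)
Your proposal is correct and follows essentially the same approach as the paper: rewrite both sides of \eqref{pr2} as quadratic forms $b^T X b$ and $\tfrac{1}{\beta} b^T \operatorname{diag}(X) b$, then recognize that the ``for all $b$'' inequality is exactly the definition of $\tfrac{1}{\beta}\operatorname{diag}(X) - X \succeq 0$. The paper's proof is the same three lines, only phrased as ``PSD'' rather than ``Loewner order.''
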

\begin{proof}
As $\|u_i\|^2 = X_{ii}$ and the left hand side of \eqref{pr2} is $\sum_{ij} b(i) b(j) X_{ij}$, (\ref{pr2}) can be rewritten as 
\[ b^TXb\le \left(\frac{1}{\beta}\right)b^Tdiag(X) b  \qquad \textrm{ for all }b\in\mathbb{R}^n .\]
Rearranging the terms, this equals
\[ b^T(\left(\frac{1}{\beta}\right)diag(X) -X)b\ge 0\qquad \textrm{ for all }b\in\mathbb{R}^n\]
which is equivalent to the matrix $\frac{1}{\beta}diag(X)-X$ being PSD. 
\end{proof}
Notice that \eqref{eq:lmi} is a Linear Matrix Inequality (LMI) and hence a valid SDP constraint. Finally, we write the SDP as

\begin{eqnarray}
	\textrm{Maximize }tr(X) \nonumber\\
	\textrm{s.t.} \qquad 
	\label{sdpuv1} w_kw_k^T\bullet X &=&0 \qquad \textrm{ for all }1\le k\le \ell \\
	\label{sdpuv2} X &\preceq &\frac{1}{\beta}diag(X)\\
   	\label{sdpuv3} e_ie_i^T\bullet X & \le &  1  \qquad \forall i \in [n]  \\
	\label{sdpuv4} X &\succeq &0 \nonumber
\end{eqnarray}
The objective function of this SDP is to maximize the trace of $X$ which is the same as maximizing $\sum_i \|u_i\|_2^2$. The SDP is feasible as $X=0$ satisfies all the constraints. Hence to prove Theorem~\ref{thm:univec}, we only need to show that the above SDP has value at least $(1-\delta-\beta)n$. To do this, we look at its dual program which is:
\begin{eqnarray}
	\textrm{Minimize }\sum_{i\in [n]}q_i \nonumber\\
	\textrm{s.t.} \qquad 
	\label{dualuv1} \sum_k \eta_k w_kw_k^T+G&-&\frac{1}{\beta}diag(G)+\sum_{i}q_ie_ie_i^T  \succeq I \\
   	\eta&\in&\mathbb{R} \label{dualuv3} \\
	G &\succeq &0 \label{dualuv6}\\
	q &\ge &0 \label{dualuv4} 
\end{eqnarray}
Here we have the dual variables $\eta_k$ and $q_i$ corresponding to the constraints (\ref{sdpuv1}),(\ref{sdpuv3}) respectively of the primal SDP, and we have a PSD matrix $G$ as the ``dual variable" for constraint (\ref{sdpuv2}). 

We show below that strong duality holds for this SDP, for which we use the following result. 
\begin{thm}[Theorem 4.7.1, \cite{GM12}]
\label{thm:duality}
If the primal program $(P)$ is feasible, has a finite optimum value $\eta$ and has an interior point $\tilde{x}$, then the dual program $(D)$ is also feasible and has the same finite optimum value $\eta$.
\end{thm}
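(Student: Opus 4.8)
The statement is the standard strong-duality theorem for semidefinite programs under a Slater (interior-point) condition, and I would prove it by a separating-hyperplane argument. First I would put $(P)$ in standard form, $(P):\ \eta=\sup\{\langle C,X\rangle:\ \mathcal A(X)=b,\ X\succeq 0\}$, where $\mathcal A:\mathbb S^n\to\mathbb R^m$ is linear and, without loss of generality, surjective (any SDP, including one carrying inequality constraints such as the SDP used here, reduces to this form by adjoining PSD slack blocks). Its dual is $(D):\ \inf\{b^\top y:\ \mathcal A^*(y)-C\succeq 0\}$, and the hypothesis supplies a strictly feasible $\tilde X\succ 0$ with $\mathcal A(\tilde X)=b$. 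The warm-up step is weak duality: for primal-feasible $X$ and dual-feasible $y$, $b^\top y-\langle C,X\rangle=\langle\mathcal A^*(y)-C,\,X\rangle\ge 0$ since the two matrices are PSD, so $\eta\le\mathrm{val}(D)$; by hypothesis $\eta$ is finite.

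For the reverse inequality I would work with the convex \emph{value set}
\[ \mathcal C=\bigl\{(z,w)\in\mathbb R^m\times\mathbb R\ :\ \exists\,X\succeq 0 \text{ with } \mathcal A(X)=z,\ w\le\langle C,X\rangle\bigr\}, \]
which is convex, being the downward relaxation in the last coordinate of the image of the cone $\mathbb S^n_+$ under a linear map. By definition of $\eta$ as a supremum, $(b,\eta+\varepsilon)\notin\mathcal C$ for every $\varepsilon>0$, hence $(b,\eta)\notin\mathrm{int}\,\mathcal C$; and surjectivity of $\mathcal A$ together with $\tilde X\succ 0$ (so $\tilde X+D\succeq 0$ for all small symmetric $D$) makes $\mathcal C$ full-dimensional, so a genuine supporting hyperplane at $(b,\eta)$ exists. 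This yields $(y_0,\mu)\ne 0$ with $\langle y_0,z\rangle+\mu w\le\langle y_0,b\rangle+\mu\eta$ for all $(z,w)\in\mathcal C$.

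I would then extract the dual solution. Sending $w\to-\infty$ forces $\mu\ge 0$. The decisive use of the interior point is to exclude $\mu=0$: were $\mu=0$, then $\langle y_0,\mathcal A(X)\rangle\le\langle y_0,b\rangle$ for all $X\succeq 0$; taking $X=\tilde X\pm tD$ for small $t$ and arbitrary symmetric $D$ gives $\langle y_0,\mathcal A(D)\rangle=0$ for all $D$, i.e. $\mathcal A^*(y_0)=0$, and surjectivity of $\mathcal A$ then forces $y_0=0$, contradicting $(y_0,\mu)\ne 0$. So $\mu>0$, and I rescale to $\mu=1$. Choosing $w=\langle C,X\rangle$ in the supporting inequality gives $\langle\mathcal A^*(y_0)+C,\,X\rangle\le\langle y_0,b\rangle+\eta$ for every $X\succeq 0$, and since $X$ may be scaled by arbitrary positive constants the left-hand side must be $\le 0$ on all of $\mathbb S^n_+$; hence $\mathcal A^*(y_0)+C\preceq 0$, so $\hat y:=-y_0$ satisfies $\mathcal A^*(\hat y)-C\succeq 0$ and is dual feasible. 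Plugging $X=0$ gives $b^\top\hat y=-\langle y_0,b\rangle\le\eta$. Combined with weak duality ($b^\top\hat y\ge\mathrm{val}(D)\ge\eta$), this shows $(D)$ is feasible, $\mathrm{val}(D)=\eta=\mathrm{val}(P)$, and the dual optimum is attained at $\hat y$.

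The part I expect to be the main obstacle is precisely the convex-geometry bookkeeping, which is where the two hypotheses enter: showing $\mathcal C$ has nonempty interior — so that a true supporting hyperplane, not merely a weakly separating one, is available — uses finiteness of $\eta$ and the surjectivity normalization, while ruling out the degenerate multiplier $\mu=0$ is exactly Slater's condition furnished by $\tilde X$. Without an interior point one obtains only a possibly improper certificate ($\mu=0$), which is the source of the familiar SDP duality gap and non-attainment phenomena; the renormalization $\mu=1$ is what converts the supporting hyperplane into an honest dual-feasible point of matching value. Everything else — weak duality, the two scaling arguments, and the reduction of a general SDP to standard form — is routine.
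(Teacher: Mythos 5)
The paper does not prove this statement at all: it is quoted verbatim as Theorem 4.7.1 of the cited textbook [GM12] and used as a black box, so there is no internal proof to compare against. Your argument is the standard Slater-condition strong-duality proof via a separating hyperplane applied to the value set $\mathcal C$, which is essentially the argument in the cited reference, and I find it correct: weak duality, the exclusion of $w\to-\infty$ forcing $\mu\ge 0$, the use of the strictly feasible $\tilde X$ (plus surjectivity of $\mathcal A$, hence injectivity of $\mathcal A^*$) to rule out $\mu=0$, the cone-scaling step giving $\mathcal A^*(y_0)+C\preceq 0$, and the evaluation at $X=0$ all check out, and you even get dual attainment, which is slightly more than the stated theorem. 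The only place you are a bit glib is the ``without loss of generality'' reduction to standard form: the program to which the paper applies the theorem has free variables $\eta_k$ and a matrix variable $G$ entering another LMI, and if you split a free variable as a difference of two nonnegative parts you must take both parts strictly positive to keep a Slater point (the natural choice $a_k=\eta_k^+$, $b_k=\eta_k^-$ puts you on the boundary). A cleaner fix is to note that your separation argument never uses anything specific to $\mathbb S^n_+$ beyond its being a closed convex cone with nonempty interior, so it runs verbatim in conic form over the product cone (PSD blocks $\times$ nonnegative orthant $\times$ free subspace), where the paper's explicit point $G=I$, $q=\frac{1+\epsilon}{\beta}\mathbf 1$, $\eta=0$ is strictly feasible; with that remark the proof is complete.
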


\begin{lemma}
The SDP described above is feasible and has value equal to its dual program.
\end{lemma}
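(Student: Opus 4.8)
The plan is to verify primal feasibility and boundedness directly, and then obtain equality of the two optimal values from a Slater-type strong duality theorem (Theorem~\ref{thm:duality}), applied to the \emph{dual} program rather than to the primal.

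Primal feasibility and finiteness are immediate: $X=0$ satisfies all of (\ref{sdpuv1})--(\ref{sdpuv4}), and constraint (\ref{sdpuv3}) forces $X_{ii}\le 1$ for each $i$, so $tr(X)\le n$ and the primal optimum lies in $[0,n]$. The subtle point is that Theorem~\ref{thm:duality} cannot be invoked on the primal side, because the primal has \emph{no} interior point: the equality constraints (\ref{sdpuv1}) say $w_k^\top X\,w_k = w_kw_k^\top\bullet X = 0$, so every feasible $X$ is singular as soon as some $w_k\neq 0$, and hence $X\succ 0$ is impossible. I would instead exhibit an interior point of the \emph{dual} feasible region: take $\eta=0$, $G=I$, and $q_i=M$ for a large constant $M$. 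Then, using $diag(I)=I$ and $\sum_i q_i e_ie_i^\top = MI$, the left-hand side of (\ref{dualuv1}) equals $\big(M+1-\tfrac{1}{\beta}\big)I$, which is $\succ I$ once $M>1/\beta$; moreover $q>0$ coordinatewise and $G=I\succ 0$, so all of the dual's conic constraints hold strictly. Thus the dual is strictly feasible, and since its objective $\sum_i q_i$ is bounded below by $0$ (as $q\ge 0$) while the program is feasible, the dual optimum is finite.

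Finally I would apply Theorem~\ref{thm:duality} with the dual program (``minimize $\sum_i q_i$'') in the role of the primal $(P)$ there: it is feasible, has a finite optimum, and has an interior point, so the theorem yields that its dual --- which, by the construction of the dual above, is exactly the SDP (\ref{sdpuv1})--(\ref{sdpuv4}) --- is feasible and has the same optimal value. This is the assertion of the lemma; as a by-product, strict feasibility of the dual also guarantees that the primal optimum is \emph{attained} (by some $X^\star$), which is what is needed when we later analyze the optimal solution. The one place to be careful is the sign bookkeeping: since $\beta\in(0,1)$ we have $1/\beta>1$, so the term $G-\frac{1}{\beta}diag(G)$ is negative semidefinite when $G$ is a positive multiple of the identity; this is why one cannot simply take $q=0$ and $G$ large, and is the reason the $\sum_i q_i e_ie_i^\top$ block is needed to push the LMI strictly above $I$. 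I expect this sign issue, together with correctly recognizing that Slater must be checked on the dual, to be the only real obstacles; everything else is bookkeeping.
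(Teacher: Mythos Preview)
Your proposal is correct and follows essentially the same approach as the paper: both apply Theorem~\ref{thm:duality} to the \emph{dual} program and exhibit the strict interior point $\eta=0$, $G=I$, $q_i$ a large constant (the paper takes $q_i=(1+\epsilon)/\beta$, you take $q_i=M>1/\beta$). Your additional remarks --- that the primal has no interior point because of the equalities~(\ref{sdpuv1}), and that dual strict feasibility yields primal attainment --- are correct elaborations, but the core argument is the same.
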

\begin{proof}
We apply Theorem~\ref{thm:duality}, with $P$ equal to the dual of the SDP. This would suffice as the dual $D$ of $P$ is our SDP. 

We claim that the following solution is a feasible interior point: $q_i=\frac{1+\epsilon}{\beta}$ for all $i\in [n]$ and $\epsilon>0$, $\eta_{k}=0$ for all $1\le k\le \ell$ and $G=I$.

It is clearly feasible as it satisfies the constraints \eqref{dualuv1}-\eqref{dualuv4}.
Moreover each of the constraints \eqref{dualuv1}-\eqref{dualuv4} is satisfied with strict inequality and hence this solution is an interior point.
%
As this point has objective value at most $\frac{1+\epsilon}{\beta}n$ and as the $q_i$'s are non-negative, $P$ has a finite optimum value. 
\end{proof}

As strong duality holds, to prove Theorem~\ref{thm:univec} it suffices to prove that for any feasible solution to the dual program, the objective value of the dual program is at least $(1-\delta-\beta)n$. The following lemmas will be useful.

\begin{lemma}
\label{col1matr}
Given an $m\times n$ matrix $M$ with columns $m_1,m_2,\dots,m_n$. If $\|m_i\|_2\le 1$ for all $i\in[n]$, then for any $\beta\in(0,1]$, there exists a subspace $W$ of $\mathbb{R}^n$ satisfying:
\begin{enumerate}[i)]
\item $dim(W)\ge (1-\beta)n $, and
\item $\forall y\in W$, $\|My\|_2^2 \le (\frac{1}{\beta})\|y\|_2^2$
\end{enumerate}
\end{lemma}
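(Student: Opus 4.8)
The plan is to prove Lemma~\ref{col1matr} by a spectral/counting argument on the matrix $M^TM$. First I would observe that $M^TM$ is PSD, and its trace equals $\sum_i \|m_i\|_2^2 \le n$ since each column has norm at most $1$. Let $\mu_1 \ge \mu_2 \ge \cdots \ge \mu_n \ge 0$ be the eigenvalues of $M^TM$. The condition $\|My\|_2^2 \le (1/\beta)\|y\|_2^2$ for all $y$ in a subspace $W$ is exactly the statement that $y^T(M^TM)y \le (1/\beta)\|y\|_2^2$ on $W$, so the natural candidate for $W$ is the span of the eigenvectors of $M^TM$ whose eigenvalue is at most $1/\beta$.

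The core step is then a dimension count: I claim the number of eigenvalues of $M^TM$ that exceed $1/\beta$ is at most $\beta n$. Indeed, if $k$ eigenvalues satisfy $\mu_i > 1/\beta$, then $\operatorname{tr}(M^TM) = \sum_i \mu_i \ge k \cdot (1/\beta)$, and combining with $\operatorname{tr}(M^TM) \le n$ gives $k/\beta \le n$, i.e. $k \le \beta n$. Hence the eigenspace $W$ associated to eigenvalues $\mu_i \le 1/\beta$ has dimension at least $n - \beta n = (1-\beta)n$, establishing property (i). For property (ii), write any $y \in W$ in the orthonormal eigenbasis as $y = \sum_{i : \mu_i \le 1/\beta} c_i v_i$; then $\|My\|_2^2 = y^T(M^TM)y = \sum_i \mu_i c_i^2 \le (1/\beta)\sum_i c_i^2 = (1/\beta)\|y\|_2^2$, as required.

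I do not anticipate a serious obstacle here — the argument is a clean application of the variational characterization of eigenvalues together with the trace bound. The only mild care needed is handling the boundary case $\mu_i = 1/\beta$ (include those eigenvectors in $W$, which only helps the dimension count while still satisfying the norm bound with equality) and noting that $\beta n$ need not be an integer, so $k \le \lfloor \beta n \rfloor$ and $\dim(W) \ge n - \lfloor \beta n \rfloor \ge (1-\beta)n$. One could alternatively phrase the whole thing via the Courant--Fischer min-max theorem applied to the subspace spanned by the bottom $n - \lfloor \beta n\rfloor$ eigenvectors, but the direct trace argument above is the most transparent. This lemma will presumably be used to exhibit a dual-feasible certificate or, more likely, to construct a good primal solution to the SDP \eqref{sdpuv1}--\eqref{sdpuv4} by taking $U$ to act nontrivially only on such a subspace $W$, so that condition (ii) of Theorem~\ref{thm:univec} holds automatically.
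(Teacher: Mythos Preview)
Your proof is correct and is essentially the same as the paper's: the paper phrases it via the singular value decomposition $M=\sum_i \sigma_i p_i q_i^T$ and takes $W=\mathrm{span}\{q_1,\dots,q_{\lceil(1-\beta)n\rceil}\}$, but since the $\sigma_i^2$ are precisely the eigenvalues of $M^TM$ and the $q_i$ its eigenvectors, your eigendecomposition argument is the same computation. (One small remark: the lemma is actually used on the dual side, via Lemma~\ref{diagdom}, to lower-bound the dual objective, rather than to build a primal feasible point directly.)
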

\begin{proof}
Let the singular value decomposition of $M$ be given by $M=\sum_{i=1}^n \sigma_i p_i q_i^T$, where $0\le \sigma_1\le\dots\le\sigma_n$ are the singular values of $M$ and $\{p_i : i\in [n]\}, \{q_i : i\in [n]\}$ are two sets of orthonormal vectors.
Then,
\[
\sum_{i=1}^n\sigma_i^2 =\textrm{Tr}[\sum_{i=1}^n \sigma_i^2 q_iq_i^T]=\textrm{Tr}[M^TM]=\sum_{i=1}^n \|m_i\|_2^2 \le n
\]
So at least $\lceil(1-\beta)n\rceil$ of the squared singular values $\sigma_i^2$s have value at most $(1/\beta)$, and thus $\sigma_1\le\dots\le\sigma_{\lceil(1-\beta)n\rceil} \le \sqrt{(1/\beta)}$.
Let $W=\textrm{span}\{q_1,\dots,q_{\lceil(1-\beta)n\rceil}\}$. For $y\in W$,
\begin{eqnarray}
\|My\|_2^2 & = & \|\sum_{i=1}^n \sigma_ip_iq_i^T y\|_2^2 
  =   \|\sum_{i=1}^{\lceil(1-\beta)n\rceil} \sigma_ip_iq_i^T y\|_2^2 \nonumber\\
 & = & \sum_{i=1}^{\lceil(1-\beta)n\rceil} \sigma_i^2 (q_i^Ty)^2  \qquad \textrm{(since $p_i$ are orthonormal)} \nonumber\\
 &\le & \frac{1}{\beta}\sum_{i=1}^{\lceil(1-\beta)n\rceil}(q_i^Ty)^2 \le \frac{1}{\beta}\|y\|_2^2 \qquad \textrm{(since $q_i$ are orthonormal)} \nonumber
\end{eqnarray}
\end{proof}

\begin{lemma}
\label{diagdom}
Given an $m\times n$ PSD matrix $G$ and $\beta\in(0,1]$, there exists a subspace $W\subseteq \mathbb{R}^n$ satisfying
\begin{enumerate}
\item $dim(W)\ge (1-\beta) n$
\item $\forall w\in W$, $w^TGw\le \frac{1}{\beta}w^Tdiag(G)w$
\end{enumerate}
\end{lemma}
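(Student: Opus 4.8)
The plan is to reduce this statement to Lemma~\ref{col1matr}, which establishes essentially the same dimension/norm tradeoff but for a matrix $M$ with columns of norm at most one rather than for a general PSD matrix $G$ controlled by its diagonal. The natural bridge is the normalization that turns $w^T G w \le \frac1\beta w^T \mathrm{diag}(G) w$ into a bound of the form $\|M y\|_2^2 \le \frac1\beta \|y\|_2^2$ after a diagonal change of variables. Concretely, write $G = N^T N$ (Cholesky or any square-root factorization), so that $G_{ii} = \|n_i\|_2^2$ where $n_i$ is the $i$-th column of $N$, and $\mathrm{diag}(G)$ is the diagonal matrix $D$ with entries $D_{ii} = \|n_i\|_2^2$.

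First I would split off the degenerate coordinates: if $G_{ii} = 0$ for some $i$, then since $G \succeq 0$ the entire $i$-th row and column of $G$ vanish, so those coordinates can be thrown into $W$ for free (they contribute nothing to either side of the inequality) and the problem restricts to the coordinates with $G_{ii} > 0$. On that reduced index set, let $D^{1/2}$ be the diagonal matrix with entries $\sqrt{G_{ii}}$, which is now invertible, and set $M := N D^{-1/2}$. Then column $i$ of $M$ is $n_i / \|n_i\|_2$, which has norm exactly one, so Lemma~\ref{col1matr} applies to $M$: there is a subspace $W' \subseteq \mathbb{R}^n$ (on the reduced coordinates) with $\dim(W') \ge (1-\beta)n'$ and $\|M z\|_2^2 \le \frac1\beta \|z\|_2^2$ for all $z \in W'$. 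Now substitute $z = D^{1/2} w$, i.e.\ take $W := D^{-1/2} W'$, which has the same dimension. For $w \in W$ we get
\[
 w^T G w = w^T N^T N w = \|N w\|_2^2 = \|M D^{1/2} w\|_2^2 \le \tfrac1\beta \|D^{1/2} w\|_2^2 = \tfrac1\beta w^T D w = \tfrac1\beta w^T \mathrm{diag}(G) w,
\]
which is exactly condition 2. Adding back the zero-diagonal coordinates (which form a subspace of dimension $n - n'$ on which both sides are zero) yields a subspace of dimension at least $(1-\beta)n' + (n-n') \ge (1-\beta)n$, giving condition 1.

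I do not expect any genuine obstacle here; the only point requiring a little care is the handling of the zero-diagonal coordinates, since the change of variables $w \mapsto D^{1/2} w$ is only invertible after they are removed, and one has to check that re-incorporating them does not damage the norm inequality (it does not, because $G$ annihilates those coordinates). A cosmetic alternative that avoids the case split entirely is to perturb $G$ to $G + \epsilon I$, apply the invertible argument, and take $\epsilon \to 0$ using compactness of the Grassmannian; but the direct splitting is cleaner and self-contained given that Lemma~\ref{col1matr} is already in hand.
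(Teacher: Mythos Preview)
Your proposal is correct and mirrors the paper's own proof essentially step for step: factor $G=U^TU$, split off the zero-diagonal coordinates (where the corresponding columns of $U$ vanish), normalize the remaining columns by $\mathrm{diag}(G)^{-1/2}$ to get a matrix with unit columns, apply Lemma~\ref{col1matr}, and pull the resulting subspace back through $\mathrm{diag}(G)^{-1/2}$ before re-adjoining the zero coordinates. The only addition is your remark about the perturbation/Grassmannian alternative, which the paper does not mention.
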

\begin{proof}
Let $U$ be such that $G=U^TU$. Then, $w^TGw\le \frac{1}{\beta}w^Tdiag(G)w$ is equivalent to
\[ \|Uw\|_2^2\le\frac{1}{\beta}\|\sqrt{diag(G)}w\|_2^2.\]
Let $N \subseteq [n]$ be the set of coordinates $i$ with $G_{ii} > 0$. We claim that it suffices to focus on the coordinates in $N$.
Let us first observe that if $i \notin N$, i.e.~$G_{ii}=0$, the $i^{th}$ column of $U$ must be identically zero, and we have
\[ \|Ue_i\|_2^2= \frac{1}{\beta}\|\sqrt{diag(G)}e_i\|_2^2=0.\]
As the directions $e_i$ for $i\in N$ are orthogonal to the directions in $[n]\setminus N$, it 
suffices to show that there is a $(1-\beta)|N|$ dimensional subspace $W$ in $\textrm{span}\{e_i:i\in N\}$ such that $ \|Uw\|_2^2 - (1/\beta)  \|\sqrt{diag(G)}w\|_2^2 \leq 0$ for each $w \in W$.
The overall subspace we desire is simply $W \oplus \textrm{span}\{e_i: i \in [n]\setminus N\}$ which has dimension $(1-\beta)|N| + (n-|N|) \geq (1-\beta)n$.

So, let us assume that $N=[n]$ (or equivalently restrict $G$ and $U$ to columns in $N$), which gives us that $G_{ii} > 0$ for all $i\in N$ and hence that $diag(G)$ is invertible.

Let $\tilde{U}=U diag(G)^{-1/2} $. The $\ell_2$-norm of each column in $\tilde{U}$ is $1$, and by Lemma \ref{col1matr}, there is a subspace $\tilde{W}$ of dimension at least $(1-\beta)|N|$ such that $\|\tilde{U}\tilde{y}\|_2^2 \leq  \frac{1}{\beta} \|\tilde{y}\|_2^2$ for each $\tilde{y}\in \tilde{W}$.
Setting $y=diag(G)^{-1/2}\tilde{y}$ gives
 \[ \|U y \|_2^2 =  \|\tilde{U}\tilde{y}\|_2^2 \leq \frac{1}{\beta} \|\tilde{y}\|_2^2 = \frac{1}{\beta} \|\sqrt{diag(G)}y\|_2^2, \]  
and thus $W = \{diag(G)^{-1/2}\tilde{y}: \tilde{y}\in \tilde{W}\}$ gives the desired subspace as $\textrm{dim}(W)= \textrm{dim}(W')$. 

\end{proof}

We now show the main result of this section which will imply Theorem~\ref{thm:univec}. 
\begin{thm}
The optimum value of the dual program is at least $(1-\delta-\beta)n$.
\end{thm}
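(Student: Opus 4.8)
The goal is to lower bound the dual objective $\sum_i q_i$ for any feasible dual solution $(\eta, G, q)$ satisfying
\[ \sum_k \eta_k w_k w_k^T + G - \tfrac{1}{\beta}\, \mathrm{diag}(G) + \sum_i q_i e_i e_i^T \succeq I. \]
The plan is to exhibit a suitable low-dimensional "test subspace" $W$ on which the first three terms contribute nothing positive (or at most zero), so that on $W$ the remaining term $\sum_i q_i e_i e_i^T$ must already dominate $I$, forcing $\sum_i q_i$ to be large.

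\textbf{Step 1: kill the $w_k$ terms.} The constraints $w_1,\dots,w_\ell$ span a subspace of dimension at most $\ell = \delta n$, so their orthogonal complement $W_1 = \{y : \langle w_k, y\rangle = 0 \ \forall k\}$ has dimension at least $(1-\delta)n$. On $W_1$, for any $y$ we have $y^T\big(\sum_k \eta_k w_k w_k^T\big) y = \sum_k \eta_k \langle w_k, y\rangle^2 = 0$, regardless of the signs of the $\eta_k$. This is exactly why we do not need $\eta \geq 0$.

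\textbf{Step 2: kill the $G - \tfrac1\beta \mathrm{diag}(G)$ term.} Apply Lemma~\ref{diagdom} to the PSD matrix $G$ with parameter $\beta$: there is a subspace $W_2$ of dimension at least $(1-\beta)n$ such that $w^T G w \le \tfrac1\beta w^T \mathrm{diag}(G) w$ for every $w \in W_2$, i.e.\ $w^T\big(G - \tfrac1\beta \mathrm{diag}(G)\big) w \le 0$ on $W_2$. Intersecting, $W := W_1 \cap W_2$ has dimension at least $(1-\delta-\beta)n$, and on $W$ both the $w_k$-term and the $G$-term contribute at most $0$ to the quadratic form.

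\textbf{Step 3: conclude.} For any unit vector $y \in W$, feeding $y$ into the LMI and using Steps 1--2,
\[ \sum_i q_i\, y(i)^2 \ \ge\ y^T I y - 0 - 0 \ =\ 1. \]
Now pick an orthonormal basis $y_1, \dots, y_r$ of $W$ with $r = \dim W \ge (1-\delta-\beta)n$. Summing the above over this basis,
\[ \sum_{a=1}^r \sum_i q_i\, y_a(i)^2 \ \ge\ r, \]
and the left side equals $\sum_i q_i \sum_{a} y_a(i)^2 = \sum_i q_i \|\Pi_W e_i\|_2^2 \le \sum_i q_i$, since $\Pi_W$ is an orthogonal projection and $q_i \ge 0$. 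Hence $\sum_i q_i \ge r \ge (1-\delta-\beta)n$, which is the claimed bound.

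\textbf{Main obstacle.} The conceptual crux is Step 2 --- controlling the asymmetric contribution of the "dual variable" $G$ for the LMI constraint \eqref{sdpuv2}; but that work has already been packaged into Lemma~\ref{diagdom} (via Lemma~\ref{col1matr}, a singular-value counting argument). Given those lemmas, the only care needed is bookkeeping: ensuring that the subspace used for the $w_k$'s and the one used for $G$ can be intersected with only an additive $\delta n + \beta n$ loss in dimension, and that summing the LMI over an orthonormal basis of $W$ (rather than over $e_i$'s) is what produces $\sum_i q_i$ on the nose rather than something larger. No genuinely hard estimate remains.
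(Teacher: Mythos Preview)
Your proof is correct and follows essentially the same approach as the paper's. The only cosmetic differences are that the paper phrases Step~3 as taking the trace inner product of the dual LMI with the projection matrix $M$ onto $W$ (rather than summing over an orthonormal basis, which is the same computation), and in Step~1 the paper works with the kernel of $C=\sum_k\eta_k w_kw_k^T$ rather than the orthogonal complement of $\mathrm{span}\{w_k\}$, but both yield the same dimension bound.
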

\begin{proof}
Consider some feasible solution specified by $\eta, G,q$.
Let $C=\sum_k \eta_k w_kw_k^T$ and let $Q$ be a diagonal matrix with $i^{th}$ diagonal entry equal to $q_i$ i.e. $Q=\sum_i q_ie_ie_i^T$. 
As $C$ is the sum of $\ell=\delta n$ rank one matrices, $rank(C) \leq \delta n$. 
Let $C_{\perp}$ denote the subspace orthogonal to $C$ and let $W'=W\cap C_{\perp}$ where $W$ is the subspace obtained by Lemma~\ref{diagdom} applied to $G$. Then, 
\[ dim(W')\ge dim(W)-dim(C)\ge (1-\beta) n-\delta n=(1-\beta-\delta)n\]

Let $p_1,\dots,p_d$ with $d=dim(W')$ form an orthonormal basis of $W'$ and let $M=\sum_{i=1}^d p_ip_i^T$ denote the projection matrix onto the span of $W'$. Taking the trace inner product of \eqref{dualuv1} with $M$, we get
\begin{eqnarray*}
I\bullet M & \le&  (C+G-\frac{1}{\beta}diag(G)+Q)\bullet M \\
&\le &  Q\bullet M
\end{eqnarray*}
where the inequality uses $C\bullet M=0$ (as $W' \subseteq C_{\perp}$) and $(G-\frac{1}{\beta}diag(G))\bullet M \le 0$ (as $W' \subseteq W$ and using Lemma~\ref{diagdom}). Taking trace on both sides,
\[ tr(Q\bullet M)\ge tr(I\bullet M)=rank(I\bullet M)=dim(W')\ge (1-\beta-\delta)n .\]
Notice now that 
\[tr(Q\bullet M)=\sum_i{q_i m_{ii}}=\sum_i q_i e_i^T Me_i\le \sum_i q_i \|e_i\|_2^2=\sum_i q_i\]
where the inequality follows since $M$ is a projection matrix. This completes the proof.
\end{proof}

\section{Tusnady's Problem}
\label{sec:tusnady}




Given a set $P$ of $n$ points in $\mathbb{R}^d$, let us first observe that only $n^{2d}$ distinct axis-parallel boxes matter. This is because any
axis-parallel box can be shrunk to have a point of $P$ on all of its $(n-1)$-dimensional facets while not changing the set of points contained in that box. Because there are $2d$ such facets and there is a unique axis-parallel box having a fixed set of $2d$ points on its facets, there can be at most $n^{2d}$ distinct axis-parallel boxes.


The proof of Theorem~\ref{thm:tusnady} follows 
directly by applying the framework in Theorem~\ref{thm:vecgenunified} to the previous proof \cite{Mat09}.
In particular,
the previous proof uses a construction of \textit{canonical boxes} stated below (for completeness we provide the full construction in Appendix C), and applies the partial coloring method while requiring that the canonical boxes incur zero discrepancy. The errors then add up over the $O(\log n)$ phases. 
For our better bound, we also choose the linear constraints $Z(t)$ as the incidence vector of these canonical boxes, but roughly speaking, our errors only add up in an $\ell_2$ manner, instead of in an $\ell_1$ manner.

\begin{lemma}
\label{lem:canonical}
\emph{(Canonical Boxes in $\mathbb{R}^d$)}
For any set $P$ of $n$ points in $\mathbb{R}^d$, there exists a (constructible) collection $\mathcal{B}(P)$ of $n/8$ axis-parallel boxes such that any axis-parallel box $R$ in $\mathbb{R}^d$ can be expressed as a disjoint union of some boxes from $\mathcal{B}(P)$ and a small region $R'\subseteq R$, where $R'$ contains at most $O_d(\log^{2d-2} n)$ points of $P$.
\end{lemma}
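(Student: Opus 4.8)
The plan is to build the collection $\mathcal{B}(P)$ by a recursive dyadic decomposition of $\mathbb{R}^d$, coordinate by coordinate, and to show that an arbitrary axis-parallel box $R=\prod_{a=1}^d (\ell_a,r_a]$ can be written as a disjoint union of canonical boxes plus a ``boundary'' region $R'$ that is thin along every axis. First I would recall the one-dimensional case. Sort the points of $P$ by their first coordinate; a \emph{canonical interval} in coordinate $1$ is any interval of the form ``the $i$-th through $j$-th point'' where $[i,j]$ is a dyadic block, i.e.\ $i = k\cdot 2^s+1$, $j=(k+1)2^s$. There are $O(n)$ such intervals (a binary tree over the $n$ points), and any interval of the point set is the disjoint union of $O(\log n)$ canonical intervals. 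The subtlety compared to the textbook account is that here we want the \emph{number of canonical boxes used in the tree itself} to be $n/8$, not $O(n)$; this is handled by only going partway down the recursion (stopping when blocks have size below a threshold) and by charging: each of the $d$ coordinates contributes a factor, so I would pick the stopping threshold so that the product of the per-coordinate family sizes is at most $n/8$, and absorb the leftover ``small'' blocks into $R'$.

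The key step is the recursive construction in $d$ dimensions, which I would carry out as in \cite{Mat09}: split $P$ by the median of coordinate $1$ into two halves, recurse within each half on coordinates $2,\dots,d$, and declare a box \emph{canonical} if it is a product of a canonical interval in coordinate $1$ (w.r.t.\ the current subproblem) with a canonical box of the $(d-1)$-dimensional subproblem. Given a query box $R$, in coordinate $1$ the slab $(\ell_1,r_1]$ decomposes into $O(\log n)$ canonical intervals plus two ``fringe'' pieces near $\ell_1$ and $r_1$ each spanning fewer than the threshold number of points; on each of the $O(\log n)$ fully-resolved coordinate-$1$ blocks we recurse on the $(d-1)$-dimensional query box $\prod_{a\ge 2}(\ell_a,r_a]$, which by induction is a disjoint union of $(d-1)$-dimensional canonical boxes plus a region containing $O_d(\log^{2(d-1)-2} n)$ points \emph{of that block}. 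The two fringe pieces in coordinate $1$ are dumped wholesale into $R'$.

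Now I would bound $|R' \cap P|$. Writing $T(d)$ for the worst-case number of points of $P$ in the leftover region for a $d$-dimensional query, the recursion gives roughly $T(d) = O(\log n)\cdot T(d-1) + (\text{fringe contribution})$, where the fringe contribution is: two coordinate-$1$ fringe slabs, each of which we must further resolve in coordinates $2,\dots,d$ (a $(d-1)$-dimensional problem) but where we have \emph{no} budget — so we simply recurse one more level, giving another $O(\log^{?})$ factor. Carefully, the textbook accounting yields $T(1)=O(\log n)$ (the two fringe intervals of a canonical-interval decomposition, each below threshold, contributing $O(\log n)$ points after one more halving), and $T(d) = O(\log n) T(d-1) + O(\log n)\cdot(\text{resolve fringe in }d-1\text{ coords})$, which unrolls to $T(d) = O_d(\log^{2d-2} n)$. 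The exponent $2d-2$ rather than $d$ comes precisely from the fact that the fringe in coordinate $1$ still has to be decomposed in the remaining $d-1$ coordinates, doubling the per-coordinate cost; this matches the statement.

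The main obstacle, and the step I would be most careful about, is the bookkeeping that simultaneously keeps $|\mathcal{B}(P)| \le n/8$ \emph{and} makes $R'$ small: the more finely we resolve (more canonical boxes) the smaller $R'$, so one must choose the recursion depth / block-size threshold exactly right and verify that the product over the $d$ coordinates of the number of dyadic blocks used stays under $n/8$ while the leftover still has only $O_d(\log^{2d-2} n)$ points. I would state the threshold as a function of $d$ and $n$ (e.g.\ stop subdividing a coordinate once blocks reach size $\Theta(1)$ but cap the total count), push the routine arithmetic into Appendix C as promised, and here only record the recursion $T(d)=O(\log n)\,T(d-1)+O(\log^{2d-3}n)$ with base case $T(1)=O(\log n)$, whose solution is $O_d(\log^{2d-2}n)$.
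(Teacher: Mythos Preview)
Your high-level plan matches the paper's: build nested dyadic (canonical) intervals coordinate by coordinate with a size threshold, and decompose an arbitrary box $R$ along $x_1$ into $O(\log n)$ canonical $x_1$-strips plus two fringe strips, then recurse inside each canonical strip. The paper simply dumps each fringe strip (at every stage) into $R'$; there is no ``further resolving'' of fringes, and your two paragraphs describing the fringes are inconsistent on this point.

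The genuine gap is the threshold. You write ``stop subdividing a coordinate once blocks reach size $\Theta(1)$ but cap the total count''; that is the step that actually carries the lemma, and $\Theta(1)$ is wrong. In the paper the threshold is $\ell=16(d+\log_2 n)^{d-1}=\Theta_d(\log^{d-1}n)$. With this $\ell$, the number of canonical boxes of type $(i_1,\dots,i_d)$ is $n/(2^{i_d}\ell)$, and summing over $i_1\ge\cdots\ge i_d\ge 0$ gives at most $(2n/\ell)(d+\log_2 n)^{d-1}\le n/8$. Simultaneously, each fringe contains at most $\ell$ points, and the total leftover is $\sum_{k=1}^d O(\log^{k-1} n)\cdot 2\ell = O_d(\log^{d-1}n)\cdot \Theta_d(\log^{d-1}n)=O_d(\log^{2d-2}n)$. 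With a $\Theta(1)$ threshold the fringe bound would be $O(\log^{d-1}n)$, but $|\mathcal{B}(P)|$ would blow up far past $n/8$; this trade-off is precisely what forces the $\log^{2d-2}n$ exponent, and you cannot defer it as ``routine arithmetic''.

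Relatedly, your stated recursion $T(d)=O(\log n)\,T(d-1)+O(\log^{2d-3}n)$, $T(1)=O(\log n)$, does not solve to $O(\log^{2d-2}n)$: unrolling gives $T(d)=O(\log^{d}n)+O(\log^{2d-3}n)$, which for $d\ge 3$ is $O(\log^{2d-3}n)$. That it comes out ``too good'' is a sign the recursion does not correspond to a construction with $|\mathcal{B}(P)|\le n/8$. The correct recursion (with the paper's threshold, so $\ell$ is fixed as a function of the ambient $d$, not the recursion variable) is $T(k)=O(\log n)\,T(k-1)+2\ell$ with $T(1)=2\ell$, which indeed gives $T(d)=O_d(\log^{2d-2}n)$.
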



\begin{proof}(of Theorem~\ref{thm:tusnady})
We use Theorem~\ref{thm:vecgenunified}. At time step $t$, choose $A(t)$ to be the set of all alive elements at time $t$ i.e. $A(t)=N(t)$. 

The constraints in $Z(t)$ are chosen as follows.
At time $t$, a particular subset $P'$ is chosen and the canonical boxes in $\mathcal{B}(P')$ are constructed according to Lemma~\ref{lem:canonical}. Let $Box(t)$ denote the set of canonical boxes we are going to construct at time $t$. Initially we set $Box(0)=\mathcal{B}(P)$. For a time $t$, let $k_t$ be an integer such that $\frac{n}{2^{k_t+1}}<|N(t)|\le \frac{n}{2^{k_t}}$ and let $t'\le t$ be the first time when the number of alive elements $|N(t')|$ was at most $\frac{n}{2^{k_t}}$. Then, $Box(t)=\mathcal{B}(N(t'))$. It easily follows that
\[|Box(t)|=|\mathcal{B}(N(t'))|\le \frac{|N(t')|}{8}\le \frac{n}{8.2^{k_t}}=\frac{n}{4.2^{k_t+1}}\le \frac{|N(t)|}{4}.\]

We now put the indicator vectors of canonical boxes in $Box(t)$ as the constraints in $Z(t)$. It follows that
\[|Z(t)| =|Box(t)|\le |N(t)|/4=|A(t)|/4 ,\]
giving $\delta\le 1/4$.

At any time $t$, the number of points not protected in an axis-parallel box $R$, given by the set $R'$, is at most $O_d(\log^{2d-2}n)$. Because $Z(t)$ only changes $\log n$ times during the algorithm, there are $O_d(\log^{2d-1}n)$ corrupted points in $R$. Using Theorem~\ref{thm:vecgenunified} now with $\lambda=O(\sqrt{d\log n})$ gives that the discrepancy of $R$ is $O_d(\log^{d} n)$ with probability at least $1-1/n^{3d}$. The result follows now by taking a union bound over all the distinct $n^{2d}$ axis-parallel boxes.
\end{proof}

\section{Steinitz Problem}
\label{sec:steinitz}


In this section we prove Theorem~\ref{thm:stff} and Theorem~\ref{thm:st22}. This immediately gives the same bounds for Steinitz problem in $\ell_\infty$ and $\ell_2$ norms by the (constructive) reduction from Steinitz to the signed series problem given in \cite{Har14}.

We start with a simple observation.

\begin{obs}
\label{obstrivial}
We can assume $d > \log n$. Otherwise the (algorithmic) bound of $d$ \cite{GS90} already gives us a bound of $O(\sqrt{d\log n})$. We can also assume $d<n^2$, otherwise a bound of $\sqrt{d}$ is trivial.
\end{obs}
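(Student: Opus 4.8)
The plan is to dispose of the two extreme ranges of $d$ separately, each by invoking a crude bound that is already available, so that in the rest of the section one may freely assume $\log n < d < n^2$.

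For the first claim, I would assume $d \le \log n$ and simply appeal to the algorithmic iterated-rounding bound of $d$ from \cite{GS90}, which holds for every norm and in particular for $\ell_\infty$ and $\ell_2$. In the range $d \le \log n$ this already gives a coloring with all prefix sums of norm at most $d = \sqrt{d}\cdot\sqrt{d} \le \sqrt{d}\cdot\sqrt{\log n} = \sqrt{d\log n}$, so this regime is already within the target $O(\sqrt{d\log n})$ and there is nothing to prove. Hence in proving Theorems~\ref{thm:stff} and~\ref{thm:st22} we may assume $d > \log n$.

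For the second claim, I would assume $d \ge n^2$ and observe that an arbitrary sign vector works: for any $\chi\in\{-1,1\}^n$ and any prefix length $k\le n$, the triangle inequality gives $\|\sum_{i=1}^k \chi(i) v_i\| \le \sum_{i=1}^k \|v_i\| \le k \le n \le \sqrt{d}$ (using $\|v_i\|\le 1$, valid in either norm). Thus the trivial coloring already achieves norm $\sqrt{d}$ on every prefix and the bound $\sqrt{d}$ is vacuous once $d\ge n^2$; hence we may assume $d < n^2$.

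Combining the two, throughout the remainder of Section~\ref{sec:steinitz} we may assume $\log n < d < n^2$, and in particular $\log d = \Theta(\log n)$, which is the only consequence actually used later. I do not expect any genuine obstacle here; the one point worth a sentence is that the $d$-bound being quoted should be stated for the signed series (prefix) version rather than just the Steinitz ordering, which is exactly what the iterated-rounding argument behind \cite{GS90} (equivalently, the B\'ar\'any--Grinberg theorem) delivers.
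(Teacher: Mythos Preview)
Your proposal is correct and matches the paper's reasoning exactly; the observation in the paper is self-contained and has no separate proof beyond the two sentences in its statement, which you have simply unpacked. Your remark that the quoted bound of $d$ must hold for the signed-series (prefix) version is a genuine point the paper glosses over, and you are right that the B\'ar\'any--Grinberg/iterated-rounding argument behind \cite{GS90} delivers precisely this.
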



\begin{proof}(of Theorem~\ref{thm:stff})
We will apply Theorem~\ref{thm:vecgenunified}.
At time $t$, take $A(t)$ to be the first $2d$ alive elements i.e. we include in $A(t)$ the smallest $2d$ indices in the set $N(t)$ of alive elements. If there are fewer than $2d$ elements alive, then we take $A(t)$ to contain all the alive elements. 

In $Z(t)$ we will include all the $d$ rows restricted to $A(t)$ if $A(t)$ has at least $2d$ elements in it i.e. we include in $Z(t)$ the vector $w_j\in \mathbb{R}^{|A(t)|}$ for $1\le j\le d$ with $w_j(i)$ equal to the $j^{th}$ coordinate of the $i^{th}$ element in $A(t)$.
If $A(t)$ has less than $2d$ elements, we take $Z(t)$ to be the null collection. Thus, 
\[ |Z(t)|\le  A(t)/2  \quad \text{ for all }t\]
and $\delta=|Z(t)|/|A(t)|\le 1/2$.

Fix a row $j$ and a prefix $k\in[n]$. Let $v_k$ be first included in $A(t)$ at time $t_k$. 
Then for $t<t_k$, $A(t)\subseteq \{1,2,\dots,k-1\}$ and because of our choice of $w_j$, every element in the set $S_k=\{1,2,\dots,k\}$ is protected at time $t$ for $(j,S_k)$.
For $t\ge t_k$, only the elements in $A(t_k)$ can ever become corrupt for $(j,S_k)$ because trivially all other elements are either not included in $S_k$ or have been frozen by now and thus will not be included in the set $A(t)$ at any time $t\ge t_k$. Thus, the set of corrupted elements for $(j,S_k)$ is a subset of $A(t_k)$, giving
\[|\mathcal{C}_{j,S_k}|\le |A(t_k)| \le 2d .\]
As each entry $a_{ji}$ is at most $1$ in absolute value, we get $\sum_{i\in \mathcal{C}_{j,S_k}}a_{ji}^2\le 2d$. Using Theorem~\ref{thm:vecgenunified} now with $\lambda=O(\sqrt{\log n})$, we get that discrepancy of row $j$ and prefix $k$ is $O(\sqrt{d\log n})$ with probability at least $1-1/n^4$. Taking a union bound over the $n$ prefixes and at most $n^2$ rows (by Observation~\ref{obstrivial}) finishes the proof.
\end{proof}

\begin{proof} (of Theorem~\ref{thm:st22})
This follows similar to the previous proof.
We choose $A(t)$ and $Z(t)$ exactly as before. Thus, for every row $j$ and prefix $k$, $\mathcal{C}_{j,S_k}$ is a subset of $A(t_k)$.
Then we have,
\[ \sum_j \sum_{i\in \mathcal{C}_{j,S_k}}a_{ji}^2 \le \sum_j \sum_{i\in A(t_k)}a_{ji}^2=\sum_{i\in A(t_k)}\sum_j a_{ji}^2 \le |A(t_k)|\le 2d .\]
The second last inequality follows as for all $i$, $\sum_j a_{ji}^2=\|v_i\|_2^2\le 1$. Using Theorem~\ref{thm:l2discrepancy} now with $\lambda=O(\sqrt{\log n})$ gives that the $\ell_2$ discrepancy of prefix $k$ is $O(\sqrt{d\log n})$ with high probability. Taking a union bound over all prefixes finishes the proof.
\end{proof}

\section*{Acknowledgments}
We would like to thank Yin-Tat Lee for sharing with us a simpler proof of the martingale concentration in \cite{BDG16}; and Daniel Dadush, Aleksandar Nikolov and Mohit Singh for insightful discussions related to discrepancy over the past year. 
We would also like to thank the American Institute of Mathematics, for hosting a workshop organized by Aleksandar Nikolov and Kunal Talwar, where this work was started.

\bibliographystyle{alpha}
\bibliography{refr}

\section*{Appendix A: Proof of the Framework}

\label{sec:appa}

Let $\epsilon>0$ be a constant such that $\delta\le 1-\epsilon$ always.
Let $\gamma=\frac{1}{n^{10}m^4\log(mn)}$ and $T=\frac{6}{\epsilon\gamma^2}n\log n$. 
The algorithm is as follows.\\

\noindent 
{\bf Algorithm:}
\begin{enumerate}
\item Initialize $x_0(i) =0$ for all $i\in [n]$. 
\item  For each time step $t=1,2,\ldots,T$ while $N(t)\neq\phi$ repeat the following:
\begin{enumerate}
\item Take as input $A(t)$ and $Z(t)$
\item \label{apx:step1}
Use Theorem~\ref{thm:univec} with the set of elements $A(t)$, $w_k$'s as the $\delta|A(t)|$ linear constraints in $Z(t)$, and $\beta=(1-\delta)/2$ to get a universal vector coloring $u_i^t$ for $i\in A(t)$.
\item 
\label{apx:round}
Let $r_t \in \R^n$ be a random $\pm 1$ vector, obtained by setting each coordinate $r_t(i)$ independently to $-1$ or $1$ with probability $1/2$.

For each $i \in A(t)$, update $x_t(i)=x_{t-1}(i)+\gamma\langle r_t, u_i^t\rangle$. 
For  each $i \not\in A(t)$, set $x_t(i)=x_{t-1}(i)$.

\end{enumerate}
\item 
\label{stp3}
Generate the final coloring as follows.
For the frozen elements $i\notin N(T+1)$, set $x_T(i)=1$ if $x_T(i)\geq 1-1/n$ and $x_T(i)=-1$ otherwise.
For the alive elements $i \in N(T+1)$
set them arbitrarily to $\pm 1$. 
\end{enumerate}

\subsection*{Analysis}
For convenience, we will set $u_i^t=0$ for all $t$ and $i\not\in A(t)$. Notice that $|\gamma\langle r_t,u_i^t\rangle|=o(1/n)$ and thus no $x_t(i)$ will ever exceed $1$ in absolute value. Theorem~\ref{thm:univec} directly gives the following lemma.
\begin{lemma}
\label{lem:sdpvalue}
At each time $t$, $\sum_{i}\|u_i^t\|_2^2 \ge (1-\delta)|A(t)|/2$.
\end{lemma}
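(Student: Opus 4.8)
The plan is to unwind the definitions and invoke Theorem~\ref{thm:univec} essentially verbatim. By construction in Step~\ref{apx:step1} of the algorithm, the vectors $\{u_i^t\}_{i\in A(t)}$ are precisely the Universal Vector Coloring produced by Theorem~\ref{thm:univec} when applied to the ground set $A(t)$, with the constraint vectors $w_k$ in $Z(t)$, and with the choice $\beta=(1-\delta)/2$. So the lemma should follow immediately from property (iii) of that theorem, with the role of ``$n$'' played by $|A(t)|$ and the role of ``$\delta$'' played by the actual per-step constraint ratio $\delta' := |Z(t)|/|A(t)|$.

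First I would check that the hypotheses of Theorem~\ref{thm:univec} are met at step $t$. The rules of the game allow the player to specify at most $\delta|A(t)|$ constraints, so $\delta' \le \delta < 1$. Theorem~\ref{thm:univec} requires $\beta\in(0,1-\delta')$; since $\delta'\le\delta\le 1-\epsilon<1$, the choice $\beta=(1-\delta)/2$ lies strictly between $0$ and $1-\delta\le 1-\delta'$, so the hypotheses hold. (If one wishes to match the statement of Theorem~\ref{thm:univec} literally, where the number of constraints is exactly $\delta|A(t)|$, one can pad $Z(t)$ with copies of the zero vector; this changes nothing, since zero constraints impose no condition.)

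Next, property (iii) of Theorem~\ref{thm:univec} gives $\|u_i^t\|_2\le 1$ for all $i\in A(t)$ together with
\[ \sum_{i\in A(t)}\|u_i^t\|_2^2 \;\ge\; (1-\delta'-\beta)\,|A(t)| \;\ge\; \Big(1-\delta-\tfrac{1-\delta}{2}\Big)|A(t)| \;=\; \tfrac{1-\delta}{2}\,|A(t)|, \]
where the middle inequality uses $\delta'\le\delta$. Finally, since the analysis adopts the convention $u_i^t=0$ for $i\notin A(t)$, we have $\sum_{i\in[n]}\|u_i^t\|_2^2=\sum_{i\in A(t)}\|u_i^t\|_2^2\ge (1-\delta)|A(t)|/2$, which is the claim.

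I do not expect any genuine obstacle: all the content sits in Theorem~\ref{thm:univec}, and this lemma is just its third property instantiated at scale $|A(t)|$ with $\beta=(1-\delta)/2$. The only point requiring a moment's care is the distinction between the fixed constant $\delta$ bounding the game and the possibly smaller per-step ratio $|Z(t)|/|A(t)|$; this is handled by the monotonicity remark above, since dropping constraints only enlarges the SDP feasible region and hence can only increase $\sum_i\|u_i^t\|_2^2$.
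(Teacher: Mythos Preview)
Your proposal is correct and matches the paper's approach exactly: the paper simply states that Theorem~\ref{thm:univec} directly gives this lemma, and your write-up is just a careful unwinding of that invocation with $\beta=(1-\delta)/2$. Your attention to the distinction between the per-step ratio $\delta'=|Z(t)|/|A(t)|$ and the global bound $\delta$ is a nice touch that the paper leaves implicit.
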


We need to show now that by time $T$, discrepancy is small and that all elements are colored by time $T$ with high probability. The following simple lemma will be needed several times.
\begin{lemma}
\label{lem:randvector}
For any vector $v\in \mathbb{R}^n$ and a random vector $r$ distributed uniformly in $\{-1, 1\}^n$, $\E[\langle r,v \rangle]=0$ and $\E[\langle r,v\rangle^2]=\| v\|_2^2$.
\end{lemma}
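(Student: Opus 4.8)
The plan is to expand the inner product in coordinates and push the expectation inside using linearity, then exploit the two elementary facts about a uniform sign vector: each coordinate satisfies $\E[r(i)]=0$ and $\E[r(i)^2]=1$ (since $r(i)^2=1$ always), and distinct coordinates are independent so that $\E[r(i)r(j)]=\E[r(i)]\,\E[r(j)]=0$ for $i\neq j$.

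For the first identity, I would simply write $\langle r,v\rangle=\sum_{i=1}^n r(i)v(i)$, apply linearity of expectation, and conclude $\E[\langle r,v\rangle]=\sum_i v(i)\E[r(i)]=0$. For the second identity, square the sum and expand:
\[
\E[\langle r,v\rangle^2]=\E\Big[\sum_{i,j} r(i)r(j)\,v(i)v(j)\Big]=\sum_{i,j} v(i)v(j)\,\E[r(i)r(j)].
\]
Split this double sum into the diagonal terms $i=j$ and the off-diagonal terms $i\neq j$. The off-diagonal terms vanish by independence of distinct coordinates, and each diagonal term contributes $v(i)^2\,\E[r(i)^2]=v(i)^2$, so the sum collapses to $\sum_i v(i)^2=\|v\|_2^2$.

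There is essentially no obstacle here; the only point requiring a (trivial) word of care is the vanishing of the cross terms, which uses independence of the coordinates of $r$ rather than just their marginal mean-zero property — though in fact pairwise independence already suffices. I would keep the write-up to a few lines.
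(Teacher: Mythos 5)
Your proof is correct and follows exactly the same route as the paper: expand $\langle r,v\rangle$ coordinatewise, use linearity of expectation with $\E[r(i)]=0$, and for the second moment split the double sum, killing the cross terms via $\E[r(i)r(j)]=0$ for $i\neq j$ and using $r(i)^2=1$ on the diagonal. Nothing further is needed.
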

\begin{proof}
$\E[\langle r,v\rangle]=\sum_i\E[r_iv_i]=0$, giving the first part of the lemma. For the second part, 
\begin{equation*}
\E[\langle r,v\rangle^2]  = \sum_{i,j}\E[r_ir_jv_iv_j]=\sum_i v_i^2.
\end{equation*}
We use that for $i\neq j$, $\E[r_ir_j]=0$.
\end{proof}
We first show that by time $T$, each element is frozen with high probability.

\begin{lemma}
\label{thm:terminate}
After time $T=\frac{6}{\epsilon\gamma^2}n\log n$, there are no alive variables left with probability at least $1-O(n^{-2})$.
\end{lemma}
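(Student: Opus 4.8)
The plan is to track the potential $\Phi(t) = \sum_{i \in [n]} x_t(i)^2$ and show it grows by a definite amount in expectation at every step where some variable is still alive, so that it cannot stay below $n$ for more than $T$ steps. First I would compute the one-step increment: since $x_t(i) = x_{t-1}(i) + \gamma \langle r_t, u_i^t \rangle$ for $i \in A(t)$ and is unchanged otherwise, we get
\[
\Phi(t) - \Phi(t-1) = \sum_{i \in A(t)} \Bigl( 2\gamma x_{t-1}(i)\langle r_t, u_i^t\rangle + \gamma^2 \langle r_t, u_i^t\rangle^2 \Bigr).
\]
Taking expectation over $r_t$ and using Lemma~\ref{lem:randvector} ($\E[\langle r_t, u_i^t\rangle] = 0$ and $\E[\langle r_t, u_i^t\rangle^2] = \|u_i^t\|_2^2$), the linear term vanishes and
\[
\E[\Phi(t) - \Phi(t-1) \mid \mathcal{F}_{t-1}] = \gamma^2 \sum_{i \in A(t)} \|u_i^t\|_2^2 \ge \gamma^2 \cdot \frac{(1-\delta)|A(t)|}{2} \ge \frac{\epsilon \gamma^2}{2}\,|A(t)|,
\]
using Lemma~\ref{lem:sdpvalue} and $1-\delta \ge \epsilon$. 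Whenever $N(t) \neq \emptyset$ there is at least one alive variable; but actually we need $|A(t)| \ge 1$, which holds as long as the player is still being invoked, i.e.\ $N(t) \neq \emptyset$ forces $A(t) \neq \emptyset$. So each such step increases $\E[\Phi]$ by at least $\epsilon\gamma^2/2$.

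Next I would turn this expected drift into a high-probability termination statement. The potential is bounded: $\Phi(t) \le n$ always (each $|x_t(i)| \le 1$). So if at time $T$ some variable were still alive, the process would have run for all $T$ steps with $N(t)\neq\emptyset$, and $\E[\Phi(T)] \ge (\epsilon\gamma^2/2)\,T = 3n\log n \gg n$, a contradiction in expectation — but we want it with probability $1 - O(n^{-2})$, so a concentration argument is needed rather than just a first-moment bound. The clean way is to define the supermartingale $Y_t = \Phi(t) - \frac{\epsilon\gamma^2}{2}\sum_{s\le t}|A(s)|\,\mathbf{1}[N(s)\neq\emptyset]$; actually since we only have a lower bound on the drift it is cleaner to consider $Z_t = \frac{\epsilon\gamma^2}{2} t\,\mathbf{1}[\text{still running}] - \Phi(t)$ and show it is a submartingale with bounded increments, then apply Azuma–Hoeffding, or alternatively bound the probability that the number of "productive" steps before all variables freeze exceeds $T$ directly via a union bound over individual variables. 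The most transparent route: for a fixed variable $i$, its coordinate $x_t(i)$ performs a random walk with independent increments $\gamma\langle r_t, u_i^t\rangle$ while alive; the per-step variance contributed is $\gamma^2\|u_i^t\|^2$, and standard random-walk escape estimates (reflecting at $\pm 1$, or using the potential $x_t(i)^2$ per coordinate together with Azuma) show that the expected number of steps during which $i$ is alive and $i \in A(t)$ with $\|u_i^t\|^2$ bounded away from $0$ is $O(\gamma^{-2})$, hence it is alive for at most $O(\gamma^{-2}\log n)$ such steps except with probability $n^{-3}$; union-bounding over the $n$ variables gives the claim. Either packaging works; I would present the aggregate-potential supermartingale version since it is shortest.

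The main obstacle is the concentration step, not the drift computation: the drift lower bound only controls $\sum_{i\in A(t)}\|u_i^t\|^2$ in aggregate, and a priori at a given step almost all of this mass could sit on a few coordinates that are about to freeze, so one cannot argue variable-by-variable that each alive variable makes progress every round. This is exactly why one tracks the global potential $\Phi$ and argues that the \emph{total} number of rounds with $N(t)\neq\emptyset$ is bounded: each such round injects at least $\epsilon\gamma^2/2$ expected increase into a quantity capped at $n$, so Azuma–Hoeffding on the increments (which are bounded by $O(\gamma^2\cdot\mathrm{poly}(n)) = o(1)$ per step in absolute value, comfortably small relative to the drift accumulated over $T$ steps) gives that $\Phi(T) \ge n$ — hence no alive variables — with probability $1 - \exp(-\Omega(n\log n)) = 1 - O(n^{-2})$. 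A minor technical point to handle carefully is that once the process stops (all frozen) the increments are zero, so one should formally run the supermartingale argument on the stopped process, or equivalently bound $\pr[\tau > T]$ where $\tau$ is the freezing time.
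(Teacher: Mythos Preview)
Your drift computation is correct and matches the paper exactly (the paper tracks $G_t = n - \Phi(t)$ instead of $\Phi(t)$, but the content is identical). The gap is in the concentration step.

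You claim the increments of $\Phi$ are ``bounded by $O(\gamma^2\cdot\mathrm{poly}(n))$''. They are not: from your own expansion,
\[
\Phi(t)-\Phi(t-1)=2\gamma\Bigl\langle r_t,\sum_{i\in A(t)} x_{t-1}(i)u_i^t\Bigr\rangle+\gamma^2\sum_{i\in A(t)}\langle r_t,u_i^t\rangle^2,
\]
and the first term is order $\gamma$, not $\gamma^2$; in fact $|\Phi(t)-\Phi(t-1)|$ can be as large as $\Theta(\gamma n)$. Plugging this into Azuma with $T=\Theta(n\log n/\gamma^2)$ gives $\sum_{t\le T}c_t^2=\Theta(n^3\log n)$, so the Azuma bound on a deviation of order $n\log n$ is $\exp(-O(\log n/n))$, which is useless. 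The ``comfortably small relative to the drift'' intuition fails precisely because $T$ scales like $\gamma^{-2}$ while the increments scale like $\gamma$, so the accumulated martingale fluctuation $\sqrt{T}\cdot\gamma\,\mathrm{poly}(n)$ is $\mathrm{poly}(n)$, not $o(n)$.

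The paper avoids any concentration inequality altogether. It converts the additive drift $\E_{t-1}[G_t]\le G_{t-1}-\tfrac{(1-\delta)\gamma^2}{2}$ into a multiplicative one via $G_{t-1}\le n$:
\[
\E_{t-1}[G_t]\le\Bigl(1-\tfrac{(1-\delta)\gamma^2}{2n}\Bigr)G_{t-1},
\]
iterates to get $\E[G_{T}]\le n\,e^{-(1-\delta)\gamma^2 T/(2n)}\le n^{-2}$, and then applies Markov's inequality to the \emph{nonnegative} random variable $G_{T}$: $\Pr[G_{T}\ge 1/n]\le n^{-1}\cdot n^{-2}$ (the paper's constants give $n^{-2}$). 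Since any alive variable contributes more than $1/n$ to $G_T$, this finishes the proof. The point you are missing is that when the potential is nonnegative and its expectation decays geometrically, Markov alone gives the $n^{-2}$ probability; no martingale tail bound is needed.
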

\begin{proof}
Given the coloring $x_t$ at time $t$, define $G_t = \sum_{i \in [n]} (1-x_t(i)^2)$. Clearly $G_t\leq n$ for all $t$. 
As $x_{t}(i) =   x_{t-1}(i) + \gamma \langle r_t,u_i^t \rangle $, using Lemma~\ref{lem:randvector}, we have that $\E_{t-1}[x_t(i)^2] = x_{t-1}(i)^2 + \gamma^2 \|u_i^t\|_2^2$.
It follows
\begin{eqnarray*} 
\E_{t-1}[G_t]  & = &   \E_{t-1}\left[\sum_{i \in [n]} (1-x_{t}(i)^2)\right] \\
                        &= & \sum_{i \in [n]} \left(1-x_{t-1}(i)^2 \right) - \gamma^2 \sum_{i \in [n]} \|u_i^t\|_2^2   \\
                  & \leq &  \sum_{i \in[n]} \left(1-x_{t-1}(i)^2 \right) - (1-\delta)\gamma^2  |A(t)|/2  \qquad\textrm{(using Lemma~\ref{lem:sdpvalue})}\\
                   &\leq & G_{t-1}-(1-\delta)\gamma^2/2  \qquad\textrm{($|A(t)|\ge 1$ at every time)}\\ 
                  & \leq &  (1 - \frac{(1-\delta)\gamma^2}{2n}) G_{t-1}  
\end{eqnarray*}
Thus by induction, 
\[\E[G_{T+1}] \leq (1-\frac{(1-\delta)\gamma^2}{2n})^T G_1 \leq n. e^{-(1-\delta)\gamma^2 T/2n } \le 1/n^3.\] 
Thus by Markov's inequality, $\Pr[G_{T+1} \geq 1/n] \leq 1/n^2$. However, $G_{T+1} \leq 1/n$ implies that all variables are frozen as each alive variable contributes at least $1-(1-1/n)^2 > 1/n$ to $G_{T+1}$.
\end{proof}

To bound discrepancy, we will use a concentration inequality which is a variant of Freedman's inequality for martingales\cite{Freedman}. The following lemma will be useful.

\begin{lemma}
\label{lem:taylor}
Let $X$ be a random variable such that $|X|\leq 1$. Then for any $\theta >0$, 
\[\E[e^{\theta X}] \leq  e^{\theta \E [X] +(e^\theta-\theta-1)\E [X^2]}\]
\end{lemma}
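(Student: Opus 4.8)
\textbf{Proof proposal for Lemma~\ref{lem:taylor}.}

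The plan is a direct Taylor-expansion argument, exploiting the boundedness $|X|\le 1$ to convert an analytic inequality on scalars into an inequality on expectations. First I would reduce the claim to a pointwise inequality: since $\E[\cdot]$ is monotone, it suffices to show that for every real number $x$ with $|x|\le 1$ and every $\theta>0$ we have
\[
e^{\theta x} \;\le\; 1 + \theta x + (e^\theta - \theta - 1)\, x^2 ,
\]
because taking expectations of both sides and using $\E[1]=1$ together with linearity gives $\E[e^{\theta X}] \le 1 + \theta \E[X] + (e^\theta-\theta-1)\E[X^2]$, and then the elementary bound $1+y \le e^y$ with $y = \theta\E[X] + (e^\theta-\theta-1)\E[X^2]$ yields the stated form. (One should note $e^\theta-\theta-1 \ge 0$ for all $\theta$, so the coefficient of $x^2$ is nonnegative, which is what makes the second step legitimate and also what one expects from a variance-type term.)

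Next I would prove the pointwise inequality. Write $e^{\theta x} = \sum_{k\ge 0} \frac{\theta^k x^k}{k!} = 1 + \theta x + \sum_{k\ge 2}\frac{\theta^k x^k}{k!}$, so the claim becomes $\sum_{k\ge 2}\frac{\theta^k x^k}{k!} \le (e^\theta-\theta-1)x^2 = \Big(\sum_{k\ge 2}\frac{\theta^k}{k!}\Big)x^2$. It is enough to show this term by term, i.e.\ $\frac{\theta^k x^k}{k!} \le \frac{\theta^k}{k!}x^2$ for each $k\ge 2$, which reduces to $x^k \le x^2$ for $|x|\le 1$ and $k\ge 2$. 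This last inequality holds because $x^k \le |x|^k \le |x|^2 = x^2$, using $|x|\le 1$ (so larger powers only shrink the magnitude) for $k\ge 2$. Summing over $k\ge 2$ and multiplying through by the positive factors $\theta^k/k!$ gives the pointwise bound, completing the argument.

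The main obstacle — really the only subtlety — is making sure the term-by-term comparison is valid, i.e.\ that one is allowed to compare the series coefficient-wise: this needs both $x^k \le x^2$ (handled above via $|x|\le 1$) and absolute convergence of the exponential series to justify rearranging, which is automatic. A secondary point to be careful about is sign: for negative $x$ the odd-power terms $\theta^k x^k/k!$ are negative while $(\theta^k/k!)x^2$ is positive, so the inequality $\frac{\theta^k x^k}{k!}\le \frac{\theta^k}{k!}x^2$ is still true (a negative number is below a positive one), and no case analysis on the sign of $x$ is actually required. So the proof is short; the only thing to state cleanly is the reduction to the pointwise inequality and the observation $x^k\le x^2$ for $|x|\le 1,\ k\ge 2$.
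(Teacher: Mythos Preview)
Your proof is correct and follows essentially the same approach as the paper: both establish the pointwise inequality $e^{\theta x} \le 1 + \theta x + (e^\theta - \theta - 1)x^2$ for $|x|\le 1$, take expectations, and finish with $1+y\le e^y$. The only minor difference is in verifying the pointwise bound --- the paper notes that $f(x) = (e^{\theta x}-\theta x - 1)/x^2$ is increasing so $f(x)\le f(1)$, whereas you give an explicit term-by-term Taylor comparison via $x^k \le x^2$.
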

\begin{proof}
Let $f(x) = \frac{e^{\theta x} - \theta x - 1}{x^2}$ where we set $f(0)=\theta^2/2$. Then $f(x)$ is increasing for all $x$.
This implies $e^{\theta x} \leq f(1)x^2 + 1 + \theta x$ for any $x\le 1$.
Taking expectation, this becomes 
\begin{eqnarray*}
\E[e^{\theta X}] &\leq & 1+\E[\theta X]+f(1)\E [X^2] \\
& = & 1+\E[\theta X]+(e^\theta-\theta-1)\E [X^2] \\
& \le &  e^{\theta \E [X] +(e^\theta-\theta-1)\E [X^2]}
\end{eqnarray*}
where the last inequality uses the fact that $1+x\le e^x$.
\end{proof}

We will use the following concentration inequality to bound the discrepancy. This is a slight modification of Freedman's inequality due to Yin-Tat Lee and we show its proof below. 

\begin{thm}
\label{thm:freedman}
Let $Y_1,\ldots,Y_n$ be a sequence of random variables with $Y_0=0$ such that for all $t$,
\begin{enumerate}[i)]
\item $|Y_t - Y_{t-1}|\leq 1$, and
\item $\E_{t-1}[Y_t - Y_{t-1}] \le -\alpha\E_{t-1}[(Y_t - Y_{t-1})^2]$
\end{enumerate}
for some $\alpha\in(0,0.5)$.
Then for all $\lambda \geq 0$ , we have 
\[ \Pr[Y_n  \geq \lambda ]\le \exp\left(-\alpha\lambda\right). \] 
\end{thm}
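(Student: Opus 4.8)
The plan is to prove Theorem~\ref{thm:freedman} by the standard exponential-moment (Bernstein/Freedman) method, tailored so that the negative-drift hypothesis (ii) cancels the quadratic error term exactly. First I would fix $\theta>0$ (to be chosen as $\theta = \ln(1+2\alpha)$, or more simply just $\theta$ close to $2\alpha$) and consider the supermartingale-type quantity $M_t = \exp(\theta Y_t)$. Writing $\Delta_t = Y_t - Y_{t-1}$, hypothesis (i) gives $|\Delta_t|\le 1$, so Lemma~\ref{lem:taylor} applies to $X = \Delta_t$ conditioned on the past, yielding
\[
\E_{t-1}\left[e^{\theta \Delta_t}\right] \le \exp\left(\theta\, \E_{t-1}[\Delta_t] + (e^\theta - \theta - 1)\,\E_{t-1}[\Delta_t^2]\right).
\]

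Next I would invoke hypothesis (ii), $\E_{t-1}[\Delta_t] \le -\alpha\,\E_{t-1}[\Delta_t^2]$, to bound the exponent by $\left((e^\theta-\theta-1) - \alpha\theta\right)\E_{t-1}[\Delta_t^2]$. The key arithmetic step is to choose $\theta$ so that this coefficient is nonpositive: since $e^\theta - \theta - 1 \le \theta^2/2 + \text{(higher order)}$ is awkward, the clean choice is to note that for $\alpha\in(0,1/2)$ one can pick $\theta$ with $e^\theta - \theta - 1 \le \alpha\theta$ — for instance $\theta = \ln(1+2\alpha) \le 2\alpha < 1$ works, since then $e^\theta - \theta - 1 = 2\alpha - \ln(1+2\alpha) \le 2\alpha - (2\alpha - (2\alpha)^2/2\cdot\frac{1}{1}) \le \alpha\theta$ after checking the elementary inequality; I would state this as a one-line calculus fact. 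With such a $\theta$, the conditional expectation bound becomes $\E_{t-1}[e^{\theta\Delta_t}] \le 1$, i.e. $\E_{t-1}[M_t] \le M_{t-1}$, so $(M_t)$ is a supermartingale with $M_0 = 1$.

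Then I would finish with Markov's inequality in the usual way: $\E[M_n] \le \E[M_0] = 1$, hence
\[
\Pr[Y_n \ge \lambda] = \Pr[e^{\theta Y_n} \ge e^{\theta\lambda}] \le e^{-\theta\lambda}\,\E[e^{\theta Y_n}] \le e^{-\theta\lambda}.
\]
To get the stated bound $\exp(-\alpha\lambda)$ it suffices that $\theta \ge \alpha$; with $\theta = \ln(1+2\alpha)$ one has $\ln(1+2\alpha) \ge \alpha$ for $\alpha \in (0,1/2)$ (again an elementary inequality, since $\ln(1+x) \ge x/2$ for $x\in(0,1)$ and here $x = 2\alpha$), which closes the argument.

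\textbf{Main obstacle.} The only real subtlety is the choice of $\theta$ and verifying the two elementary inequalities that make everything fit — namely that there exists $\theta \in (0,1]$ with simultaneously $e^\theta - \theta - 1 \le \alpha\theta$ (so the quadratic terms are killed) and $\theta \ge \alpha$ (so the tail exponent is at least $\alpha\lambda$). Both hold for the concrete choice $\theta = \ln(1+2\alpha)$ when $\alpha < 1/2$, and the restriction $\alpha \in (0,1/2)$ in the hypothesis is presumably there precisely to guarantee this slack. Everything else — the application of Lemma~\ref{lem:taylor}, the supermartingale property, and the Markov step — is routine.
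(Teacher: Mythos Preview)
Your proposal is correct and follows essentially the same route as the paper's proof: apply Lemma~\ref{lem:taylor} to the increments, use hypothesis (ii) to collapse the exponent to $(e^\theta-\theta-1-\alpha\theta)\E_{t-1}[\Delta_t^2]$, choose $\theta$ so this coefficient is nonpositive, deduce that $e^{\theta Y_t}$ is a supermartingale, and finish with Markov. The only cosmetic difference is that the paper defines $\theta$ implicitly as the root of $e^\theta-\theta-1=\alpha\theta$ (making the coefficient exactly zero) and then observes $\theta\ge\alpha$ for $\alpha\in(0,1/2)$, whereas you take the explicit choice $\theta=\ln(1+2\alpha)$ and verify the two needed inequalities directly; both work.
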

\begin{proof}
Let $X_t=Y_t-Y_{t-1}$ be the difference sequence and $\theta>0$ be a real number to be determined later. Then using Markov's inequality,
\begin{equation}
\label{markov}
 \Pr[Y_n\ge \lambda] \le \Pr[e^{\theta Y_n}\ge e^{\theta \lambda}] \le \frac{\E[e^{\theta Y_n}]}{e^{\theta\lambda}} 
 \end{equation}
To bound $\E[e^{\theta Y_n}]$ we observe the following:
\begin{eqnarray*}
\E_{t-1}[e^{\theta Y_t}] &= & e^{\theta Y_{t-1}}\E_{t-1}[e^{\theta X_t}] \\
& \le &  e^{\theta Y_{t-1}} \exp\left(\theta \E_{t-1}[X_t] +(e^\theta-\theta-1)\E_{t-1}[X_t^2]\right) \qquad \textrm{(using Lemma~\ref{lem:taylor})} \\
& \le & e^{\theta Y_{t-1}}\exp\left((e^\theta-\theta-1-\alpha\theta)\E_{t-1}[X_t^2]\right) \qquad \textrm{(using }\E_{t-1}[X_t] \le -\alpha\E_{t-1}[X_{t-1}^2] )\\
\end{eqnarray*}
Set $\theta$ to be the solution of 
\[ \alpha =\frac{e^\theta-\theta-1}{\theta}\]
and notice that for $\alpha\in(0,0.5)$, it must hold that $\alpha\le \theta$. Then
we get $\E_{t-1}[e^{\theta Y_t}] \le e^{\theta Y_{t-1}} $. And thus by induction, 
\[ \E[e^{\theta Y_n}] \le \E[e^{\theta Y_0}]=1 .\]
Putting this in (\ref{markov}), we get
\[\Pr[Y_n\ge \lambda] \le e^{-\theta\lambda}\le e^{-\alpha\lambda}. \]
\end{proof}

We can in fact strengthen the above Theorem to get the same probability bound that the sequence $\{Y_t\}$ never exceeds the value $\lambda$.
\begin{cor}
\label{cor:freedmanbetter}
Let $Y_1,\ldots,Y_n$ be a sequence of random variables with $Y_0=0$ such that for all $t$,
\begin{enumerate}[i)]
\item $|Y_t - Y_{t-1}| \leq 1$, and
\item $\E_{t-1}[Y_t - Y_{t-1}] \le -\alpha\E_{t-1}[(Y_t - Y_{t-1})^2]$
\end{enumerate}
for some $\alpha\in(0,0.5)$.
Then for all $\lambda \geq 0$ , we have 
\[ \Pr[\exists t: Y_t  \geq \lambda ]\le \exp\left(-\alpha\lambda\right). \] 
\end{cor}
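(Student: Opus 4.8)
The plan is to derive Corollary~\ref{cor:freedmanbetter} from Theorem~\ref{thm:freedman} by a standard stopping-time argument. The issue is that the hypotheses of Theorem~\ref{thm:freedman} bound the one-step drift and variance of a fixed sequence, but once the process exceeds $\lambda$ we have no control (and indeed want none) over its subsequent behavior; so I cannot apply the theorem directly to $\{Y_t\}$ itself to control $\max_t Y_t$. Instead I would introduce a \emph{frozen} (stopped) sequence $\tilde Y_t$ that behaves like $Y_t$ up to the first time $\tau$ the process reaches $\lambda$, and then stays constant afterward.

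Concretely, let $\tau = \min\{t : Y_t \ge \lambda\}$ (with $\tau = \infty$ if this never happens), and set $\tilde Y_t = Y_{\min(t,\tau)}$, i.e. $\tilde Y_t - \tilde Y_{t-1} = (Y_t - Y_{t-1})\mathbf{1}[\tau \ge t]$. The first thing to check is that $\{\tilde Y_t\}$ still satisfies the two hypotheses of Theorem~\ref{thm:freedman}. The increment bound $|\tilde Y_t - \tilde Y_{t-1}| \le 1$ is immediate since the increment is either $0$ or equal to $Y_t - Y_{t-1}$. For the drift condition, note that the event $\{\tau \ge t\}$ is measurable with respect to the information available at time $t-1$ (it depends only on $Y_0,\dots,Y_{t-1}$), so $\E_{t-1}[\tilde Y_t - \tilde Y_{t-1}] = \mathbf{1}[\tau\ge t]\,\E_{t-1}[Y_t - Y_{t-1}] \le -\alpha\,\mathbf{1}[\tau \ge t]\,\E_{t-1}[(Y_t-Y_{t-1})^2] = -\alpha\,\E_{t-1}[(\tilde Y_t - \tilde Y_{t-1})^2]$, using that $\mathbf{1}[\tau\ge t]$ can be pulled out of the conditional expectation and $\mathbf{1}[\tau\ge t]^2 = \mathbf{1}[\tau \ge t]$. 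Thus $\{\tilde Y_t\}$ meets both hypotheses with the same $\alpha$.

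Now I would observe that $\{\exists t \le n: Y_t \ge \lambda\} = \{\tau \le n\}$, and on this event $\tilde Y_n = Y_\tau \ge \lambda$; conversely if $\tau > n$ then $\tilde Y_n = Y_n$, which may or may not be $\ge \lambda$. In either case $\{\exists t\le n: Y_t \ge \lambda\} \subseteq \{\tilde Y_n \ge \lambda\}$, so applying Theorem~\ref{thm:freedman} to the sequence $\{\tilde Y_t\}$ gives $\Pr[\exists t: Y_t \ge \lambda] \le \Pr[\tilde Y_n \ge \lambda] \le \exp(-\alpha\lambda)$, as claimed. (If one wants the statement literally for all $t$ rather than $t\le n$, one takes $n$ to be the full length of the sequence, or notes the bound is uniform in $n$.) I expect the only real point requiring care is the measurability claim that $\{\tau \ge t\}$ is determined by $Y_0,\dots,Y_{t-1}$ and hence that the indicator factors out of $\E_{t-1}[\cdot]$ — everything else is a direct re-verification of the hypotheses, so the proof is short.
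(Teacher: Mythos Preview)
Your proposal is correct and takes essentially the same approach as the paper: both stop the process once it reaches $\lambda$ and apply Theorem~\ref{thm:freedman} to the frozen sequence. The only cosmetic difference is that the paper sets $\tilde Y_t = \lambda$ after the first crossing while you freeze at $Y_\tau$; your choice makes the verification of hypothesis~(ii) cleaner since the indicator $\mathbf{1}[\tau\ge t]$ factors straight out of $\E_{t-1}[\cdot]$.
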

\begin{proof}
Define a new sequence $\tilde{Y}$ defined by 
\begin{align*}
\tilde{Y}_t =
\left\{
	\begin{array}{ll}
		Y_t  & \mbox{if } Y(t')< \lambda \text{ for all }t'\le t \\
		\lambda & \mbox{otherwise } 
	\end{array}
\right.
\end{align*}
i.e. $\tilde{Y}_t$ equals $Y_t$ as long as $Y_t$ is less than $\lambda$. If and when $Y_t$ equals (or exceeds) $\lambda$ for the first time, we stick $\tilde{Y}$ to $\lambda$ and take the further increments of the sequence to be zero. Now we just apply Theorem~\ref{thm:freedman} on $\tilde{Y_t}$ to get
\[ Pr[\tilde{Y}_n  \geq \lambda ]\le \exp\left(-\alpha\lambda\right).\] 
But $Pr[\tilde{Y}_n  \geq \lambda ]$ is equivalent to $Pr[\exists t\le n: {Y}_t  \geq \lambda ]$, finishing the proof.
\end{proof}

\subsection*{Proof of Theorem~\ref{thm:vecgenunified}: Bounding the $\ell_\infty$ discrepancy}
\label{sec:appb}


It should be pointed out why we only get a sub-gaussian bound for $\lambda\le (\sum_i a_{ji}^2)^{1/2}$ and need to add the $+\lambda$ term. For instance if the number of corrupted elements for any $(j,S)$ pair is much smaller than $\lambda$, we might not get a concentration as strong as a sub-gaussian. An example of such a situation can be if for a set $S=\{1,2,\dots,p\}$, at every time $t$, $Z(t)$ contains exactly one of $w_1=\{1,2,\dots,p-1\}$ and $w_2=\{2,3,\dots,p\}$. Now suppose it happens that $w_1\in Z(t)$ until the fractional color of $p$ reaches almost $+1$ and meanwhile fractional color of point $1$ has reached almost $-1$. Then we include $w_2$ in $Z(t)$ (and do not include $w_1$). Now the algorithm sets the fractional color of point $1$ close to $+1$ while making the fractional color of point $p$ close to $-1$. Repeating this many times, we see that this set $S$ has only two corrupted points, but its discrepancy can become quite larger than $\sqrt{2}$. 

We now come to the proof of Theorem~\ref{thm:vecgenunified}. 
\begin{proof}(of Theorem~\ref{thm:vecgenunified})

The final rounding in step~\ref{stp3} of the algorithm can only affect the discrepancy of every set by at most $\sum_i |a_{ji}|/n\le 1$ as every entry $a_{ij}$ is at most $1$ in absolute value. Thus we can safely ignore the effect on discrepancy due to this. 

Fix a row $j$ and a set $S$. Let $C_{j,S}(t)$ denote the set of elements that were not protected for $(j, S)$ at any time before and including $t$. Notice that $C_{j,S}(T)=C_{j,S}$.

Also, let $C_t$ denote the set of elements that were not protected for $(j,S)$ at time $t$. It holds that $C_{j,S}(t)=C_{j,S}(t-1)\cup C_t$. 
For an element $i$, let $t_i$ be the last time $i$ was not included in $C_{j,S}(t)$ i.e. $t_i+1$ is the first time when element $i$ is not protected for $(j,S)$ and becomes corrupt.

Let $disc(t)=\sum_{i\in S} a_{ji}x_t(i)$ be the signed discrepancy for set $S$ and row $j$ at time $t$. 
Also
define the energy of this set at time $t$ as $E(t)=\sum_{i\in  C_{j,S}(t)} a_{ji}^2[x_t(i)^2-x_{t_i}(i)^2]$.
Then these two quantities change with time as follows:
\[ disc(t)-disc(t-1)=\sum_{i\in S} a_{ji}\Delta x_t(i)=\gamma\langle r_t, \sum_{i\in S} a_{ji}u_i^t\rangle=\gamma\langle r_t, \sum_{i\in C_t} a_{ji}u_i^t\rangle .\]
The last equality follows since the algorithm is required to give zero discrepancy to all sets in $Z(t)$ and by the definition of $C_t$. The energy changes with time as
\begin{eqnarray*}
E(t)-E(t-1) &= & \sum_{i\in  C_{j,S}(t-1)} a_{ji}^2[x_t(i)^2-x_{t-1}(i)^2]+\sum_{i\in C_{j,S}(t)\setminus C_{j,S}(t-1)} a_{ji}^2[x_t(i)^2-x_{t_i}(i)^2]\\
& = &  \sum_{i\in  C_{j,S}(t)} a_{ji}^2[x_t(i)^2-x_{t-1}(i)^2] \\
&=& \gamma^2\sum_{i \in C_{j,S}(t)} a_{ji}^2\langle r_t,u_i^t\rangle^2+2\gamma\langle r_t,\sum_{i\in C_{j,S}(t)}a_{ji}^2x_{t-1}(i)u_i^t\rangle
\end{eqnarray*}
The second equality uses the fact that all $i\in C_{j,S}(t)\setminus C_{j,S}(t-1)$ got corrupted for the first time at time $t$ and hence for them $t_i=t-1$.
Define the sequence $\{Y_t\}$ as
\[Y_t=disc(t)-\eta E(t)\]
where the exact value of $\eta$ will be determined later. Then,
\[Y_t-Y_{t-1}=\gamma\langle r_t, \sum_{i\in C_t} a_{ji}u_i^t\rangle -\eta \gamma^2\sum_{i \in C_{j,S}(t)} a_{ji}^2\langle r_t,u_i^t\rangle^2-2\eta\gamma\langle r_t,\sum_{i\in C_{j,S}(t)}a_{ji}^2x_{t-1}(i)u_i^t\rangle.\]
Using Lemma~\ref{lem:randvector}, this gives
\[ \E_{t-1}[Y_t-Y_{t-1}]=-\eta\gamma^2\sum_{i\in C_{j,S}(t)}a_{ji}^2\|u_i^t\|_2^2\]
and
\begin{eqnarray*}
&&\E_{t-1}[(Y_t-Y_{t-1})^2]=\gamma^2\| \sum_{i\in C_t}a_{ji}u_i^t-2\eta\sum_{i\in C_{j,S}(t)}a_{ji}^2x_{t-1}(i)u_i^t\|_2^2+O(\gamma^3n^4) \qquad\textrm{(using Lemma~\ref{lem:randvector})}\\
& \le & (1/\beta)\gamma^2 \left(\sum_{i\in C_{j,S}(t)\setminus C_t}4\eta^2a_{ji}^4x_{t-1}(i)^2\|u_i^t\|_2^2+\sum_{i\in C_t}(a_{ji}-2\eta a_{ji}^2x_{t-1}(i))^2 \|u_i^t\|_2^2\right)+O(\gamma^3n^4)  \\
& \le & (4/\epsilon)\gamma^2 \sum_{i\in C_{j,S}(t)} a_{ji}^2 \|u_i^t\|_2^2+O(\gamma^3n^4) \qquad\textrm{(using $\beta\ge \epsilon/2$ and $|a_{ji}|\le 1$)}
\end{eqnarray*}
where the first inequality uses property (ii) of Universal Vector Colorings and the last inequality holds for $\eta\le 1/5$. We are going to use Theorem~\ref{thm:freedman} with the first term above and can safely ignore the $O(\gamma^3n^4)$ term. This is because the $O(\gamma^3n^4)$ term can only contribute $O(T\gamma^3n^4)=o(1)$ to the total variance and hence to $Y_T-Y_{0}$. 

Using Theorem~\ref{thm:freedman} now with $\alpha=\epsilon\eta/4$, and noting that 
\[ Y_T=disc(T)-\eta E(T) \ge disc(T)-\eta\sum_{i\in C_{j,S}} a_{ji}^2\]
we get
\begin{eqnarray*}
\Pr\left[disc(T) \ge c\lambda\left((\sum_{i\in C_{j,S}}a_{ji}^2)^{1/2}+\lambda\right)\right] & \le &  \Pr\left[Y_T \ge c\lambda\left((\sum_{i\in C_{j,S}}a_{ji}^2)^{1/2}+\lambda\right)-\eta \sum_{i\in C_{j,S}} a_{ji}^2\right] \\
& \le & \exp\left( \frac{-\eta \epsilon c\lambda\left((\sum_{i\in C_{j,S}}a_{ji}^2)^{1/2}+\lambda\right)+\eta^2\epsilon \sum_{i\in C_{j,S}} a_{ji}^2}{4}  \right )
\end{eqnarray*}
For $\lambda\le (\sum_{i\in C_{j,S}(t)} a_{ji}^2)^{1/2}/5c$, choosing $\eta =\frac{c\lambda(\lambda+(\sum_{i\in C_{j,S}} a_{ji}^2)^{1/2})}{2\sum_{i\in C_{j,S}} a_{ji}^2}$, we get
\begin{eqnarray*}
\Pr\left[disc(T) \ge c\lambda\left((\sum_{i\in C_{j,S}}a_{ji}^2)^{1/2}+\lambda\right)\right]   \le  \exp\left( -\lambda^2/2  \right ) 
\end{eqnarray*}
for $c\ge \sqrt{8/\epsilon}$. 
If $\lambda\ge (\sum_{i\in C_{j,S}(t)} a_{ji}^2)^{1/2}/5c$, we choose $\eta=1/5$ to get
\[ \Pr\left[disc(T) \ge c\lambda\left((\sum_{i\in C_{j,S}}a_{ji}^2)^{1/2}+\lambda\right)\right] \le  \exp\left( -\lambda^2/2  \right ) \] 
for $c\ge 10/\epsilon$.
Noting now that $disc(T)$ has a symmetric probability distribution around zero finishes the proof for Theorem~\ref{thm:vecgenunified}. 
\end{proof}

We mention now how to handle the generalisation mentioned in Remark~\ref{rem:relax}. In that case, the increment in discrepancy changes as
\[ disc(t)-disc(t-1) = \gamma \langle r_t, \sum_{i\in C_t} b_{ji} u_{i}^t  \rangle  \]

where $b_{ji}=a_{ji} - v_{S',j,t} (i)$ and $|b_{ji}| =O(|a_{ji}|)$. Energy is still defined in the same way as $E(t)=\sum_{i\in  C_{j,S}(t)} a_{ji}^2[x_t(i)^2-x_{t_i}(i)^2]$. The only place where the proof changes is in the calculation of $\E_{t-1}[(Y_t-Y_{t-1})^2]$; this is still calculated as before but this quantity can increase by a constant factor. But this only affects the value of $\alpha$ by a constant factor and the rest of the proof goes through in the same manner.

\subsection*{Proof of Theorem~\ref{thm:l2discrepancy}: Bounding the $\ell_2$ discrepancy}


For slightly simpler computations, we prove the following Theorem and notice that it suffices to prove Theorem~\ref{thm:l2discrepancy} by replacing $\lambda$ in the following Theorem by $(\sqrt{32/\epsilon})\lambda$. This only affects the constant $c$.

\begin{thm}
\label{thm:l2disc}
There is a constant $c>0$ such that given a matrix $B$ with each column of $\ell_2$ norm at most $1$ (i.e. $\|B\|_2\le 1$), then for any set $S\subseteq [n]$ and $\lambda\ge 0$, the coloring $\chi\in\{-1,1\}^n$ returned by the algorithm satisfies
\[ \Pr\left[\left(\sum_j disc(\chi,j,S)^2\right)^{1/2}\ge c \lambda\left((\sum_j \sum_{i\in C_{j,S}} a_{ji}^2)^{1/2}+\lambda\right)\right] \le 3e^{-\epsilon\lambda^2/64} .\]
\end{thm}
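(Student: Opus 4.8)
The plan is to mirror the proof of Theorem~\ref{thm:vecgenunified} in Appendix~A, but to track the whole discrepancy \emph{vector} over the rows and correct it by an energy term so that the Freedman-type inequality Theorem~\ref{thm:freedman} still applies. Fix $S$. For a row $j$ write $disc_j(t)=\sum_{i\in S}a_{ji}x_t(i)$, set $D(t)=(disc_j(t))_{j\in[m]}$ and $F(t)=\|D(t)\|_2^2=\sum_j disc_j(t)^2$, and keep the notation $C_{j,S}(t)$ (elements not protected for $(j,S)$ up to time $t$), $C^j_t$ (not protected at time $t$, so $C^j_t\subseteq C_{j,S}(t)$), $t^j_i$ (last time $i$ is protected before becoming corrupt for $(j,S)$), and $\sigma^2=\sum_j\sum_{i\in C_{j,S}}a_{ji}^2$. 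Exactly as in the $\ell_\infty$ proof, since a protected subset is a subset sum of constraints in $Z(t)$ (by \eqref{eqn:disjointsets}) and those receive zero vector discrepancy (property (i) of Theorem~\ref{thm:univec}), one gets $disc_j(t)-disc_j(t-1)=\gamma\langle r_t,g_j^t\rangle$ with $g_j^t=\sum_{i\in C^j_t}a_{ji}u_i^t$. Define the energy $E(t)=\sum_j\sum_{i\in C_{j,S}(t)}a_{ji}^2\big(x_t(i)^2-x_{t^j_i}(i)^2\big)$, so that $E(t)\le\sigma^2$ for all $t$ and $E(0)=F(0)=0$. Applying property (ii) of Theorem~\ref{thm:univec} to the restriction of each row $j$ to $C^j_t$ and summing over $j$ gives $\sum_j\|g_j^t\|_2^2\le\tfrac1\beta P_t$, where $P_t:=\sum_j\sum_{i\in C_{j,S}(t)}a_{ji}^2\|u_i^t\|_2^2$, and Lemma~\ref{lem:randvector} gives $\E_{t-1}[F(t)-F(t-1)]=\gamma^2\sum_j\|g_j^t\|_2^2$ and $\E_{t-1}[E(t)-E(t-1)]=\gamma^2 P_t$.

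The candidate potential is $\Psi(t)=F(t)-\eta E(t)$. Its expected drift is $\E_{t-1}[\Psi(t)-\Psi(t-1)]\le-\gamma^2(\eta-\tfrac1\beta)P_t\le 0$ for $\eta\ge 1/\beta$, which is the right sign; the trouble is that its conditional second moment scales with $F(t-1)$ itself. Concretely, expanding $F(t)-F(t-1)=2\sum_j disc_j(t-1)\gamma\langle r_t,g_j^t\rangle+\gamma^2\sum_j\langle r_t,g_j^t\rangle^2$ and collecting the linear-in-$r_t$ part of $\Psi(t)-\Psi(t-1)$ as $2\gamma\langle r_t,W_t\rangle$ with $W_t=\sum_i u_i^t\big(\sum_{j:i\in C^j_t}disc_j(t-1)a_{ji}-\eta x_{t-1}(i)\sum_{j:i\in C_{j,S}(t)}a_{ji}^2\big)$, Cauchy--Schwarz plus the column-norm bound $\sum_j a_{ji}^2\le 1$ and property (ii) yield $\|W_t\|_2^2\le\tfrac2\beta(F(t-1)+\eta^2)P_t$, hence $\E_{t-1}[(\Psi(t)-\Psi(t-1))^2]\le\tfrac{8\gamma^2}\beta(F(t-1)+\eta^2)P_t+O(\gamma^3\mathrm{poly}(n,m,\lambda))$. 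Since $F(t-1)$ is a priori unbounded, there is no uniform $\alpha$ for which drift dominates variance. To fix this I would introduce the stopping time $\tau=\min\{t\le T:F(t)\ge L^2\}$ (and $\tau=T$ otherwise), where $L=c\lambda(\sigma+\lambda)$ is the target value, and apply Freedman to the stopped process $\Psi(t\wedge\tau)$: for $t\le\tau$ one has $F(t-1)<L^2$, so all increments are $O(\gamma\cdot\mathrm{poly}(n,m,\lambda))=o(1)$ (as $\gamma$ is chosen minuscule; and we may assume $\lambda=O(\mathrm{poly}(n,m))$, the statement being vacuous otherwise) and the drift-versus-variance condition holds with $\alpha=\tfrac{\beta(\eta-1/\beta)}{8(L^2+\eta^2)}$, the $O(\gamma^3\cdot\mathrm{poly})$ error terms being absorbed over all $T$ steps exactly as in Appendix~A.

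Theorem~\ref{thm:freedman} then gives $\Pr[\Psi(T\wedge\tau)\ge L^2-\eta\sigma^2]\le\exp(-\alpha(L^2-\eta\sigma^2))$; and on $\{\tau\le T\}$, using $F(\tau)\ge L^2$ and $E(\tau)\le\sigma^2$, we have $\Psi(T\wedge\tau)=\Psi(\tau)\ge L^2-\eta\sigma^2$, so $\Pr[F(T)\ge L^2]\le\Pr[\tau\le T]\le\exp(-\alpha(L^2-\eta\sigma^2))$. It remains to choose $\eta$ so that $\alpha(L^2-\eta\sigma^2)\ge\epsilon\lambda^2/64$. Taking $\eta=\tfrac{c^2}2\lambda^2$ with $c$ a suitable absolute constant (e.g. $c=2$) makes $\eta-\tfrac1\beta\ge\tfrac\eta2$ and $L^2+\eta^2=\Theta(\lambda^2(\sigma^2+\lambda^2))$, so $\alpha=\Omega\big(\tfrac{\beta}{\sigma^2+\lambda^2}\big)$ and $L^2-\eta\sigma^2\ge c^2\lambda^2(\tfrac12\sigma^2+\lambda^2)=\Omega(\lambda^2(\sigma^2+\lambda^2))$, giving $\alpha(L^2-\eta\sigma^2)=\Omega(\beta\lambda^2)\ge\epsilon\lambda^2/64$ after fixing constants (and one checks $\alpha<1/2$, since $\alpha\le\tfrac\beta{8\eta}\le\tfrac{\beta}{16}$). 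This requires $\lambda$ to exceed an absolute constant (needed e.g. for $\eta\ge 2/\beta$ and $\alpha<1/2$), but for $\lambda$ below $\sqrt{64\ln 3/\epsilon}$ one has $3e^{-\epsilon\lambda^2/64}\ge1$ and the claim is trivial -- which is precisely the role of the factor $3$. Together with the $O(n^{-2})$ failure probability of Lemma~\ref{thm:terminate} and the negligible effect of the final rounding step, this yields the claimed $3e^{-\epsilon\lambda^2/64}$ bound.

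The main obstacle -- and the only genuine departure from the $\ell_\infty$ argument -- is that the $\ell_2$ quantity $F(t)$ is quadratic, so the conditional variance of any energy-corrected potential built from it is controlled by the current value of $F$ rather than by a fixed proxy; this prevents a direct, global application of Theorem~\ref{thm:freedman}. The stopping-time device, which freezes the walk the moment $F$ reaches the target $L^2$, is what keeps $F(t-1)$ uniformly bounded before $\tau$ and restores the drift-dominates-variance inequality. Getting the final exponent to scale like $\lambda^2$ (instead of like a constant) then forces the non-obvious scaling $\eta\asymp\lambda^2$, in contrast with the $\ell_\infty$ proof where $\eta$ can be taken at most $1/5$.
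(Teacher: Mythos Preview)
Your proof is correct and takes a genuinely different route from the paper's.

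The paper decomposes the squared $\ell_2$ discrepancy as $D(t)^2=L_t+Q_t$, where $L_t$ collects the linear-in-$r_t$ increments and $Q_t$ the quadratic ones. It then proves two separate claims: Claim~\ref{claim:sketchquad} bounds $Q_t$ uniformly via one Freedman application with a constant $\eta=4/\epsilon$, and Claim~\ref{claim:sketchlinear} bounds $L_t$ \emph{conditioned} on the event $G_t=\{D(t')^2<H\text{ for all }t'<t\}$ via a second Freedman application with $\eta=\Theta(\lambda^2)$. The two claims are then stitched together by the bootstrapping argument in \eqref{eqncombi1}--\eqref{eqncombi3}, which is where the factor~$3$ arises.

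You instead work with the single potential $\Psi(t)=F(t)-\eta E(t)$ and a stopping time $\tau=\min\{t:F(t)\ge L^2\}$, applying Freedman once to the stopped process. This collapses the $L/Q$ split and the recombination argument into one step; the role of the paper's event $G_t$ is played by $\{t\le\tau\}$, and the role of the separate $Q_t$ bound is absorbed into your drift computation $\E_{t-1}[\Psi(t)-\Psi(t-1)]\le-\gamma^2(\eta-1/\beta)P_t$. Both proofs rest on the same two ingredients --- property~(ii) of universal vector colorings to control the variance, and the column-norm bound $\sum_j a_{ji}^2\le1$ to handle the $\eta^2$ term --- and both are forced into the regime $\eta=\Theta(\lambda^2)$, which you correctly flag as the essential departure from the $\ell_\infty$ proof. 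Your approach is a bit more economical; the paper's is more modular and makes the linear/quadratic structure explicit. One small remark: your factor~$3$ serves only to make the bound vacuous for small $\lambda$, whereas in the paper it genuinely arises from combining three events.
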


The increase in squared $\ell_2$ discrepancy due to the rounding in step~\ref{stp3} of the algorithm is at most $\sum_j (\sum_i a_{ji}/n)^2$ which is at most $1$ as $\sum_j a_{ji}^2\le 1$ for all $i$. We ignore the effect due to this rounding in the rest of the analysis.

\begin{obs}
\label{obs:trivial}
We can assume that $\lambda^2\ge 64/\epsilon$, as otherwise the right hand side in the equation above is at least one and Theorem~\ref{thm:l2disc} is trivially true.
\end{obs}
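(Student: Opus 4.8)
The plan is to observe that the right-hand side of the inequality in Theorem~\ref{thm:l2disc} is vacuous when $\lambda$ is small, so that the regime $\lambda^2 < 64/\epsilon$ requires no argument at all, and we may freely restrict the main proof of Theorem~\ref{thm:l2disc} to the case $\lambda^2 \ge 64/\epsilon$.

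Concretely, suppose $\lambda^2 < 64/\epsilon$. Then $\epsilon\lambda^2/64 < 1$, and hence $e^{-\epsilon\lambda^2/64} > e^{-1} > 1/3$, so that $3 e^{-\epsilon\lambda^2/64} > 1$. Since the left-hand side of the inequality in Theorem~\ref{thm:l2disc} is a probability and therefore at most $1$, the bound
\[ \Pr\left[\left(\sum_j disc(\chi,j,S)^2\right)^{1/2}\ge c \lambda\left(\Bigl(\sum_j \sum_{i\in C_{j,S}} a_{ji}^2\Bigr)^{1/2}+\lambda\right)\right] \le 3e^{-\epsilon\lambda^2/64} \]
holds automatically in this regime, for every choice of the constant $c>0$ and every $j,S$. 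Thus nothing is lost, when proving Theorem~\ref{thm:l2disc}, by adding the hypothesis $\lambda^2 \ge 64/\epsilon$.

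There is essentially no obstacle here: the only content is the elementary numerical fact $3/e > 1$, i.e.\ $e < 3$. The observation is stated purely to justify the normalization $\lambda^2 \ge 64/\epsilon$ that is convenient in the subsequent martingale analysis — for instance, it guarantees that the relevant concentration exponents are bounded below by an absolute constant and lets one absorb the lower-order ($O(\gamma^3 n^4)$-type) error terms without affecting the final probability bound.
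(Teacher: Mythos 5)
Your argument is correct and is exactly the paper's (implicit) reasoning: for $\lambda^2 < 64/\epsilon$ the right-hand side $3e^{-\epsilon\lambda^2/64} > 3e^{-1} > 1$, so the probability bound is vacuous and one may assume $\lambda^2 \ge 64/\epsilon$ in the proof of Theorem~\ref{thm:l2disc}. Nothing further is needed.
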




Fix a set $S\subseteq[n]$. Let 
$ disc_{t}(j,S)=\sum_{i\in S} a_{ji}x_t(i)$
denote the signed discrepancy of set $S$ and row $j$ at time $t$. Define
\[ D(t)^2=\sum_{j=1}^m disc_{t}(j,S)^2\]
to be the squared $\ell_2$ discrepancy of $S$ at time $t$. 
Let $H=c^2\lambda^2 M $, where we use $M$ as a shorthand for $\left((\sum_j \sum_{i\in C_{j,S}} a_{ji}^2)^{1/2}+\lambda\right)^{2}$. A variable will be called \textit{huge} if its value is at least $H$ and \textit{small} otherwise.
Define 
\[\Delta D(t)^2 = D(t)^2-D(t-1)^2 \]
to be the change in squared $\ell_2$ discrepancy at time $t$. Let $C_j^t$ be the corrupted (not protected) elements for the pair $(j,S)$ at time $t$. Then, 
\begin{align}
\Delta D(t)^2 & = D(t)^2-D(t-1)^2 
 = \sum_{j=1}^m disc_{t}(j,S)^2-disc_{t-1}(j,S)^2 \nonumber\\
& = \sum_{j=1}^m (\sum_{i\in S}a_{ji}x_t(i))^2-(\sum_{i\in S}a_{ji}x_{t-1}(i))^2 \nonumber\\
& = \sum_{j=1}^m \sum_{i,l\in S}a_{ji}a_{jl}[x_t(i)x_t(l)-x_{t-1}(i)x_{t-1}(l)] \nonumber\\
& = \sum_{j=1}^m \sum_{i,l\in S}a_{ji}a_{jl} [\gamma^2\langle r_t,u_i^t\rangle\langle r_t,u_l^t\rangle+\gamma\langle r_t,x_{t-1}(l)u_i^t+x_{t-1}(i)u_l^t\rangle]\nonumber\\
& = \gamma^2\sum_{j=1}^m \langle r_t, \sum_{i\in S}a_{ji}u_i^t\rangle^2 +2\gamma\langle r_t, \sum_{j=1}^m disc_{t-1}(j,S)\sum_{i\in S}a_{ji}u_i^t\rangle \nonumber\\
& = \underbrace{\gamma^2\sum_{j=1}^m \langle r_t, \sum_{i\in C_j^t}a_{ji}u_i^t\rangle^2}_\text{$\Delta_t Q$} +\underbrace{2\gamma\langle r_t, \sum_{j=1}^m disc_{t-1}(j,S)\sum_{i\in C_j^t}a_{ji}u_i^t\rangle}_\text{$\Delta_t L$} \label{fmarequiem}
\end{align}

The last equality follows by definition of corrupt elements $C_j^t$ and property (i) of Universal Vector Colorings.

Let the first quadratic term in (\ref{fmarequiem}) be $\Delta_t Q$ and the second linear term be $\Delta_t L$. Let $Q_t=\sum_{t'\le t}\Delta_{t'}Q$ and $L_t=\sum_{t'\le t}\Delta_{t'}L$. 

Let $F$ denote the event $\{\forall t: Q_t\le (8/\epsilon) M\}$ and let $G_t$ denote the event $\{\forall t'<t: D(t')^2<H\}$ i.e. the squared $\ell_2$ discrepancy is always small till time $t-1$. We claim the following concentrations for the terms $L_t$ and $Q_t$. We defer their proof for the moment and first show how they imply the theorem.
\begin{claim}
\label{claim:sketchlinear}
$\Pr[\exists t: L_t \ge (c^2-1)\lambda^2N \cap G_t]\le e^{-\epsilon\lambda^2/64}$.
\end{claim}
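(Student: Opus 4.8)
\textbf{Plan for the proof of Claim~\ref{claim:sketchlinear}.}
The quantity $L_t=\sum_{t'\le t}\Delta_{t'}L$ is a martingale: by Lemma~\ref{lem:randvector}, $\E_{t-1}[\Delta_t L]=0$ since $\Delta_t L=2\gamma\langle r_t,\sum_j disc_{t-1}(j,S)\sum_{i\in C_j^t}a_{ji}u_i^t\rangle$ is linear in $r_t$. The plan is to apply (a martingale concentration inequality such as Freedman's inequality, or Corollary~\ref{cor:freedmanbetter}) to the stopped version of $L_t$, where we stop the first time $D(t)^2\ge H$. The reason to stop is exactly the event $G_t$ in the statement: we only need the bound on the event that the squared $\ell_2$-discrepancy has stayed small, so we may assume $|disc_{t-1}(j,S)|$ is controlled by $\sqrt H$ throughout. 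First I would compute the one-step predictable quadratic variation
\[
\E_{t-1}[(\Delta_t L)^2]
= 4\gamma^2\,\Bigl\|\sum_j disc_{t-1}(j,S)\sum_{i\in C_j^t}a_{ji}u_i^t\Bigr\|_2^2
\]
using Lemma~\ref{lem:randvector}. To bound this I would invoke property (ii) of the Universal Vector Colorings (Theorem~\ref{thm:univec}) to ``decouple'' the sum: roughly, $\|\sum_i b(i)u_i^t\|_2^2 \le (1/\beta)\sum_i b(i)^2\|u_i^t\|_2^2 \le (2/\epsilon)\sum_i b(i)^2\|u_i^t\|_2^2$, where here the coefficient $b(i)$ aggregates $\sum_{j: i\in C_j^t} disc_{t-1}(j,S)a_{ji}$ over the rows that are not protected at element $i$. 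Expanding the square and using $\|u_i^t\|_2\le 1$ together with $\sum_j a_{ji}^2$ bounds, this should give a bound of the form $\E_{t-1}[(\Delta_t L)^2]\le O(\gamma^2/\epsilon)\cdot H \cdot (\text{something})$, and crucially the ``something'' will telescope or be controlled by $\Delta_t Q$, i.e.\ by $Q_t$, which is itself controlled on the event $F$ (or can be absorbed). I would also need the increment bound $|\Delta_t L|\le 1$ after appropriate rescaling, which follows from $|\gamma\langle r_t, u_i^t\rangle|=o(1/n)$ as already observed in the analysis.

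The key mechanism, as in the proof of Theorem~\ref{thm:vecgenunified}, is to absorb the quadratic variation of $L_t$ into the drift of $Q_t$: one shows that on $G_t$, the total predictable quadratic variation $\sum_{t'\le t}\E_{t'-1}[(\Delta_{t'}L)^2]$ is bounded by a constant times $H\cdot Q_t^{1/2}$-type expression, or more precisely by $O(H/\epsilon)$ times $\sum_{t'}\E_{t'-1}[\Delta_{t'}Q]$, which in turn is $O(M/\epsilon)$ by a separate (easier) argument controlling $\E[Q_T]$ — this is the analogue of bounding the energy $E(t)$ in the $\ell_\infty$ proof. Plugging into Freedman's inequality with threshold $(c^2-1)\lambda^2 N$ (where $N=M$, matching the notation $H=c^2\lambda^2 M$ so that $(c^2-1)\lambda^2 M$ is the right scale) and variance $O(H/\epsilon)=O(c^2\lambda^2 M/\epsilon)$, the exponent becomes $-\Theta(((c^2-1)\lambda^2 M)^2/(c^2\lambda^2 M/\epsilon)) = -\Theta(\epsilon\lambda^2\cdot (c^2-1)^2/c^2)$, which for $c$ a large enough constant is at most $-\epsilon\lambda^2/64$, as claimed. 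The intersection with $G_t$ is handled by the standard stopped-martingale trick of Corollary~\ref{cor:freedmanbetter}: replace $L_t$ by $\tilde L_t$ frozen at the first time $G_t$ fails, so that the pathwise bounds on the increments and variance hold unconditionally for $\tilde L$.

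\textbf{Main obstacle.} The technically delicate step is controlling $\E_{t-1}[(\Delta_t L)^2]$ — I expect this to be the crux. Unlike the $\ell_\infty$ case where the corrupted set is a single row $C_{j,S}$, here the coefficient vector feeding into the Universal Vector Coloring inequality is a $disc_{t-1}$-weighted combination across \emph{all} rows $j$ and all elements in $\bigcup_j C_j^t$, so one must carefully unpack how property (ii) interacts with this double sum without losing more than constant factors in $\epsilon$. Making the quadratic variation bound clean enough that it is genuinely dominated by $Q_t$ (so that conditioning on the event $F=\{Q_t\le (8/\epsilon)M\ \forall t\}$ closes the loop) is where the real work lies; everything else is a routine application of Freedman-type concentration and the stopping-time device. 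A secondary, purely bookkeeping issue is keeping the constants consistent so that the final exponent lands at exactly $\epsilon\lambda^2/64$ and the $e^{-\epsilon\lambda^2/64}$ terms combine with the analogous bounds on $Q_t$ and $D(t)^2$ to yield the factor $3$ in Theorem~\ref{thm:l2disc}.
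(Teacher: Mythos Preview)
Your high-level ingredients are correct: use property (ii) of the Universal Vector Coloring on the coefficient $b(i)=\sum_{j:i\in C_j^t}disc_{t-1}(j,S)a_{ji}$, bound $b(i)^2$ via Cauchy--Schwarz by $D(t-1)^2\sum_{j:i\in C_j^t}a_{ji}^2\le H\sum_{j:i\in C_j^t}a_{ji}^2$ on the event $G_t$, and handle the conditioning on $G_t$ by the stopped-process device. This is exactly what the paper does.

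The gap is in how you plan to close the argument. You propose to bound the total predictable quadratic variation $\sum_{t'\le t}\E_{t'-1}[(\Delta_{t'}L)^2]$ by $O(H/\epsilon)\sum_{t'}\E_{t'-1}[\Delta_{t'}Q]$ and then argue the latter is $O(M/\epsilon)$. Neither step works. First, the inequality goes the wrong way: property (ii) gives $\E_{t-1}[\Delta_t Q]=\gamma^2\sum_j\|\sum_{i\in C_j^t}a_{ji}u_i^t\|_2^2\le (2/\epsilon)\gamma^2\sum_j\sum_{i\in C_j^t}a_{ji}^2\|u_i^t\|_2^2$, i.e.\ the compensator of $Q$ is bounded \emph{above} by the quantity controlling the variance of $L$, not below. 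Second, even the correct target $\sum_t\gamma^2\sum_j\sum_{i\in C_{j,S}(t)}a_{ji}^2\|u_i^t\|_2^2$ is a random variable (it is essentially the compensator of the energy process) with no obvious deterministic or high-probability $O(M)$ bound --- only its \emph{expectation} is $O(M)$. So a direct application of standard Freedman with a bound on total quadratic variation does not go through.

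The paper's fix is to avoid bounding the total quadratic variation altogether. As in the $\ell_\infty$ proof, it subtracts the energy directly: set $Y_t=L_t-\eta\sum_j\sum_{i\in C_{j,S}(t)}a_{ji}^2[x_t(i)^2-x_{t_{ij}}(i)^2]$. Then $\E_{t-1}[Y_t-Y_{t-1}]=-\eta\gamma^2\sum_j\sum_{i\in C_{j,S}(t)}a_{ji}^2\|u_i^t\|_2^2$, while (after the Cauchy--Schwarz step on $G_t$, and absorbing the extra energy-increment cross term) $\E_{t-1}[(Y_t-Y_{t-1})^2]\le (8/\epsilon)c^2\gamma^2 H\sum_j\sum_{i\in C_{j,S}(t)}a_{ji}^2\|u_i^t\|_2^2$. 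The point is that the \emph{same} random quantity appears in both drift and variance, so the modified Freedman inequality (Theorem~\ref{thm:freedman}/Corollary~\ref{cor:freedmanbetter}), which only requires the pointwise condition $\E_{t-1}[Y_t-Y_{t-1}]\le -\alpha\,\E_{t-1}[(Y_t-Y_{t-1})^2]$, applies with $\alpha=\epsilon\eta/(8c^2H)$ --- no bound on cumulative variance is needed. Finally the deterministic bound $\sum_j\sum_{i\in C_{j,S}(t)}a_{ji}^2[x_t(i)^2-x_{t_{ij}}(i)^2]\le M$ converts the tail bound on $Y_t$ back to one on $L_t$, and optimizing $\eta$ yields $e^{-\epsilon\lambda^2/64}$.
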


\begin{claim}
\label{claim:sketchquad}
$ \Pr[\overline{F}]=\Pr [\exists t: Q_t>(8/\epsilon) M]\le e^{-\epsilon\lambda^2/64} $.
\end{claim}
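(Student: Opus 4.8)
The plan is to reuse the energy‑based martingale argument from the proof of Theorem~\ref{thm:vecgenunified}, now run simultaneously over all rows and with the non‑negative quadratic accumulator $Q_t$ in place of the signed discrepancy. By analogy with the energy $E(t)$ used there, I would define, for the fixed set $S$,
\[ \mathcal{E}(t) \;=\; \sum_{j=1}^m \sum_{i \in C_{j,S}(t)} a_{ji}^2\big(x_t(i)^2 - x_{t_i^{(j)}}(i)^2\big), \]
where $C_{j,S}(t)=\bigcup_{t'\le t}C_j^{t'}$ is the set of elements ever unprotected for $(j,S)$ up to time $t$ and $t_i^{(j)}$ is the step just before $i$ first got corrupted for $(j,S)$ (so $t_i^{(j)}=t-1$ for a newly corrupted element, and $\mathcal E(0)=0$). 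Since $x_t(i)^2\le 1$ and $C_{j,S}(t)\subseteq C_{j,S}$, we get $\mathcal{E}(t)\le \sum_j\sum_{i\in C_{j,S}}a_{ji}^2\le M$ for every $t$. The point of $\mathcal E$ is exactly as in the $\ell_\infty$ proof: it absorbs the accumulated ``variance rate'' of $Q_t$ while itself staying uniformly bounded by $M$.

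Writing $P_t:=\sum_j\sum_{i\in C_{j,S}(t)}a_{ji}^2\|u_i^t\|_2^2$, the same telescoping across newly corrupted elements as in the $\ell_\infty$ proof gives
\[ \mathcal{E}(t)-\mathcal{E}(t-1)=\gamma^2\sum_j\sum_{i\in C_{j,S}(t)}a_{ji}^2\langle r_t,u_i^t\rangle^2+2\gamma\big\langle r_t,\ \psi_t\big\rangle,\qquad \psi_t=\sum_i\Big(x_{t-1}(i)\!\!\sum_{j:\,i\in C_{j,S}(t)}\!\!a_{ji}^2\Big)u_i^t, \]
so $\E_{t-1}[\mathcal{E}(t)-\mathcal{E}(t-1)]=\gamma^2 P_t$ by Lemma~\ref{lem:randvector}. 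For $Q_t$, using the expression for $\Delta_t Q$ from \eqref{fmarequiem} together with Lemma~\ref{lem:randvector} and property (ii) of Theorem~\ref{thm:univec} (applied per row, with $\beta=(1-\delta)/2\ge\epsilon/2$), and using $\|u_i^t\|_2\le 1$ and $C_j^t\subseteq C_{j,S}(t)$, one gets $\E_{t-1}[\Delta_t Q]=\gamma^2\sum_j\|\sum_{i\in C_j^t}a_{ji}u_i^t\|_2^2\le(2/\epsilon)\gamma^2 P_t$. Now set $Y_t:=Q_t-(4/\epsilon)\mathcal E(t)$, so $Y_0=0$ and $\E_{t-1}[Y_t-Y_{t-1}]\le(2/\epsilon)\gamma^2 P_t-(4/\epsilon)\gamma^2 P_t=-(2/\epsilon)\gamma^2 P_t$. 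For the second moment, the only non‑negligible part of $Y_t-Y_{t-1}$ is its linear‑in‑$r_t$ part $-(8/\epsilon)\gamma\langle r_t,\psi_t\rangle$: the coefficient vector of $\psi_t$ has all entries bounded by $1$ in absolute value (this is exactly where the hypothesis $\sum_j a_{ji}^2\le 1$ enters), so property (ii) gives $\|\psi_t\|_2^2\le(2/\epsilon)\sum_i b_i^2\|u_i^t\|_2^2\le(2/\epsilon)P_t$, whence $\E_{t-1}[(Y_t-Y_{t-1})^2]\le (128/\epsilon^{3})\gamma^2 P_t + O(\gamma^{3}\mathrm{poly}(m,n))$; as in the proof of Theorem~\ref{thm:vecgenunified} the error term contributes $o(1)$ in total over the $T$ steps and is dropped, and $|Y_t-Y_{t-1}|=o(1)\le 1$ by the choice of $\gamma$.

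Combining, $Y_t$ satisfies the hypotheses of Corollary~\ref{cor:freedmanbetter} with $\alpha=\Theta(\epsilon^2)$ (concretely $\alpha\ge\epsilon^2/64$), so $\Pr[\exists t:\ Y_t\ge (4/\epsilon)M]\le \exp(-\alpha(4/\epsilon)M)\le \exp(-\epsilon M/16)$, which is at most $e^{-\epsilon\lambda^2/64}$ since $M\ge\lambda^2$. Because $\mathcal E(t)\le M$ for every $t$, the event $\{\exists t: Q_t>(8/\epsilon)M\}$ is contained in $\{\exists t: Y_t>(8/\epsilon)M-(4/\epsilon)M=(4/\epsilon)M\}$, which is the desired bound $\Pr[\overline F]\le e^{-\epsilon\lambda^2/64}$.

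The main obstacle is the drift estimate: one has to make sure that every quantity appearing in $\E_{t-1}[(Y_t-Y_{t-1})^2]$ is bounded by a constant depending only on $\epsilon$ times the very same rate $P_t$ that drives $\E_{t-1}[Y_t-Y_{t-1}]$, so that $\alpha$ is an absolute constant and not something shrinking with $\gamma$. This is precisely where the \emph{universality} of the vector colorings is needed: property (ii) of Theorem~\ref{thm:univec} must hold for the ``mixed'' vectors $\sum_{i\in C_j^t}a_{ji}u_i^t$ and $\psi_t$, not merely for indicators of sets. The other delicate point is purely bookkeeping — one must use the per‑step corrupted sets $C_j^t$ inside $\Delta_t Q$ but the monotone sets $C_{j,S}(t)$ inside $\mathcal E(t)$, exploiting $C_j^t\subseteq C_{j,S}(t)$ to relate the two, which is what keeps $\mathcal E(t)\le M$ throughout while still letting the whole argument piggyback on the telescoping already established for the $\ell_\infty$ case.
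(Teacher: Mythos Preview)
Your proof is correct and follows essentially the same approach as the paper: the paper also defines $Y_t=Q_t-\eta\sum_j\sum_{i\in C_{j,S}(t)}a_{ji}^2[x_t(i)^2-x_{t_{ij}}(i)^2]$ with $\eta=4/\epsilon$, obtains the same drift $-(2/\epsilon)\gamma^2P_t$ and variance bound $(128/\epsilon^3)\gamma^2P_t$, and applies Corollary~\ref{cor:freedmanbetter} with $\alpha=\epsilon^2/64$ to conclude. The one place you are slightly terse is the step $\sum_i b_i^2\|u_i^t\|_2^2\le P_t$, which uses not just $|b_i|\le 1$ but the finer bound $b_i^2\le(\sum_{j:i\in C_{j,S}(t)}a_{ji}^2)^2\le\sum_{j:i\in C_{j,S}(t)}a_{ji}^2$ (again from $\sum_j a_{ji}^2\le 1$); this is exactly what the paper does as well.
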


\begin{proof}(of Theorem~\ref{thm:l2disc})
We can bound the probability in theorem statement as
\begin{align}
\Pr[\exists t: D(t)^2\ge H] & \le \Pr[\exists t: L_t \ge H -(8/\epsilon)M \cap F ] +\Pr[\overline{F} ] \nonumber\\
& \le   \Pr[\exists t: L_t \ge (c^2-1)\lambda^2M \cap F ] +Pr[\overline{F} ] \label{end3}
\end{align}
The first inequality uses that $D(t)^2=L_t+Q_t$, and hence $D(t)^2\ge H$ and $Q_t\le (8/\epsilon)M$ imply $L_t\ge H-(8/\epsilon)M$ which is at least $(c^2-1)\lambda^2M$ using Observation~\ref{obs:trivial}. We split the first term above as follows
\begin{eqnarray}
 \Pr[\exists t: L_t \ge (c^2-1)\lambda^2M \cap F ] &\le&  \Pr[\exists t: L_t \ge (c^2-1)\lambda^2M \cap G_t ]  
 +\Pr[\exists t:\overline{G_t}\cap F ]  \label{eqncombi1}
 \end{eqnarray}
Let us first look at the second term in the right hand side above. Since the event $\overline{G_t}$ means that the squared $\ell_2$ discrepancy must have become huge at some time prior to $t$, we get
\begin{eqnarray} 
\Pr[\exists t:  \overline{G_t}\cap F ]
& =& \Pr[\{\exists t: \exists t'<t: D(t')^2\ge H \}\cap F ] \nonumber\\
& =& \Pr[\{\exists t: D(t)^2\ge H\}\cap F]  \label{eqncombi2}
\end{eqnarray}

Conditioned on $F$ occuring, let $t_s$ be the smallest time for which it holds that $D(t_s)^2\ge H$. But this implies $D(t')^2 < H$ for all $t'<t_s$, i.e. $G_{t_s}$ occurs. Also $D(t_s)^2\ge H$ and $F$ together imply $L_{t_s}\ge (c^2-1)\lambda^2M$. So in fact for time $t_s$, both the events $L_{t_s} \ge (c^2-1)\lambda^2M$ and $G_{t_s}$ must occur. Hence
\begin{equation}
  \Pr[\{\exists t: D(t)^2\ge H\}\cap F]\le \Pr[\exists t: L_t \ge (c^2-1)\lambda^2M \cap G_t ].  \label{eqncombi3}
  \end{equation}
Combining equations (\ref{end3}), (\ref{eqncombi1}), (\ref{eqncombi2}) and (\ref{eqncombi3}) we see that
\begin{eqnarray*}
\Pr[\exists t: D(t)^2\ge H]  
&\le& 2 \Pr[\exists t: L_t \ge (c^2-1)\lambda^2M\cap G_t ] +Pr[\overline{F} ] \\
&\le & 3e^{-\epsilon\lambda^2/64} \qquad\textrm{(using Claims \ref{claim:sketchlinear} and \ref{claim:sketchquad}) }
\end{eqnarray*}
\end{proof}

It only remains to prove Claims \ref{claim:sketchlinear} and \ref{claim:sketchquad}. We prove them now.

\begin{proof}(of Claim~\ref{claim:sketchlinear})
It is easy to see that $L_t$ is a martingale. We bound it similar to the $\ell_\infty$ discrepancy. For an element $i$ and row $j$, let $t_{ij}$ denote the last time $i$ was not included in $C_{j,S}(t)$ i.e. $t_{ij}+1$ is the first time when element $i$ becomes corrupt for $(j,S)$. 
Define the random variable
\[Y_t=L_t-\eta  \sum_j\sum_{i\in C_{j,S}(t)} a_{ji}^2\left[x_t(i)^2 -x_{t_{ij}}(i)^2\right].\]
where $\eta$ is a parameter whose exact value will be determined later. Then, 
\[ Y_t-Y_{t-1} = 2\gamma \left\langle r_t,\sum_j disc_{t-1}(j,S)\sum_{i\in C_j^t}a_{ji}u_i^t -\eta  \sum_j\sum_{i\in C_{j,S}(t)}a_{ji}^2x_{t-1}(i)u_i^t\right\rangle -\eta\gamma^2  \sum_j\sum_{i\in C_{j,S}(t)}a_{ji}^2\langle r_t,u_i^t\rangle^2  .\]
Using Lemma~\ref{lem:randvector}, this gives
 \[ \E_{t-1}[Y_t-Y_{t-1}]=-\eta\gamma^2 \sum_j\sum_{i\in C_{j,S}(t)}a_{ji}^2 \|u_i^t\|_2^2\]
and 
\begin{eqnarray*}
&&\E_{t-1}[(Y_t-Y_{t-1})^2] =   O(n^8m^4\gamma^3) +4\gamma^2\|\sum_j  disc_{t-1}(j,S)\sum_{i\in C_{j}^t}a_{ji}u_i^t-\eta\sum_j\sum_{i\in C_{j,S}(t)}a_{ji}^2x_{t-1}(i)u_i^t\|_2^2 \\
& \le & O(n^8m^4\gamma^3) +(8/\epsilon)\gamma^2 \sum_{i} \left[\sum_{j:i\in C_{j}^t}  disc_{t-1}(j,S)a_{ji}-\eta\sum_{j:i\in C_{j,S}(t)} a_{ji}^2x_{t-1}(i)\right]^2\|u_i^t\|_2^2 
\end{eqnarray*}
where the inequality uses property (ii) of Universal Vector Coloring.
Conditioned on the event $G_t$, the above can be bounded as follows. First using Cauchy-Schwarz inequality, we observe that 
\[ (\sum_{j:i\in C_j^t}disc_{t-1}(j,S)a_{ji})^2\le D(t-1)^2  \sum_{j:i\in C_j^t} a_{ji}^2\le H  \sum_{j:i\in C_j^t} a_{ji}^2 .\]
Notice here that the variance depends on the $\ell_2$ discrepancy at the previous time step, and thus to ensure that the variance remains small, we needed to ensure that the $\ell_2$ discrepancy never becomes huge. For this reason, we work with a bound conditioned on $G_t$. 
Now for $\eta\le (c/2)\sqrt{H}$ and $c\ge 2$, the coefficient of $\|u_i^t\|_2^2$ in the variance can be bounded as 
\begin{eqnarray*}
 \left[\sum_{j:i\in C_j^t}  disc_{t-1}(j,S)a_{ji}-\eta\sum_{j:i\in C_{j,S}(t)} a_{ji}^2x_{t-1}(i)\right]^2 &\le & 2H  \sum_{j:i\in C_j^t} a_{ji}^2+2\eta^2(\sum_{j:i\in C_{j,S}(t)} a_{ji}^2)^2 \\
 &\le & (c^2/2)H  \sum_{j:i\in C_{j,S}(t)} a_{ji}^2 (1+ \sum_{j} a_{ji}^2) \\
 & \le & c^2H  \sum_{j:i\in C_{j,S}(t)} a_{ji}^2 \qquad\textrm{(using $\|B\|_2\le 1$)}
 \end{eqnarray*}
Thus (ignoring the $O(n^8m^4\gamma^3)$ term like before) 
\[ \E_{t-1}[(Y_t-Y_{t-1})^2] \le  (8/\epsilon)c^2\gamma^2 H \sum_{i}\sum_{j:i\in C_{j,S}(t)}a_{ji}^2\|u_i^t\|_2^2= (8/\epsilon)c^2\gamma^2 H \sum_{j}\sum_{i\in C_{j,S}(t)}a_{ji}^2\|u_i^t\|_2^2 .\]
Using Corollary~\ref{cor:freedmanbetter} now with $\alpha=\epsilon\eta/8c^2H$, and noting that 
\[ Y_t\ge L(t)-\eta \sum_j\sum_{i\in C_{j,S}(t)} a_{ji}^2 \ge L(t)-\eta M\]
we get
\begin{eqnarray}
\Pr[\{\exists t: L_t\ge \left(\frac{c^2-1}{c^2}\right)H  \cap G_t\}] &\le& \Pr [\{\exists t: Y_t\ge \left(\frac{c^2-1}{c^2}\right)H -\eta M \cap G_t\}] \nonumber\\
& \le & \exp\left( \frac{-\epsilon\eta(\frac{c^2-1}{c^2}H-\eta M)}{8c^2H}\right) \label{eqnlindsey}
\end{eqnarray}

Above we used Corollary~\ref{cor:freedmanbetter} for a sequence of random variables conditioned on some event ($G_t$), and it needs to be justified why we can do that. The conditioning on $G_t$ is used to imply that the variance of the sequence $Y_t$ is small. We can then define another sequence $\tilde{Y}_t$ such that if the variance ever becomes large, increments in $\tilde{Y}_t$ are zero from then on. We can then apply Corollary~\ref{cor:freedmanbetter} on the sequence $\tilde{Y}_t$ to get the same result.

Getting back to the proof, 
(\ref{eqnlindsey}) is minimized at $\eta=\frac{c^2-1}{2c^2M}H$ and we get 
\begin{eqnarray*}
\Pr[\{\exists t: L_t\ge \left(\frac{c^2-1}{c^2}\right)H  \cap G_t\}]  \le \exp\left( -\frac{\epsilon(\frac{c^2-1}{c^2})^2H}{32c^2M}\right) \le \exp(-\epsilon\lambda^2/64)
\end{eqnarray*}
where the last inequality uses $H=c^2\lambda^2M$ and holds for $c\ge 2$. 
\end{proof}

\begin{proof}(of Claim~\ref{claim:sketchquad})
We want to show
\[\Pr [\exists t: Q_t>(8/\epsilon) M]\le e^{-\epsilon\lambda^2/64} .\] 
This follows similarly to the rest of the proof and we show only the main outline here. Similar to how we bound the vaue of $L_t$, define a random variable
\[Y_t=Q_t-\eta \sum_j \sum_{i\in C_{j,S}(t)}a_{ji}^2[x_t(i)^2-x_{t_{ij}}(i)^2] .\]
where $\eta\ge 4/\epsilon$ is a constant. It can be computed that
\[ \E_{t-1}[Y_t-Y_{t-1}]\le (\frac{2}{\epsilon}-\eta)\gamma^2\sum_j\sum_{i\in C_{j,S}(t)}a_{ji}^2\|u_i^t\|_2^2 \le -(\eta/2)\gamma^2\sum_j\sum_{i\in C_{j,S}(t)}a_{ji}^2\|u_i^t\|_2^2 \]
where the last inequality holds for $\eta\ge 4/\epsilon$, and
\begin{eqnarray*}
 &&\E_{t-1}[(Y_t-Y_{t-1})^2] \le O(n^8m^2\gamma^3)+4\gamma^2\eta^2 \|\sum_j \sum_{i\in C_{j,S}(t)}a_{ji}^2x_{t-1}(i)u_i^t\|_2^2 \\
 &\le &O(n^8m^2\gamma^3)+(8/\epsilon)\gamma^2\eta^2 \sum_i \left[\sum_{j:i\in C_{j,S}(t)}a_{ji}^2x_{t-1}(i)\right]^2\|u_i^t\|_2^2 \qquad\textrm{(using Lemma~\ref{lem:randvector})}\\
  &\le &O(n^8m^2\gamma^3)+(8/\epsilon)\gamma^2\eta^2 \sum_i \sum_{j:i\in C_{j,S}(t)}a_{ji}^2\|u_i^t\|_2^2 \qquad\textrm{(using $\|B\|_2\le 1$ and $|x_{t-1}(i)|\le 1$)} \\
  &=&O(n^8m^2\gamma^3)+(8/\epsilon)\gamma^2\eta^2 \sum_j \sum_{i\in C_{j,S}(t)}a_{ji}^2\|u_i^t\|_2^2
 \end{eqnarray*}
 We can again ignore the $O(n^8m^2\gamma^3)$ term as before. Applying Corollary~\ref{cor:freedmanbetter} now with $\alpha=\frac{\epsilon}{16\eta}$, and noting that 
\[ Y_t\ge Q(t)-\eta \sum_j\sum_{i\in C_{j,S}(t)} a_{ji}^2 \ge Q(t)-\eta M \]
we get
\begin{eqnarray*}
\Pr[\exists t: Q_t> (8/\epsilon)M] &\le& \Pr [\exists t: Y_t\ge (8/\epsilon -\eta) M ] \\
& \le & \exp\left( -\frac{\epsilon(8/\epsilon-\eta)M}{16\eta}\right) \le e^{-\epsilon\lambda^2/16}\le e^{-\epsilon\lambda^2/64}
\end{eqnarray*}
for $\eta=4/\epsilon$. 

\end{proof}

\section*{Appendix B: Discrepancy of Polytopes} 
\label{sec:appc}

\begin{proof}(of Theorem~\ref{thm:polytopes})
Recall the proof of Theorem~\ref{thm:tusnady}. We constructed a set of canonical boxes, and then every axis-parallel box was written as a disjoint union of canonical boxes along with $O_d(\log^{2d-2}n)$ points. This then combined with Theorem~\ref{thm:vecgenunified} gave a discrepancy of $O_d(\log^dn)$.

A similar proof also extends to bound the discrepancy of $POL(H)$. The only question is how to make a set of ``canonical shapes" such that every polytope in $POL(H)$ can be written as a disjoint union of these canonical shapes plus some small set of points. This was already done in \cite{Mat99} and we provide a sketch here.

Let's first look at $d=2$ and further simplify the problem by assuming $k=2$. Let $H=\{\ell_1,\ell_2\}$. If $\ell_1$ and $\ell_2$ are perpendicular, then this is the same as Tusnady's problem. So assume that the two lines are not perpendicular and we want to bound the discrepancy of all parallelepiped generated by these two lines. Without loss of generality, assume $\ell_1$ points along the $x$-axis.
Then instead of making canonical rectangles, we construct canonical parallelepipeds, aligned along the two lines in $H$. We first make canonical $x$-intervals exactly as we do for constructing canonical boxes. In each canonical $x$-interval, we now look at the intervals constructed by translates of the line $\ell_2$, whereas for canonical boxes we made such intervals horizontally. The rest of the proof remains the same.

For a general $k$, we just repeat the above construction for every pair of lines in $H$. This only affects the discrepancy by polynomial factors in $k$.  

The above approach can then be easily extended to $\mathbb{R}^d$ also. 
\end{proof}

\section*{Appendix C: Canonical Boxes in $\mathbb{R}^d$}
\label{sec:appbox}
\begin{proof}(of Lemma~\ref{lem:canonical})
We will refer to the coordinate axes by $x_1,\dots,x_d$.
Let $P$ be a set of $n$ points and let $\ell=16[d+\log_2n]^{d-1}$. Sort the points in $P$ in increasing value of $x_1$-coordinate and let this ordering be given by $p_1,\dots,p_{n}$. Define a canonical $x_1$-interval to be an interval containing the points $p_{k2^q+1},\dots,p_{(k+1)2^{q}}$ for $k\ge 0$ and $q\ge \log_2 \ell$. 
For each canonical $x_1$-interval, say $t_{1}$, do a similar construction along the $x_2$ axis for the points contained in $t_1$ to get canonical intervals along $x_2$ axis. Let the set of these canonical $x_2$ intervals made within $t_1$ be $T_2|t_1$. A canonical $(x_1,x_2)$-interval is defined as the intersection of a canonical $x_1$ interval, say $t_1$, with a canonical $x_2$-interval in $T_2|t_1$.
Proceed similarly by constructing canonical $x_i$ intervals within each canonical $(x_1,x_2,\dots,x_{i-1})$-interval. Finally a canonical $(x_1,\dots,x_d)$-interval will be called as a canonical box. Denote the set of all canonical boxes constructed in this manner for $P$ by $\mathcal{B}(P)$.

We first bound the size of $\mathcal{B}(P)$. Call a canonical $(x_1,\dots,x_k)$-interval to be of type $(i_1,\dots,i_k)$ if it was constructed by first making a canonical $x_1$-interval of size $2^{i_1}\ell$ and then inside that a canonical $(x_1,x_2)$-interval of size $2^{i_2}\ell$ and so on where $i_1\ge i_2\ge \dots\ge i_k\ge 0$  and $1\le k\le d$.
We see that along $x_1$-axis, there are $\frac{n'}{2^{i_1}\ell}$ canonical $x_1$-intervals of size $2^{i_1}\ell$. For each of these, there are $\frac{2^{i_1}}{2^{i_2}}$ canonical $(x_1,x_2)$-interval of size $2^{i_2}\ell$. Continuing in this way, the number of canonical boxes of type $(i_1,\dots,i_d)$ is at most
\[ \frac{n}{2^{i_1}\ell}.\frac{2^{i_1}}{2^{i_2}}\dots\frac{2^{i_{d-1}}}{2^{i_{d}}}=\frac{n}{2^{i_d}\ell}.\]
So in total, the number of canonical boxes in $\mathcal{B}(P)$ is at most
\[ \sum\limits_{\log_2(n/\ell)\ge i_1\ge i_2\ge \dots \ge i_d\ge 0}\frac{n}{2^{i_d}\ell}\le \sum_{i_d\ge 0}\frac{n}{2^{i_d}\ell}[d+\log_2(n/\ell)]^{d-1}\le \frac{2n}{\ell}[d+\log_2(n/\ell)]^{d-1} \le \frac{n}{8}.\]


Now it remains to be shown that any axis-parallel box $R$ can be written as a disjoint union of some canonical boxes from $\mathcal{B}(P)$ along with $O_d(\log^{2d-2} n)$ leftover points. 
To see this, first observe that we can decompose $R$ into $O(\log n)$ disjoint strips along $x_1$, where the first and the last strip contains at most $\ell$ points and the other strips are canonical $x_1$-intervals. Each strip which is a canonical $x_1$-interval can again be decomposed in $O(\log n)$ intervals along $x_2$ which, expect the first and last interval, correspond to canonical $(x_1,x_2)$-intervals. The first and last intervals contain at most $\ell$ points. Proceeding this way, $R$ can be written as a disjoint union of some canonical boxes along with some leftover points. The number of these leftover points can be bounded by
\[ O_d( 2\ell+ \log n[2\ell+\log n(\dots)]\] 
where the sum continues until $d$ appearances of $\ell$ and thus equals
\[  O_d(\log^{d-1}n+\log^{d}n+\dots+\log^{2d-2}n) =O_d(\log^{2d-2}n).\]
\end{proof}

\end{document}